\titleformat*{\section}{\Large\bfseries}
\titleformat*{\subsection}{\large\sc}
\titleformat*{\subsubsection}{\itshape}
\begin{document}

\title{{\bf Evolutionary stability implies asymptotic\\ stability under multiplicative weights }}

\author{{\large{ Ioannis Avramopoulos}}
}

\date{}

\maketitle

\thispagestyle{empty} 

\newtheorem{definition}{Definition}
\newtheorem{proposition}{Proposition}
\newtheorem{theorem}{Theorem}
\newtheorem*{theorem*}{Theorem}
\newtheorem{corollary}{Corollary}
\newtheorem{lemma}{Lemma}
\newtheorem{axiom}{Axiom}
\newtheorem{thesis}{Thesis}

\vspace*{-0.2truecm}

\begin{abstract}
We show that {\em evolutionarily stable states} in general (nonlinear) population games (which can be viewed as continuous vector fields constrained on a polytope) are asymptotically stable under a {\em multiplicative weights dynamic} (under appropriate choices of a parameter called the {\em learning rate} or {\em step size}, which we demonstrate to be crucial to achieve convergence, as otherwise even chaotic behavior is possible to manifest). Our result implies that evolutionary theories based on multiplicative weights are compatible (in principle, more general) with those based on the notion of evolutionary stability. However, our result further establishes multiplicative weights as a nonlinear programming primitive (on par with standard nonlinear programming methods) since various nonlinear optimization problems, such as finding Nash/Wardrop equilibria in nonatomic congestion games, which are well-known to be equipped with a convex potential function, and finding strict local maxima of quadratic programming problems, are special cases of the problem of computing evolutionarily stable states in nonlinear population games.
\end{abstract}

\section{Introduction}
\label{introduction}

The motivation for this paper originated in online learning theory. One of the concerns in learning theory is how algorithms fare in an {\em adversarial} environment. Performance is captured by an algorithm's {\em regret}, which, given a set of possible actions, is the difference between the cost of the algorithm and the cost of the best action in hindsight. Hedge \citep{FreundSchapire1, FreundSchapire2}, for example, a well-known online learning algorithm having found applications in a variety of practical settings, is a multiplicative-weights algorithm for dynamically allocating an amount to a set of options over a sequence of steps that generalizes the weighted-majority algorithm of~\cite{Littlestone}. Freund and Schapire show that by tuning a parameter of the algorithm, called the {\em learning rate}, the algorithm adapts well to an adversarial pattern of costs in that regret vanishes with the number of steps. What if though the pattern is non-adversarial? 

In this paper, we study how multiplicative weights algorithms (such as Hedge) fare in {\em evolutionary settings} that are general enough to even capture notions of {\em nonlinear optimization}. (Our results will establish that multiplicative weights are not only capable of computing ``evolutionarily stable states'' but may also serve as nonlinear programming primitive.) 

The archetypical evolutionary setting, originally introduced by Darwin, as that is mathematically formalized using (evolutionary) game theory, is quite general: Darwin's theory of evolution has had a profound impact not only in biology but also in the social sciences and economics in a broad range of modeling environments in these disciplines. There are two distinct methods by which evolutionary theories can be approached analytically. The first is by means of notions of {\em evolutionary stability} and the second is by means of notions of {\em evolutionary dynamics}. 

In this paper, we bring these notions closer to each other by considering the dynamic behavior of a (discrete-time) multiplicative weights dynamic, which is keenly relevant to the well-known in evolutionary theory (continuous-time) replicator dynamic, in the neighborhood of an ``evolutionarily stable state.'' Let us discuss these notions more formally, however.

\subsection{Evolutionary stability}

One of the benchmark modeling approaches of mathematical biologists and economists is the {\em evolutionarily stable strategy (ESS),} which was introduced by \cite{TheLogicOfAnimalConflict} to model population states stable under evolutionary forces. \cite{Evolution} (informally) defines an ESS to be ``strategy such that, if all the members of a population adopt it, then no mutant strategy could invade the population under the influence of natural selection.'' \cite{TheLogicOfAnimalConflict} consider a strategy to be an ESS if the genetic fitness it confers to population members is greater than the fitness conferred by any mutant strategy of sufficiently small population mass. 

Consider an infinite population of organisms that interact in pairs. Each organism is genetically programmed to act according to a certain behavior (or strategy) in the pairwise interactions. The outcome of the interaction affects the {\em biological fitness} of the organisms, as measured by the number of offspring they can bear. Let $\Sigma = \{1,\ldots,n\}$ be the, common to both players, set of possible {\em pure} behaviors (strategies) and suppose that a pair of organisms meet so that one uses strategy $i \in \Sigma$ and the other uses $j \in \Sigma$. Let $a_{ij}$ be the payoff (gain in fitness) to the first organism and let $b_{ij}$ be the payoff to the second because of the interaction. A common assumption is that the payoffs are independent of the player position so that $b_{ij} = a_{ji}$. Then we can represent the payoffs for all possible interactions as an $n \times n$ matrix $C = [c_{ij}]$. In general, organisms can use {\em mixed strategies}. Then the expected gain in fitness of an organism using mixed strategy $X$ against an organism using $Y$ is $X \cdot C Y$. With this background in mind, let us make the previous statement formal.

Consider an incumbent population of unit mass in which all organisms use behavior $X^*$ and suppose that a {\em mutation} emerges so that an $\epsilon$-fraction of the organisms switch to $Y$. Let $\mathcal{Y}_{\epsilon} = (1-\epsilon)X^*+ \epsilon Y$ be the average behavior in the mixed population. Then the expected fitness of the incumbents is $X^* \cdot C \mathcal{Y}_{\epsilon}$ and that of the mutants is $Y \cdot C \mathcal{Y}_{\epsilon}$. The incumbent strategy, $X^*$, is called an ESS if for all such $Y$ and for all sufficiently small $\epsilon$ the expected fitness of the incumbents is higher than the expected fitness of the mutants, i.e., $X \cdot C \mathcal{Y}_{\epsilon} > Y \cdot C \mathcal{Y}_{\epsilon}$. Intuitively, an ESS is a behavior that is ``successful'' against all possible mutations of sufficiently small size.

But there are alternative angles whereby evolutionary stability can be approached. The original definition, for example, stipulates that
\begin{align*}
X^* \cdot C X^* &\geq X \cdot C X^*, \mbox{ } \forall X \in \mathbb{X}, \mbox{ and }\\
X^* \cdot C X^* &= X \cdot C X^* \Rightarrow X^* \cdot C X > X \cdot C X, \mbox{ } \forall X \in \mathbb{X} \mbox{ such that } X \neq X^*,
\end{align*}
where $\mathbb{X}$ is the space of mixed strategies. To understand this definition, consider a bimorphic population almost all of whose members (say an $1 - \epsilon$ fraction) are $X^*$-strategists whereas the remaining members (the $\epsilon$ fraction) are $X$-strategists. Then the fitness of the incumbents ($X^*$-strategists) is $(1-\epsilon) X^* \cdot C X^* + \epsilon X^* \cdot C X$ whereas the fitness of the mutants ($X$-strategists) is $(1-\epsilon) X \cdot C X^* + \epsilon X \cdot C X$. Since $\epsilon$ is very small, requiring the fitness of the incumbents to be greater than that of the mutants gives the conditions in the definition. 

Another characterization of the ESS, one that is analytically convenient and that directly links the notion of evolutionary stability to more abstract notions in mathematics vis-\`a-vis the variational inequalities literature, has been furnished by \cite{HSS}. This latter characterization stipulates that a strategy $X^* \in \mathbb{X}$ is an ESS if it is {\em locally superior} (cf. \citep{Weibull}), namely, if there exists a neighborhood $O$ of $X^*$ such that $X^* \cdot CX > X \cdot C X$ for all $X \neq X^*$ in $O$. 

The previous definitions generalize to more general settings wherein payoffs are nonlinear called {\em population games} (cf. \cite{PopulationGames}). Population games are {\em vector fields} constrained on a simplex, or, more generally, on a Cartesian product of simplexes, each simplex corresponding to a population. We denote this space of possible states by $\mathbb{X}$ and the vector field by $F : \mathbb{X} \rightarrow \mathbb{R}^m$, $\mathbb{X} \subset \mathbb{R}^m$. In this setting, the payoff of $X \in \mathbb{X}$ against $Y \in \mathbb{X}$ is $X \cdot F(Y)$. $X^* \in \mathbb{X}$ is an ESS if there exists a neighborhood $O$ of $X^*$ such that for all $X \in O$ we have that $(X^* - X) \cdot F(X) > 0$.

A well-known class of population games, popularized in the computer science literature by the work of \cite{HowBadisSelfishRouting} on the price of anarchy \citep{PriceOfAnarchy}, is the class of {\em nonatomic congestion games}. \cite{StableGames} show that Wardrop equilibria, as the Nash equilibria in nonatomic congestion games are also called, are, under mild assumptions, globally evolutionarily stable in that the neighborhood $O$ (wherein Nash equilibria are superior according to the aforementioned definition) corresponds to the entire state space $\mathbb{X}$. It is worth noting that nonatomic congestion games admit a potential function, which under the same mild assumptions giving rise to the existence of a (global) ESS, is strictly convex.

The previous observation that the stationary point of a strictly convex potential function in nonatomic congestion games is an ESS suggests a close relationship between the notion of evolutionary stability and that of a strict local maximum in nonlinear optimization theory. Another example where such notions coincide is the class of {\em doubly symmetric bimatrix games,} that is symmetric bimatrix games $(C, C^T)$ wherein the payoff matrix $C$ is symmetric $(C = C^T)$. In this class of games, \cite{Hofbauer-Sigmund} show that the notion of an ESS coincides with the notion of a strict local maximum of the {\em quadratic programming problem}
\begin{align*}
\mbox{ maximize } &\frac{1}{2} X \cdot S X\\
\mbox{subject to }  & X \in \mathbb{X}
\end{align*}
where $S$ is the corresponding symmetric payoff matrix and $\mathbb{X}$ is the simplex of mixed strategies of the doubly symmetric bimatrix game $(S, S)$. \cite{Bomze} refers to programs of this form as {\em standard} quadratic programming problems; \cite{BLT} show that the general quadratic programming problem over a polytope can be cast as a standard quadratic program.

\subsection{Evolutionary dynamics}

Another benchmark modeling approach of evolutionary theorists is the {\em evolutionary dynamic}. \cite{PopulationGames} posits, for example, that evolutionary game theory is the study of ``the behavior of large populations of strategically interacting agents who occasionally reevaluate their choices in light of current payoff opportunities.'' An example of an evolutionary dynamic is the {\em replicator dynamic} of \cite{TaylorJonker}, proposed shortly after the introduction of the ESS. In a symmetric bimatrix game whose payoff matrix is $C$ the replicator dynamic is the differential equation
\begin{align}
\dot{X}_i = X_i \left( (CX)_i - X \cdot CX \right), i = 1, \ldots, n,\label{RD}
\end{align}
where $n$ is the number of pure strategies. Observe that the probability (population) mass of strategies that fare better than average grows whereas the probability mass of strategies that fare worse than average diminishes. The replicator dynamic generalizes in a natural fashion to population games with nonlinear payoffs. The notion of a dynamic is intrinsically related to the notion of an {\em algorithm}. For example, the replicator dynamic and its discrete counterpart is closely related to the {\em weighted majority algorithm} of \cite{Littlestone}. Recently, this algorithm was tied in a rather strong fashion to Darwin's theory of evolution: \cite{MWUA} view evolution as a ``coordination game'' (that is, a potential game also referred to in the literature as a {\em partnership game}) solved by a {\em multiplicative weights algorithm} (cf. \citep{AHK}).

\subsection{Our questions}

In this paper, we consider a (discrete time) {\em multiplicative weights dynamic} that is akin to Hedge \citep{FreundSchapire1, FreundSchapire2}, and a model of evolution proceeding in rounds. In the setting of symmetric bimatrix games (that is, linear single-population games), in each round, the multiplicative weights dynamic selects a, generally mixed, strategy in a symmetric bimatrix game whose payoff matrix is, say $C$, then the payoffs of individual pure strategies are determined according to $C$, and the dynamic reinforces the population (probability) mass on pure strategies that fared well in the previous round. Formally, this process is characterized by the map $MW: \mathbb{X} \rightarrow \mathbb{X}$ where $\mathbb{X}$ is the space of mixed strategies of the symmetric bimatrix game $(C, C^T)$ (a simplex),
\begin{align}
MW_i(X) = X(i) \cdot \frac{1 + \alpha E_i \cdot CX}{1 + \alpha X \cdot CX}, i =1, \ldots, n,\label{intro}
\end{align}
$E_i$ is a probability vector whose mass is concentrated in position $i$, $n$ is the number of pure strategies, and $\alpha > 0$ is a parameter that we refer to as the {\em learning rate} or the {\em step size}. We note that previous equation (typically known as the {\em discrete-time replicator dynamic}) appears in a preliminary form in \citep{TaylorJonker} and later studied by \cite{TP} and \cite{Nachbar}.

From a mathematical biology perspective, \eqref{intro} can be viewed as a model of {\em asexual reproduction} in a species where generations do not overlap (for example, see \citep{Evolution, Weibull} for a derivation of \eqref{intro} in this setting). This model is, thus, more elementary than the model of evolution by {\em sexual reproduction} stipulated by \cite{MWUA} (and further explored by \cite{MPP}) wherein the multiplicative update rule is applied in each player position, however, it is also, in some sense, more general,  since we assume a general payoff structure rather than the more specific payoff structure of a partnership (doubly symmetric) game.

Depending on context, the multiplicative weights dynamic \eqref{intro} can assume similar formalizations, admitting interpretations quite disparate from those in biology. In the interest of further motivating our discrete evolution rule (vis-\'a-vis its continuous, more extensively studied, counterpart), let us, for example, consider the model of nonatomic congestion games, which is also more commonly known in the computer science literature as {\em selfish routing}. Selfish routing games take place on a directed graph whose arcs correspond to congestible resources, for example, roads in a transportation network or communication links in a computer network, and whose vertices correspond to either points of intersection, in the former case, or routers, in the latter. 

Focusing on communication networks such as the Internet, an important question is how to distribute network traffic from sources to destinations along network paths. In this setting, multiplicative weights may serve as a primitive for a routing update decision rule in the interest of load balancing traffic to reduce communication delays. Such decisions are, by design, revised in discrete steps, and, therefore, discrete dynamics are more natural than their continuous counterparts. With this application in mind, the following question arises naturally: Will adaptive routing decisions converge to equilibrium to avoid the bad performance typically associated with unpredictable oscillatory behavior? The answer, not surprisingly, depends on how the step size is selected.

Focusing attention on the step size rule, our main question in this paper is: Given a population game $F : \mathbb{X} \rightarrow \mathbb{R}^m$, where $\mathbb{X}$ is, in general, a {\em simplotope} (that is, a polytope that is a Cartesian product of simplexes), and an ESS, say $X^*$, in this game, is $X^*$ asymptotically stable under a corresponding formalization of \eqref{intro} in this more general setting and under what circumstances?

There are two general distinct approaches by which such a question can be answered. The first approach involves a linearization of the corresponding dynamical system in the neighborhood of the equilibrium (ESS) and the study of the eigenvalues of the resulting Jacobian matrix; if all eigenvalues of this matrix are inside the unit disk, asymptotic stability follows whereas if at least one eigenvalue is outside the unit disk, the system is unstable. Demonstrating asymptotic stability by this method has the advantage that a geometric rate of convergence (near the equilibrium) is immediately implied, however, there are two significant disadvantages, namely, if at least one eigenvalue is on the unit circle, the test is inconclusive (provided the remaining eigenvalues are inside the unit circle), and, furthermore, this method is often analytically intractable especially as the formula describing the evolution of the dynamical system becomes more general. The second, more creative, approach involves finding an appropriate Lyapunov function, which has the advantage that is, in principle, easier to prove global asymptotic stability results using this approach.

\subsection{Our results}

Before outlining our contributions, we believe that a few observations are in order: The previous question has been answered rather comprehensively with respect to the continuous-time replicator dynamic: Every ESS in a general nonlinear continuous population game is asymptotically stable under the replicator dynamic (by an argument analogous to that employed by \cite{StableGames} in the setting of {\em stable games,} which are population games characterized by a monotonicity property that generalizes the notion of convexity from scalar functions to vector fields, using the relative entropy as a Lyapunov function), however, the linear stability of an ESS requires a certain regularity condition (namely, that the ESS is a {\em quasistrict equilibrium}; see \citep{PopulationGames}). 

Establishing linear stability without such a regularity condition would have the unexpected implication that {\bf P = NP}: In doubly symmetric bimatrix games, the ESS coincides with the notion of asymptotic stability under the replicator dynamic \cite{Hofbauer-Sigmund}. Since the linear stability of a rest point can be checked in polynomial time and since the problem of recognizing an ESS, even in doubly symmetric bimatrix games, is {\bf coNP}-complete \citep{Etessami, Nisan-ESS},\footnote{We note in passing that the problem of recognizing an ESS in (general) symmetric bimatrix games is well-known to be {\bf coNP}-complete, however, that the problem remains {\bf coNP}-complete even in doubly symmetric bimatrix games, can be inferred by carefully considering the proof in \cite{Nisan-ESS}'s derivation.} the aforementioned unexpected result would follow.

In this paper, we show that, unless the step size is carefully chosen, evolution rules using multiplicative weights may result in chaotic behavior even in simple population games. Our main result is that an ESS in a general continuous population game is asymptotically stable under the discrete-time replicator dynamic (for appropriate choices of the step size rule) using a Lyapunov function argument based on the relative entropy. Our proof is constructive in that we provide an explicit rule that achieves this objective, and further note that the analysis is considerably more involved that the continuous-time case as we cannot rely on the analytically convenient formula for the derivative of the relative entropy but instead invoke the {\em Kantorovich inequality}. 

The Kantorovich inequality is, to some extent, a well-known tool in the analysis of nonlinear optimization problems, and it was, in fact, motivated by analytical efforts related to linear programming (it is also used in the convergence analysis of gradient-based nonlinear programming algorithms in the interest of bounding the convergence rate). However, population games generalize nonlinear programming problems to the setting of {\em vector fields}. Note that any differentiable nonlinear optimization problem can be represented as a vector field by taking the gradient of the objective function, however, vector fields capture more general modeling environments. 

The Kantorovich inequality in itself suffices to prove the asymptotic stability of an ESS assuming we may use the ESS itself in the rule for selecting a step size from iteration to iteration. Although we motivate environments where such an assumption is reasonable, our analysis would perhaps not the complete unless a step size rule that is independent of the target state were provided. To construct such a rule we prove two further lemmas, one showing the (remarkable, we believe) property of discrete-time replicator dynamic that, by applying this dynamic to any population state $X$ that is not a fixed point, we obtain another state $\hat{X}$ that is a better response to $X$ than $X$ is to itself for all values of the step size, and the other lemma showing that, assuming a small enough step size, the target ESS $X^*$ is a better response to $X$ than $\hat{X}$ is to $X$ in a neighborhood of $X^*$.

A further contribution of this paper is in constructing a rigorous case that multiplicative weights may well serve as an optimization primitive: Although the possible application of multiplicative weights in nonlinear optimization has been discussed in the literature \citep{AHK}, to the extent of our knowledge, this is the first paper that substantiates this claim with a rigorous analysis. In this vein, we also discuss analogues of the notion of an ESS to optimization theory leveraging work related to the Minty variational inequality, a rather general framework in pure mathematics, whose formalization strongly relates though to the formalization of the ESS in population games.

Our results not only demonstrate the credibility of multiplicative weights updates as a modeling method in evolutionary settings, but owing to the aforementioned observations that certain classes of population games (such as congestion games and doubly symmetric games) admit potential functions, demonstrate that such updates may well serve as a {\em nonlinear programming primitive} possible on par with gradient ascent/descent and Newton methods, for example.

The importance of the main analytical approach considered in this paper based on Lyapunov functions is argued by a linear stability analysis we carry out for a special type of population game in the setting of nonatomic congestion games, where it is demonstrated that stability analysis based on linearization quickly becomes analytically intractable in the (discrete) setting that we consider. The main prior result we use in our (far from trivial) proof in this special setting (which draws on methods from matrix analysis) is the {\em Perron-Frobenious theorem}.

We note that \cite{Weissing} mentions that an interior ESS is globally asymptotically stable under the discrete-time replicator dynamic for a small enough step size, but he never published a proof of his assertion.\footnote{We would like to thank an anonymous reviewer for pointing out this fact.}
Our results, as discussed in detail in the sequel, imply (interior) global asymptotic stability of an interior ESS under this dynamic, however, we were unable to show that such an implication holds if the step size remains constant from iteration to iteration.

Finally, toward the end of this paper, we provide an intuitive interpretation of the notion of evolutionary stability (that the aforementioned Lyapunov stability results strongly tie with multiplicative weights dynamics) in the setting of Internet architecture, in particular, with respect to the evolutionary properties of {\em routing protocols} (although extending our analysis to more general settings is certainly imaginable). Since, as discussed earlier, evolutionary stability is typically motivated in the setting of biological evolution, we believe that in this way we are making a case that notions of evolutionary stability, and, therefore, also of multiplicative weights dynamics, are bound to play a significant role in the understanding of architectural phenomena in the Internet in the interest of facilitating better, stable, and more robust network design methods.

\subsection{Other related work}

The class of population games known as {\em nonatomic congestion games,} also sometimes referred to as a {\em selfish routing games}, as noted earlier, have received considerable attention in the theoretical computer science literature. These games have the distinctive property that, as discussed in more detail later, they are equipped with a convex potential function, therefore, corresponding to a very restrictive setting of the general problem we consider in this paper (although some of our analysis, strengthening our more general arguments, takes place in this setting).

There are two distinct approaches that have been taken in the literature in the analysis of dynamic evolution in selfish routing games, namely, to either consider continuous-time evolution rules or directly analyze their discrete counterparts. \cite{SelfishRoutingEvolution} show, in one of the earliest works in this area, that Nash equilibria of selfish routing games (also sometimes referred to as Wardrop equilibria) are asymptotically stable under the continuous-time replicator dynamic, which is keenly related, as discussed earlier, to multiplicative weights updates.

However, routing can be more appropriately modeled after discrete rules of evolution as routing updates naturally take place in discrete steps rather than in a continuous fashion. Although deriving differential equations from parametric discrete-time evolution rules (by taking the derivative and letting the parameter approach zero) and studying instead the differential equation is a method often employed in the literature (for example, such an approach is taken in \citep{Piliouras} in the study of multiplicative weights in atomic congestion games), there are no general results that directly tie the behavior of the continuous system to the discrete from which it is derived.

The latter issue is related to the considerable amount of work in evolutionary game theory devoted to the study of continuous-time evolution rules (for example, based on the notion of {\em revision protocols}, that is, simple rules that nonatomic agents in a population can employ to revise their strategies according the payoffs that the collective outcome of strategic decision provide). Since a variety of phenomena in nature as well as in computer systems (such as the Internet) can be more appropriately modeled after discrete evolution rules (due to the discrete nature of, for example, computer systems), we believe that it is worth considering revisiting the work having been carried out in the evolutionary game theory literature (using the algorithmic lens) studying instead discrete counterparts of continuous evolution rules, as we do in this paper.

Discrete evolution rules related to multiplicative weights (and online learning primitives) have also been considered, for example, by \cite{FastConvergence-Journal}, however, such evolution rules are customized by design to achieve good performance in selfish routing. Our analysis demonstrates that multiplicative weights may well serve as a primitive of adaptation in a wide variety of environments (an argument that intuitively corresponds with the success that multiplicative weights-based boosting algorithms in machine learning theory, such as AdaBoost, have met practical applications).

\subsection{Overview of the rest of the paper}

In Section \ref{preliminaries}, we start off with preliminaries, in particular, we discuss symmetric bimatrix games, the framework on which much of the theory of evolutionary stability has been based, but we also present generalizations of evolutionary stability concepts in the general setting of population games (closely related to the abstract variational inequality formalization). We further introduce the general form of the multiplicative weights dynamic as well as the special case of the discrete-time replicator dynamic (which is obtained as the step size approaches zero). In Section \ref{asymptotic_stability_MW} we prove that an ESS is asymptotically stable under the discrete-time replicator dynamic in symmetric bimatrix games. This proof is generalized in Section \ref{asymptotic_stability_general_case} in the setting of general (nonlinear) population games. Section \ref{addendum_optimization} considers the question of how evolutionary stability, in its general definition in nonlinear population games, relates to the notion of the strict local optimum in nonlinear programming and provides an interpretation of the discrete-time replicator dynamic as a form of gradient. Section \ref{selfish_routing} considers a particular class of population games, namely, the well-known in the computer science literature class of selfish routing games, where the possibility of chaotic behavior is demonstrated, and a geometric rate of convergence is proven in a simple special case of the selfish routing problem. In the same section, we provide ample motivation for the notion of evolutionary stability in a setting motivated by routing in the Internet and the notion of ``incremental deployability,'' a term that originated in the networking literature. Finally, we conclude the paper in Section \ref{Conclusion}.

\section{Preliminaries}
\label{preliminaries}

The ESS, as it was originally defined \citep{TheLogicOfAnimalConflict, Evolution}, is a refinement of the symmetric Nash equilibrium in symmetric bimatrix games. Let us, therefore, start off by introducing game-theoretic concepts in the setting of bimatrix games.

\subsection{Nash equilibria in bimatrix games}

A $2$-player (bimatrix) game in normal form is specified by a pair of $n \times m$ matrices $A$ and $B$, the former corresponding to the {\em row player} and the latter to the {\em column player}. If $B = A^T$, where $A^T$ is the transpose of $A$, the game is called {\em symmetric}. A {\em mixed strategy} for the row player is a probability vector $P \in \mathbb{R}^n$ and a mixed strategy for the column player is a probability vector $Q \in \mathbb{R}^m$. The {\em payoff} to the row player of $P$ against $Q$ is $P \cdot A Q$ and that to the column player is $P \cdot B Q$. Let us denote the space of probability vectors for the row player by $\mathbb{P}$ and the corresponding space for the column player by $\mathbb{Q}$. A Nash equilibrium of a $2$-player game $(A, B)$ is a pair of mixed strategies $P^*$ and $Q^*$ such that all unilateral deviations from these strategies are not profitable, that is, for all $P \in \mathbb{P}$ and $Q \in \mathbb{Q}$, we simultaneously have that
\begin{align*}
P^* \cdot AQ^* &\geq P \cdot AQ^*\\
P^* \cdot BQ^* &\geq P^* \cdot BQ.
\end{align*}
Observe that if the bimatrix game is symmetric, the second inequality is redundant. Let $(C, C^T)$ be a symmetric bimatrix game.We call a Nash equilibrium strategy, say $P^*$, {\em symmetric,} if $(P^*, P^*)$ is an equilibrium, in which case we call the equilibrium $(P^*, P^*)$ a symmetric equilibrium.

\subsection{Evolutionary stability}

We are now ready to give the formal definition of an ESS (equivalent to the original definition due to \cite{TheLogicOfAnimalConflict} and \cite{Evolution} as shown by \cite{HSS}).  

\begin{definition}
Let $(C, C^T)$ be a symmetric bimatrix game. We say that $X^* \in \mathbb{X}$ is an ESS, if 
\begin{align*}
\exists O \subseteq \mathbb{X} \mbox{ } \forall X \in O : X^* \cdot CX > X \cdot CX.
\end{align*}
Here $\mathbb{X}$ is the space of mixed strategies of $(C, C^T)$ (a simplex in $\mathbb{R}^n$ where $n$ is the number of pure strategies) and $O$ is a neighborhood of $X^*$. 
\end{definition}

The previous definition generalizes to (nonlinear) population games: Population games are {\em vector fields} constrained on a polytope (which is also sometimes referred to as a {\em simplotope} as it is the Cartesian product of simplexes). Formally, a population game is a pair  $(\mathbb{X}, F)$. $\mathbb{X}$, the game's {\em state space}, has product form, i.e., $\mathbb{X} = \mathbb{X}_1 \times \cdots \times \mathbb{X}_n$, where the $\mathbb{X}_i$'s are simplexes and $i=1,\ldots,n$ refers to a {\em player position} (or {\em population}). To each population $i$ corresponds a set $S_i$ of $m_i$ {\em pure strategies} and a {\em population mass} $\omega_i > 0$. The {\em strategy space} $\mathbb{X}_i$ of population $i$ has the form
\begin{align*}
\mathbb{X}_i = \left\{ X \in \mathbb{R}^{m_i}_+  \bigg| \sum_{j \in S_i} X_i^j  = \omega_i \right\}.
\end{align*}
We refer to the elements of $\mathbb{X}$ as {\em states}. Each state $X \in \mathbb{X}$ can be decomposed into a vector strategies, i.e., $X = (X_1,\ldots,X_n)^T$. Let $m = \sum_i m_i$. $F: \mathbb{X} \rightarrow \mathbb{R}^m$, the game's {\em payoff function}, maps $\mathbb{X}$, the game's state space, to vectors of {\em payoffs} where each position in the vector corresponds to a pure strategy. It is typically assumed that $F$ is continuous. We adopt the following convention: In single-population games, $X(i)$ will refer to the population mass on pure strategy $i$, whereas, in multi-population games, $X_i^j$ will refer to the population mass on strategy $j$ of population $i$. 

In a general population game, the Nash equilibrium and the ESS admit the following definitions (for example, we refer the reader to \citep{PopulationGames}).

\begin{definition}
Let $(\mathbb{X}, F)$ be a population game. We say that $X^* \in \mathbb{X}$ is a Nash equilibrium, if 
\begin{align*}
\forall X \in \mathbb{X} : (X^* - X) \cdot F(X^*) \geq 0.
\end{align*} 
We say that $X^* \in \mathbb{X}$ is an ESS, if 
\begin{align*}
\exists O \subseteq \mathbb{X} \mbox{ } \forall X \in O : (X^* - X) \cdot F(X) > 0.
\end{align*}
Here $O$ is a neighborhood of $X^*$. $X^*$ is called a global ESS (GESS) if $O$ coincides with $\mathbb{X}$.
\end{definition}

The definition of the Nash equilibrium in symmetric bimatrix games can be easily shown to be a special case of the aforementioned more general definition. The latter definition bears close kinship to (in particular, it is a special case of) the general notion of critical (equilibrium) points of a {\em variational inequality}. Note that an ESS is necessarily a Nash equilibrium (in fact, isolated). 

Another word of note is that in the setting of single-population games, the acronym ESS refers, as mentioned earlier, to an evolutionarily stable {\em strategy}. The term strategy is used in the sense of a mixed strategy (that is, a probability distribution over the set of pure strategies). In the multi-population setting, however, the acronym ESS means evolutionarily stable {\em state,} which corresponds to a combination of (mixed, in general) strategies, one for each population. Keeping this in mind, no ambiguity results by referring to both situations with the term ESS (as is common in literature).

We make the following notational conventions: There are two equivalent perspectives by which to view population games, namely, one where agents in the population attempt to maximize payoffs and one where they attempt to minimize costs. We find it convenient to assume both perspectives in different parts of this paper. Whenever we assume that agents have the incentive to maximize we will use $F(\cdot)$ to denote the population game's vector field and whenever to minimize we will use $c(\cdot)$. Furthermore, states will be denoted by upper case letters in the maximization perspective (e.g., $X$) and lower case letters in the minimization perspective (e.g., $x$). Note that the aforementioned definitions of Nash equilibrium and evolutionary stability can be formalized in the minimization perspective by changing the notation (from $F(\cdot)$ to $c(\cdot)$) and flipping the inequalities.

The previous (abstract) definitions are admittedly rather dry, especially from a computer science perspective. Although mathematical analysis typically benefits from a high level of abstraction, as it implies a broader application span of the results it produces, we note that, nevertheless, ample intuition on how these notions can be cast to problems of practical interest in networked systems design is provided, toward the end of the paper, in particular, in Section \ref{intuition}.

\subsection{Multiplicative weights}

The standard game-theoretic environment wherein multiplicative weights methods are applied concerns that of playing a repeated game against an opponent (cf. \citep{FreundSchapire2}). In our setting, we assume that the environment is not controlled by an opponent but rather assumes the structure of population games (whether in the general or more particular settings).

\subsubsection{The general formalization}

Considering, Hedge, one of the most general instantiations of multiplicative weights algorithms, the update rule in the general setting of population games is given by the map $X \mapsto T(X)$, in which $X \in \mathbb{X}$ is the current state of the population game $(\mathbb{X}, F)$, and $T(X)$ is the state of the population game after the map is applied, where $T(X) = (T_1(X), \ldots, T_n(X))^T$, and
\begin{align}
T_i^j(X) = X_i^j \frac{\exp\{ \alpha F_i^j(X) \}}{(1/\omega_i) \sum_{j \in S_i} X_i^j \exp\{ \alpha F_i^j(X) \}}, \mbox{ where } i = 1, \ldots, n \mbox{ and } j \in S_i.\label{Hedge}
\end{align}
Observe, by taking the Taylor expansion of the exponential in the previous expression, that if the parameter $\alpha$ (called the learning rate or step size) is sufficiently small such that second order terms in the Taylor expansion can be ignored, and assuming further a single-population game with linear payoffs (that is, assuming $F(X) = CX$) and unit mass ($\omega = 1$), then the previous expression reduces to the discrete time replicator dynamic \eqref{intro}. Until further notice we assume such a simplified version of the multiplicative weights dynamic, which we will generalize in the sequel.

\subsubsection{Some notation}

Some of our analysis is carried out in a slightly more general form than what is actually necessary to obtain our main results, to facilitate applying the respective results to the general setting. Toward carrying out this analysis, we are going to use the following notation:
\begin{align*}
MW_i(\alpha, P, Q) \doteq P(i) \cdot \frac{1 + \alpha E_i \cdot CQ}{1 + \alpha P \cdot CQ}, i =1, \ldots, n,
\end{align*}
\begin{align*}
MW(\alpha, P, Q) = (MW_1(\alpha, P, Q), \ldots, MW_n(\alpha, P, Q))^T,
\end{align*}
and
\begin{align*}
G_i(P, Q) \doteq \frac{1 + \alpha E_i \cdot CQ}{1 + \alpha P \cdot CQ}, i =1, \ldots, n.
\end{align*}
The matrix $C$ is an $n \times n$ matrix that corresponds to the payoff matrix of a general symmetric bimatrix game. We call $\alpha > 0$ the {\em learning rate} or the {\em step size} of the multiplicative update rule.

\subsubsection{Properties of multiplicative weights dynamics}

Observe that, for all $\alpha \geq 0$,
\begin{align*}
\sum_{i=1}^n P(i) \cdot \frac{1 + \alpha (CQ)_i}{1 + \alpha P \cdot CQ} = 1.
\end{align*}
This implies that the multiplicative weights evolution rule has a property usually referred to as {\em forward invariance}. We note, without proof for the moment, that symmetric Nash equilibria (and, therefore, also evolutionarily stable strategies) are necessarily rest points of the multiplicative weights map, but that the rest points of this map are not necessarily symmetric Nash equilibria, but rather Nash equilibria of ``restricted'' instances of the game (wherein some pure strategies are absent). (This assertion is proven as Theorem \ref{fixed_points} in the general multi-population setting).

\subsubsection{Relative entropy}

Let us finally note that, in our analysis of the multiplicative weights dynamic, we are going to rely on the relative entropy function between probability distributions (also called a Kullback-Leibler divergence). The relative entropy between the $n \times 1$ probability vectors $P$ and $Q$ is given by 
\begin{align*}
RE(P, Q) = \sum_{i=1}^n P(i) \ln \left( \frac{P(i)}{Q(i)} \right).
\end{align*}
Throughout this paper we assume that $0/0 \doteq 1$ and $0 \cdot \infty \doteq 0$. Under these notational conventions, $P$ and $Q$ do not need to have full support for the definition to be valid; it is simply sufficient that the support of $P$ is a subset of the support of $Q$. We further note the standard well-known property of the relative entropy that $RE(P, Q) \geq 0$ and $RE(P, Q) = 0$ if and only if $P = Q$.

\section{Symmetric bimatrix games}
\label{asymptotic_stability_MW}

We start off by briefly introducing some relevant concepts and well-known results from the theory of discrete-time dynamical systems (cf. \citep{LaSalle}).

\subsection{Preliminaries}

Consider the map $T : \mathbb{R}^n \rightarrow \mathbb{R}^n$ where $T$ is continuous and let $X^*$ be a fixed point of $T$, that is, $T(X^*) = X^*$. We say that $X^*$ is {\em stable} under $T$ if
\begin{align*}
\forall O \exists \hat{O} \forall X \in \hat{O} \forall k \geq 0 : T^k(X) \in O
\end{align*}
where $O$ and $\hat{O}$ are neighborhoods of $X^*$. If $X^*$ is not stable, we say that it is {\em unstable}. Let $V : \mathbb{R}^n \rightarrow \mathbb{R}$, and define
\begin{align*}
\dot{V}(X) \doteq V(T(X)) - V(X).
\end{align*}
Let $\mathbb{X}$ be any set in $\mathbb{R}^n$. We say that $V$ is Lyapunov function for $T$ on $\mathbb{X}$ if $V$ is continuous on $\mathbb{R}^n$ and $\dot{V}(X) \leq 0$ for all $X \in \mathbb{X}$. We say that the function $V : \mathbb{R}^n \rightarrow \mathbb{R}$ is {\em positive definite} with respect to $X^*$ if $V(X^*) = 0$ and there is a neighborhood $O$ of $X^*$ such that, for all $X \neq X^*$ in $O$, we have that $V(X) > 0$. The following proposition is known as {\em Lyapunov's stability theorem}.

\begin{proposition}
If $V$ is Lyapunov function for $T$ on some neighborhood of $X^*$ and $V$ is positive definite with respect to $X^*$, then $X^*$ is a stable fixed point of $T$.
\end{proposition}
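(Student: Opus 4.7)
My plan is to construct an open positively invariant neighborhood of $X^*$ contained in $O$, using a sublevel set of the Lyapunov function. Given any neighborhood $O$ of $X^*$, I first shrink $O$, if necessary, so that it lies inside the intersection of the Lyapunov neighborhood and the positive-definite neighborhood guaranteed by the hypotheses; this is harmless because the definition of stability only requires the existence of some $\hat{O}$ for the shrunken $O$. I then choose $r > 0$ small enough that $\overline{B_r(X^*)} \subseteq O$. The sphere $\partial B_r(X^*)$ is compact, disjoint from $X^*$, and contained in the positive-definite neighborhood, so $V$ attains a minimum $m > 0$ on it.

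Next, consider the open set $U = \{X \in B_r(X^*) : V(X) < m\}$, which contains $X^*$ since $V(X^*) = 0 < m$, and define $\hat{O}$ to be the connected component of $U$ containing $X^*$. Since $\hat{O}$ is an open connected subset of $\mathbb{R}^n$, it is path-connected. I claim that $T(\hat{O}) \subseteq \hat{O}$, which, by induction on $k$, immediately implies $T^k(X) \in \hat{O} \subseteq O$ for every $X \in \hat{O}$ and every $k \geq 0$, giving the desired stability.

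The main step is proving forward invariance $T(\hat{O}) \subseteq \hat{O}$, and this is where I expect to have to be careful, because in discrete time the trajectory $X, T(X), T^2(X),\ldots$ is not a continuous curve and so cannot be prevented from ``jumping'' over $\partial B_r(X^*)$ using the Lyapunov inequality alone. My plan is to circumvent this by applying $T$ along a continuous path in $\hat{O}$ rather than along the orbit itself. Fix $X \in \hat{O}$ and choose a continuous path $\gamma : [0,1] \to \hat{O}$ with $\gamma(0) = X^*$ and $\gamma(1) = X$. The composition $T \circ \gamma$ is a continuous curve from $T(X^*) = X^*$ to $T(X)$; because $\gamma(t) \in \hat{O}$ lies in the Lyapunov neighborhood, $V(T(\gamma(t))) \leq V(\gamma(t)) < m$ for every $t$. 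Assuming toward contradiction that $T(X) \notin B_r(X^*)$, continuity of $T \circ \gamma$ and the intermediate value theorem applied to $t \mapsto \|T(\gamma(t)) - X^*\|$ would produce some $t^\ast$ with $T(\gamma(t^\ast)) \in \partial B_r(X^*)$, forcing $V(T(\gamma(t^\ast))) \geq m$, a contradiction. Hence $T(X) \in B_r(X^*)$, the entire image $T \circ \gamma$ actually stays inside $U$, and it is therefore a continuous path in $U$ from $X^*$ to $T(X)$, placing $T(X)$ in the same connected component $\hat{O}$.

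The main obstacle is the one just indicated: the discreteness of the dynamics prevents a direct ``trace the orbit across the sphere'' argument as in the continuous-time setting; the fix is to transfer the topological/IVT argument from the (discrete) orbit to the continuous curve $T \circ \gamma$ in state space, using both the continuity of $T$ and the connectedness of $\hat{O}$. Once invariance is in hand, stability is immediate from $\hat{O} \subseteq B_r(X^*) \subseteq O$.
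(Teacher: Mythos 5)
Your proof is correct, and in fact the paper offers no proof at all for this proposition: it is quoted as a known result (LaSalle's discrete-time Lyapunov stability theorem), so there is nothing in the text to compare against line by line. Your argument supplies the missing details, and it does so in a way that correctly confronts the one genuine subtlety of the discrete-time setting: since $V$ is only assumed positive definite and non-increasing on a \emph{neighborhood} of $X^*$, the naive ``first exit time'' adaptation of the continuous-time proof is gapped, because an iterate could in principle jump over the sphere $\partial B_r(X^*)$ into a region where $V < m$ but where positive definiteness no longer constrains anything. Your fix --- pushing a continuous path $\gamma$ from $X^*$ to $X$ through $T$ and applying the intermediate value theorem to $t \mapsto \lVert T(\gamma(t)) - X^*\rVert$ --- is exactly the right device; equivalently, one can say that $T(\hat{O})$ is a connected set containing $X^*$ on which $V < m$, hence it is disjoint from $\partial B_r(X^*) \subseteq \{V \geq m\}$ and must therefore lie entirely in $B_r(X^*)$, indeed in the component $\hat{O}$ of $U$. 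The only step you state slightly faster than you prove it is the claim that the \emph{entire} image $T \circ \gamma$ stays in $B_r(X^*)$ (you argue it only for the endpoint $T(X)$), but the identical IVT argument applied at each $t$ closes this, so the forward invariance $T(\hat{O}) \subseteq \hat{O}$ and hence stability follow as you say.
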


$X^*$ is {\em asymptotically stable} under $T$ if it is stable and there exists a neighborhood $O$ of $X^*$ such that $X_0 \in O \Rightarrow \lim_{k \rightarrow \infty} T^k(X_0) = X^*$. 

\begin{proposition}
\label{asymptotic_stability}
If $V$ and $-\dot{V}$ are positive definite with respect to $X^*$, $X^*$ is asymptotically stable.
\end{proposition}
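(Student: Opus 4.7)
The plan is to bootstrap off the Lyapunov stability theorem (Proposition~1) and then upgrade mere stability to asymptotic stability by a standard monotone--subsequence argument. Since $-\dot V$ is positive definite with respect to $X^*$, there is a neighborhood $O_0$ of $X^*$ on which $\dot V(X)\le 0$, with equality only at $X^*$. Combined with positive definiteness of $V$, the hypotheses of Proposition~1 are met, so $X^*$ is stable. In particular, for every open neighborhood $O$ of $X^*$ there is a neighborhood $\hat O\subseteq O_0$ such that any initial condition $X_0\in\hat O$ has forward orbit $\{T^k(X_0)\}_{k\ge 0}\subset O$.

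Next I would fix a compact neighborhood $\bar O\subseteq O_0$ on which $V$ and $-\dot V$ are both positive definite, and choose $\hat O$ (by stability) so that the entire orbit $\{X_k\}_{k\ge 0}$, where $X_k\doteq T^k(X_0)$, remains in $\bar O$. The sequence $V(X_k)$ is nonincreasing because $\dot V\le 0$ on $\bar O$, and it is bounded below by $0$, so it converges to some limit $L\ge 0$. The goal is to show $L=0$, which, combined with the fact that $V$ vanishes only at $X^*$ and is continuous, will force $X_k\to X^*$ along any convergent subsequence, and hence (by compactness of $\bar O$ plus uniqueness of the limit) $X_k\to X^*$.

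To show $L=0$, I would argue by contradiction. Suppose $L>0$. Since $\bar O$ is compact, the sequence $\{X_k\}$ has a subsequence $X_{k_j}\to Y^*\in\bar O$. By continuity of $V$, $V(Y^*)=L>0$, so $Y^*\neq X^*$. By continuity of $T$, $X_{k_j+1}=T(X_{k_j})\to T(Y^*)$, and by continuity of $V$ again, $V(T(Y^*))=\lim_j V(X_{k_j+1})=L$. Hence $\dot V(Y^*)=V(T(Y^*))-V(Y^*)=0$, which contradicts the positive definiteness of $-\dot V$ at the point $Y^*\neq X^*$ of the neighborhood $\bar O$. Therefore $L=0$, and continuity plus positive definiteness of $V$ then yield $X_k\to X^*$.

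The only technical subtlety I expect is keeping the orbit trapped inside a set where both positive-definiteness hypotheses apply; this is handled by invoking stability \emph{before} running the limit argument, choosing $\hat O$ so small that $T^k(X_0)\in\bar O$ for all $k$. Once the orbit is confined to a compact set on which $-\dot V$ is strictly positive away from $X^*$, the subsequence argument closes the proof. No further ingredients beyond continuity of $T$ and $V$ and the two positive-definiteness assumptions are needed.
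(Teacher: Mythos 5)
Your argument is correct and complete: it is the standard proof of the discrete-time Lyapunov asymptotic stability theorem --- stability via the preceding proposition (Lyapunov's stability theorem), confinement of the orbit to a compact neighborhood on which both positive-definiteness hypotheses hold, monotone convergence of $V(X_k)$ to a limit $L$, and a compactness/subsequence argument using continuity of $T$ and $V$ to rule out $L>0$. The paper states this proposition without proof, citing it as a classical result from LaSalle's theory of discrete dynamical systems, so your write-up simply supplies the omitted standard argument and there is nothing to reconcile between the two.
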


\subsection{Some lemmas}

Before proving our main result we need the following lemmas.

\begin{lemma}
\label{simplelemma}
Let $P^*, P, Q \in \mathbb{P}$, where $\mathbb{P}$ is the space of mixed strategies of a symmetric bimatrix game $(C, C^T)$, be such that $P^*$ and $Q$ are arbitrary, $\mbox{supp}(P^*) \subseteq \mbox{supp}(P)$, and let $\hat{P} = MW_\alpha(P, Q)$. Then
\begin{align*}
RE(P^*, \hat{P}) - RE(P^*, P) \leq \sum_{i=1}^n P^*(i) \frac{1}{G_i(P, Q)} -1.
\end{align*}
\end{lemma}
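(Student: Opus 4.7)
The plan is to directly compute the difference $RE(P^*,\hat P)-RE(P^*,P)$ by unfolding the definitions, use the multiplicative structure of $\hat P$ to collapse it into a single logarithmic sum, and then apply an elementary scalar inequality to bound the logarithm by its argument.

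First I would observe that, by definition, $\hat P(i)=P(i)\,G_i(P,Q)$ for every $i$. The support hypothesis $\mathrm{supp}(P^*)\subseteq\mathrm{supp}(P)$, combined with the fact that $G_i(P,Q)>0$ (so $\mathrm{supp}(\hat P)=\mathrm{supp}(P)$), makes both relative entropies well-defined under the conventions $0/0=1$ and $0\cdot\infty=0$. Expanding,
\begin{align*}
RE(P^*,\hat P)-RE(P^*,P)
&=\sum_{i=1}^n P^*(i)\ln\frac{P^*(i)}{\hat P(i)}-\sum_{i=1}^n P^*(i)\ln\frac{P^*(i)}{P(i)}\\
&=\sum_{i=1}^n P^*(i)\ln\frac{P(i)}{\hat P(i)}
=-\sum_{i=1}^n P^*(i)\ln G_i(P,Q).
\end{align*}
Here the cancellation of the $\ln P^*(i)$ terms is valid because whenever $P^*(i)=0$ the corresponding summand vanishes by convention, while whenever $P^*(i)>0$ the support hypothesis guarantees $P(i)>0$ and hence $\hat P(i)>0$.

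Next I would invoke the elementary inequality $-\ln x\le \tfrac1x-1$ for all $x>0$, which is just the standard bound $\ln y\le y-1$ applied to $y=1/x$. Applying it pointwise with $x=G_i(P,Q)>0$ and taking the $P^*$-weighted average yields
\begin{align*}
-\sum_{i=1}^n P^*(i)\ln G_i(P,Q)
\le \sum_{i=1}^n P^*(i)\!\left(\frac{1}{G_i(P,Q)}-1\right)
=\sum_{i=1}^n P^*(i)\,\frac{1}{G_i(P,Q)}-1,
\end{align*}
using $\sum_i P^*(i)=1$ for the last equality. Combining the two displays gives exactly the claimed bound.

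There is no real obstacle here: the argument is a short calculation whose only subtlety is the care needed at coordinates where $P^*(i)=0$ or where $G_i(P,Q)$ might be small, both of which are handled by the support assumption and by the positivity of $G_i$ (ensured as long as $\alpha>0$ is chosen so that $1+\alpha P\cdot CQ>0$ and $1+\alpha E_i\cdot CQ>0$, which is implicit in the setup of the multiplicative weights map). The whole point of the lemma is simply to convert the intractable logarithmic expression for $\dot{RE}(P^*,\cdot)$ into the rational expression $\sum_i P^*(i)/G_i(P,Q)-1$, which is the quantity the Kantorovich-inequality machinery in the sequel is designed to control.
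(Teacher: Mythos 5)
Your proof is correct and follows essentially the same route as the paper's: expand the two relative entropies, cancel the $\ln P^*(i)$ terms to get $\sum_i P^*(i)\ln\bigl(P(i)/\hat P(i)\bigr)$, and apply $\ln y\le y-1$ with $y=1/G_i(P,Q)$. Your added remarks on the support conventions and the positivity of $G_i$ only make explicit what the paper leaves implicit.
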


\begin{proof}
We have
\begin{align*}
RE(P^*, \hat{P}) - RE(P^*, P) &= \sum_{i=1}^n P^*(i) \ln \frac{P^*(i)}{\hat{P}(i)} - \sum_{i=1}^n  P^*(i) \ln \frac{P^*(i)}{P(i)}\\
                       &= \sum_{i=1}^n P^*(i)  \ln \frac{P(i)}{\hat{P}(i)}\\
                       &= \sum_{i=1}^n P^*(i)  \ln \left( \frac{1 + \alpha P \cdot CQ}{1 + \alpha (CQ)_i} \right).
\end{align*}
Now, for any $y > 0$, $\ln y \leq y-1$, and, therefore,
\begin{align*}
RE(P^*, \hat{P}) - RE(P^*, P) &\leq \sum_{i=1}^n P^*(i)  \left( \frac{1 + \alpha P \cdot CQ}{1 + \alpha (CQ)_i} - 1\right)\\
                                         &= \sum_{i=1}^n P^*(i)  \left( \frac{1 + \alpha P \cdot CQ}{1 + \alpha (CQ)_i} \right) - 1\\
                                         &= \sum_{i=1}^n P^*(i) \frac{1}{G_i(P, Q)} - 1.\qedhere
\end{align*}
\end{proof}

The proof of the next lemma relies on the {\em Kantorovich inequality} (cf. \citep{Steele}). Let $x_1 \leq \cdots \leq x_n$ be positive numbers. Furthermore, let $\lambda_1, \ldots \lambda_n \geq 0$ such that $\sum_{i=1}^n \lambda_i = 1$. Then
\begin{align*}
\left( \sum_{i=1}^n \lambda_i x_i \right) \left( \sum_{i=1}^n \lambda_i \frac{1}{x_i} \right) \leq \left( \frac{\frac{1}{2}(x_1 + x_n)}{\sqrt{x_1 x_n}} \right)^2.
\end{align*}

\begin{lemma}
\label{ready}
Let $P^*$, $P$, and $Q$ as in the previous lemma. Then
\label{Kantorovichlemma}
\begin{align*}
\sum_{i = 1}^n P^*(i) \frac{1}{G_i(P, Q)} \leq \frac{\left( 1 + \frac{1}{2} \alpha ((CQ)_{\min} + (CQ)_{\max} \right)^2}{(1 + \alpha (CQ)_{\min}) (1+ \alpha (CQ)_{\max})} \cdot \frac{1 + \alpha P \cdot CQ}{1 + \alpha P^* \cdot CQ},
\end{align*}
where $(CQ)_{\max} = \max\{ (CQ)_i | P_i > 0 \}$ and $(CQ)_{\min} = \min\{ (CQ)_i | P_i > 0 \}$.
\end{lemma}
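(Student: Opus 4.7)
The plan is to reduce the statement to a direct application of the Kantorovich inequality after a short algebraic rewriting. Using the explicit formula $G_i(P,Q) = (1 + \alpha (CQ)_i)/(1 + \alpha P \cdot CQ)$, the left-hand side equals
\begin{align*}
\sum_{i=1}^n P^*(i) \frac{1}{G_i(P,Q)} = (1 + \alpha P \cdot CQ) \sum_{i=1}^n \frac{P^*(i)}{1 + \alpha (CQ)_i}.
\end{align*}
Cancelling the common factor $1 + \alpha P \cdot CQ$ against the right-hand side, the target inequality reduces to showing
\begin{align*}
(1 + \alpha P^* \cdot CQ) \sum_{i=1}^n \frac{P^*(i)}{1 + \alpha (CQ)_i} \leq \frac{\bigl(1 + \tfrac{1}{2}\alpha((CQ)_{\min} + (CQ)_{\max})\bigr)^2}{(1+\alpha (CQ)_{\min})(1+\alpha (CQ)_{\max})}.
\end{align*}

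Next I would rewrite the leading factor on the left using the fact that $P^*$ is a probability vector: since $\sum_i P^*(i) = 1$, we have $1 + \alpha P^* \cdot CQ = \sum_i P^*(i)(1 + \alpha (CQ)_i)$. Setting $x_i \doteq 1 + \alpha (CQ)_i$ and $\lambda_i \doteq P^*(i)$, the left-hand side becomes exactly
\begin{align*}
\left( \sum_i \lambda_i x_i \right) \left( \sum_i \lambda_i \frac{1}{x_i} \right),
\end{align*}
which is precisely the quantity bounded by the Kantorovich inequality (assuming $\alpha$ small enough that every $x_i > 0$, which is implicit in the lemma's setting).

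To finish I would invoke Kantorovich with the bounds $x_{\min} = 1 + \alpha (CQ)_{\min}$ and $x_{\max} = 1 + \alpha (CQ)_{\max}$. Here the hypothesis $\mathrm{supp}(P^*) \subseteq \mathrm{supp}(P)$ is exactly what is needed: every $i$ with $\lambda_i > 0$ lies in the support of $P$, so $(CQ)_{\min} \leq (CQ)_i \leq (CQ)_{\max}$, and therefore $x_{\min} \leq x_i \leq x_{\max}$ for every $i$ contributing to the sums. Kantorovich, together with the elementary observation that the bound $(a+b)^2/(4ab)$ is monotone non-decreasing in $b/a$ (so extending the endpoints only weakens the bound), yields exactly the right-hand side of the displayed inequality.

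The main obstacle is not the main chain of reasoning itself, which is essentially a one-line application of Kantorovich, but the careful bookkeeping around the support condition: one has to verify that the $(CQ)_{\min}$ and $(CQ)_{\max}$ prescribed by the lemma (defined via $\mathrm{supp}(P)$) are legitimate endpoints for Kantorovich even when the weights $\lambda_i = P^*(i)$ live on the possibly smaller set $\mathrm{supp}(P^*)$. The observation above that widening the interval $[x_{\min}, x_{\max}]$ only increases the Kantorovich bound takes care of this, and the rewriting of $1 + \alpha P^* \cdot CQ$ as $\sum_i P^*(i) x_i$ is what allows the application to be exact rather than off by a spurious factor.
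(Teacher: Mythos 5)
Your proposal is correct and is essentially the paper's own argument: the paper applies the Kantorovich inequality directly to the quantities $G_i(P,Q)$ with weights $P^*(i)$, computes $\sum_i P^*(i)G_i(P,Q) = (1+\alpha P^*\cdot CQ)/(1+\alpha P\cdot CQ)$, and divides, which is the same computation as yours since your $x_i = 1+\alpha(CQ)_i$ equals $G_i(P,Q)$ up to the common positive factor $1+\alpha P\cdot CQ$ (under which the Kantorovich bound is invariant). Your extra remark that widening the interval $[x_{\min},x_{\max}]$ only weakens the bound is a small point of care the paper leaves implicit, but it does not constitute a different route.
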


\begin{proof}
From the Kantorovich inequality we have
\begin{align*}
\left( \sum_{i = 1}^n P^*(i) G_i(P, Q) \right) \left( \sum_{i = 1}^n P^*(i) \frac{1}{G_i(P, Q)} \right) &\leq \left( \frac{\frac{1}{2} \left( G_{\min} + G_{\max} \right)}{\sqrt{G_{\min} G_{\max}}} \right)^2.
\end{align*}
Furthermore,
\begin{align*}
\sum_{i = 1}^n P^*(i) G_i(P, Q) = \frac{1 + \alpha P^* \cdot CQ}{1 + \alpha P \cdot CQ},
\end{align*}
and
\begin{align*}
\left( \frac{\frac{1}{2} \left( G_{\min} + G_{\max} \right)}{\sqrt{G_{\min} G_{\max}}} \right)^2 &= \frac{\left( \frac{1}{2} (G_{\min} + G_{\max}) \right)^2}{G_{\min} G_{\max}}\\
&= \frac{\left( 1 + \frac{1}{2} \alpha ((CQ)_{\min} + (CQ)_{\max}) \right)^2}{(1 + \alpha (CQ)_{\min}) (1+ \alpha (CQ)_{\max})}.
\end{align*}
Therefore,
\begin{align*}
\sum_{i = 1}^n P^*(i) \frac{1}{G_i(P, Q)} \leq \frac{\left( 1 + \frac{1}{2} \alpha ((CQ)_{\min} + (CQ)_{\max}) \right)^2}{(1 + \alpha (CQ)_{\min}) (1+ \alpha (CQ)_{\max})} \cdot \frac{1 + \alpha P \cdot CQ}{1 + \alpha P^* \cdot CQ},
\end{align*}
as in the statement of the lemma.
\end{proof}

\subsection{Evolutionary stability implies asymptotic stability}

Let $(C, C^T)$ be a symmetric bimatrix game where $C$ is an $n \times n$ matrix such that $0 < C < 1$, in that all elements of $C$ are in this interval, note that the latter assumption is made without loss of generality, let $\mathbb{X}$ be the space of mixed strategies of $(C, C^T)$ (the probability simplex in $\mathbb{R}^n$), and consider the map $T: \mathbb{X} \rightarrow \mathbb{X}$ where
\begin{align}
T_i(X) = X(i) \cdot \frac{1 + \alpha E_i \cdot CX}{1 + \alpha X \cdot CX}, i =1, \ldots, n,\label{main}
\end{align}
$E_i$ is a probability vector whose mass is concentrated in position $i$, and $\alpha : \mathbb{X} \rightarrow \mathbb{R}_+$ is a function that we refer to as the {\em step size rule}. Note that by taking the derivative of $T$ with respect to $\alpha$ and letting $\alpha = 0$, we obtain on the right-hand-side the vector field of the differential equation of the continuous-time replicator dynamic in \eqref{RD}. Although it is known that an ESS is asymptotically stable under \eqref{RD}, as mentioned in the introduction, we are not aware of any general results that tie the behavior of \eqref{main} with that of \eqref{RD}. The independent study of the discrete evolution rule we consider is, therefore, necessary. In this vein, we have the following theorem.

\begin{theorem}
\label{primary_theorem}
If $X^* \in \mathbb{X}$ is an ESS of $(C, C^T)$ where $0 < C < 1$, then there exists a step size rule $\alpha$ such that $X^*$ is asymptotically stable under \eqref{main}.
\end{theorem}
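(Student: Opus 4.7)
The plan is to invoke Proposition~\ref{asymptotic_stability} with the relative entropy $V(X) \doteq RE(X^*, X)$ as the Lyapunov function. Continuity of $V$ together with the standard fact $V(X^*) = 0$ and $V > 0$ elsewhere supplies positive definiteness of $V$ itself, so the task reduces to producing a step size rule $\alpha: \mathbb{X} \to \mathbb{R}_+$ and a neighborhood $O$ of $X^*$ on which $-\dot V(X) = V(X) - V(T(X))$ is strictly positive off $X^*$. Shrinking $O$ at the outset, I may assume both $\mathrm{supp}(X^*) \subseteq \mathrm{supp}(X)$ for every $X \in O$ (so Lemma~\ref{simplelemma} applies, using that $X^*(i) > 0$ forces $X(i) > 0$ nearby by continuity) and that the ESS gap $d(X) \doteq (X^* - X) \cdot CX > 0$ for every $X \in O \setminus \{X^*\}$.

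The central inequality comes from chaining Lemma~\ref{simplelemma} and Lemma~\ref{Kantorovichlemma} with the specialization $P = Q = X$, yielding
\begin{align*}
\dot V(X) \;\leq\; K(\alpha, X)\, R(\alpha, X) \;-\; 1,
\end{align*}
where $K(\alpha, X) = (1 + \alpha(a+b)/2)^2 / [(1 + \alpha a)(1 + \alpha b)]$ with $a = (CX)_{\min}$ and $b = (CX)_{\max}$, and $R(\alpha, X) = (1 + \alpha X \cdot CX)/(1 + \alpha X^* \cdot CX)$. A direct expansion shows $K - 1 = \alpha^2 (b-a)^2 / [4(1+\alpha a)(1+\alpha b)] \geq 0$, a quantity of order $\alpha^2$, while the ESS inequality gives $R - 1 = -\alpha\, d(X)/(1 + \alpha X^* \cdot CX) < 0$, of order $\alpha$. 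Splitting $KR - 1 = (K-1) R + (R-1)$ exhibits $\dot V$ as bounded by a competition between a positive $O(\alpha^2)$ Kantorovich error and a negative $O(\alpha)$ ESS gain. The standing normalization $0 < C < 1$ keeps $a$, $b$, $X \cdot CX$, $X^* \cdot CX$ all in $(0,1)$, so all denominators remain uniformly bounded on $\mathbb{X}$.

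To force the negative term to dominate, I propose the state-dependent rule
\begin{align*}
\alpha(X) \;\doteq\; \kappa\, d(X) \;=\; \kappa\, (X^* - X) \cdot CX,
\end{align*}
with $\kappa > 0$ a small constant (and $\alpha(X^*) = 0$, which is harmless because $X^*$ is already a rest point). This is strictly positive on $O \setminus \{X^*\}$ by the ESS condition, and substituting it into the bound on $\dot V$ rescales both competing terms as $\kappa\, d(X)^2$ multiplied by the bracket
\begin{align*}
\frac{\kappa (b-a)^2\, R}{4(1+\alpha a)(1+\alpha b)} \;-\; \frac{1}{1 + \alpha X^* \cdot CX}.
\end{align*}
Since $(b-a)^2 < 1$ under $0 < C < 1$ and the remaining factors tend to $1$ as $\alpha \to 0$, taking $\kappa$ sufficiently small (and, if necessary, further shrinking $O$ to control lower-order corrections) makes the bracket strictly negative on $O \setminus \{X^*\}$. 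This gives positive definiteness of $-\dot V$ with respect to $X^*$, and the theorem follows from Proposition~\ref{asymptotic_stability}.

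The main obstacle is precisely the fact that the Kantorovich excess $K - 1$ does not automatically vanish as $X \to X^*$: the spread $(CX)_{\max} - (CX)_{\min}$ is governed by the geometry of $C$ at $X^*$ rather than by $\|X - X^*\|$, so a constant step size cannot in general beat the $\alpha^2$ error against the $\alpha\, d(X)$ gain once $d(X)$ is small. Letting $\alpha$ shrink linearly in $d(X)$ rescales both sides to $d(X)^2$ and frees $\kappa$ to tip the balance. The price is that the rule depends on the target $X^*$ through $d$; producing a rule independent of $X^*$, as advertised in the introduction, will presumably require the two additional lemmas the author alludes to.
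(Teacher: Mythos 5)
Your proposal is correct and follows essentially the same route as the paper: the relative-entropy Lyapunov function, the chain of Lemmas~\ref{simplelemma} and~\ref{Kantorovichlemma} with $P=Q=X$, and a state-dependent step size proportional to the ESS gap $(X^*-X)\cdot CX$ so that the negative $O(\alpha)$ term coming from evolutionary stability beats the positive $O(\alpha^2)$ Kantorovich excess. The paper packages the same competition as a quadratic inequality in $\alpha$ and extracts the admissible interval from its discriminant (Equation~\eqref{upper_bound}), whereas you substitute the explicit rule $\alpha=\kappa\,(X^*-X)\cdot CX$; both rules depend on $X^*$ and vanish as $X\to X^*$, exactly as the paper itself observes.
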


\begin{proof}
Our proof is constructive in that we provide an explicit step size rule $\alpha$ such that $X^*$ is asymptotically stable under \eqref{main}. We note that the rule that we provide here depends on knowledge of $X^*$. The techniques developed in this paper suffice to lift this dependence on $X^*$, however, to avoid an unnecessary repetition we refer the reader to the next section for the corresponding analysis in the multi-population setting of nonlinear payoffs, which generalizes Theorem \ref{primary_theorem}. However, we choose to retain this analysis to illustrate certain points discussed toward the end of the proof. Our analysis is by means of the following Lyapunov function:
\begin{align*}
V(X) = RE(X^*, X) = \sum_{i=1}^n X^*(i) \ln \left( \frac{X^*(i)}{X(i)} \right).
\end{align*}
Using standard properties of the relative entropy it can be verified that $V$ is positive definite in a neighborhood of $X^*$. Furthermore, Lemmas \ref{simplelemma} and \ref{ready} imply that
\begin{align}
\dot{V}(X) \leq \frac{\left( 1 + \frac{1}{2} \alpha ((CX)_{\min} + (CX)_{\max}) \right)^2}{(1 + \alpha (CX)_{\min}) (1+ \alpha (CX)_{\max})} \cdot \frac{1 + \alpha X \cdot CX}{1 + \alpha X^* \cdot CX} - 1.\label{Lyapunov_first_difference}
\end{align}
Observe that 
\begin{align*}
\frac{\left( 1 + \frac{1}{2} \alpha ((CX)_{\min} + (CX)_{\max}) \right)^2}{(1 + \alpha (CX)_{\min}) (1+ \alpha (CX)_{\max})} > 1
\end{align*}
whereas, since $X^*$ is an ESS,
\begin{align*}
\frac{1 + \alpha X \cdot CX}{1 + \alpha X^* \cdot CX} < 1.
\end{align*}
Now let
\begin{align*}
f(\alpha) = g(\alpha) \cdot h(\alpha)
\end{align*}
where
\begin{align*}
g(\alpha) = \frac{1 + \alpha ((CX)_{\min} + (CX)_{\max}) + \frac{1}{4} \alpha^2 ((CX)_{\min} + (CX)_{\max})^2}{1 + \alpha ((CX)_{\min} + (CX)_{\max}) + \alpha^2 (CX)_{\min} (CX)_{\max}} \equiv \frac{p(\alpha)}{q(\alpha)}
\end{align*}
and
\begin{align*}
h(\alpha) = \frac{1 + \alpha X \cdot CX}{1 + \alpha X^* \cdot CX}.
\end{align*}
\eqref{Lyapunov_first_difference} implies that $f(\alpha) - 1 \geq \dot{V}(X)$ (for all $\alpha > 0$). Note that
\begin{align*}
f(0) = 1.
\end{align*}
Observe further that
\begin{align*}
\frac{d}{d \alpha} \dot{V}(X) = f'(\alpha).
\end{align*}
Furthermore,
\begin{align*}
f'(0) = g'(0) h(0) + g(0) h'(0),
\end{align*}
and since $h(0) = g(0) = 1$, we obtain that
\begin{align*}
f'(0) = g'(0) + h'(0).
\end{align*}
Moreover
\begin{align*}
g'(\alpha) = \frac{p'(\alpha) q(\alpha) - p(\alpha) q'(\alpha)}{q^2(\alpha)},
\end{align*}
and, therefore,
\begin{align*}
g'(0) = 0.
\end{align*}
Furthermore,
\begin{align*}
h'(0) = X \cdot CX - X^* \cdot CX.
\end{align*}
Thus, we obtain
\begin{align*}
f'(0) = X \cdot CX - X^* \cdot CX < 0.
\end{align*}
Noting further that since $f$ is a continuous function of $\alpha$, we may immediately conclude that is either strictly decreasing for all $\alpha >0$ or otherwise it attains a minimum in a neighborhood of $0$ at, say $\alpha^*$, such that $f(\alpha^*) < 1$. But let us investigate \eqref{Lyapunov_first_difference} more carefully.

We would like to obtain a value of $\alpha$ such that the right-hand-side of \eqref{Lyapunov_first_difference} is negative, that is,
\begin{align*}
\frac{\left( 1 + \frac{1}{2} \alpha ((CX)_{\min} + (CX)_{\max}) \right)^2}{(1 + \alpha (CX)_{\min}) (1+ \alpha (CX)_{\max})} \cdot \frac{1 + \alpha X \cdot CX}{1 + \alpha X^* \cdot CX} < 1.
\end{align*}
We may write the previous expression as 
\begin{align*}
\left( 1 + \alpha ((CX)_{\min} + (CX)_{\max}) + \frac{1}{4} \alpha^2 ((CX)_{\min} + (CX)_{\max})^2 \right) (1 + \alpha X \cdot CX) <
\end{align*}
\begin{align*}
< \left( 1 + \alpha ((CX)_{\min} + (CX)_{\max}) + \alpha^2 (CX)_{\min} (CX)_{\max} \right) (1 + \alpha X^* \cdot CX),
\end{align*}
which, since $\alpha > 0$, is equivalent to
\begin{align*}
(X \cdot CX - X^* \cdot CX) + \frac{1}{4} \alpha \left(((CX)_{\min} + (CX)_{\max})^2 - 4(CX)_{\min} (CX)_{\max} \right) +
\end{align*}
\begin{align*}
+ \alpha ((CX)_{\min} + (CX)_{\max}) (X \cdot CX - X^* \cdot CX) +
\end{align*}
\begin{align*}
+ \frac{1}{4} \alpha^2 \left(((CX)_{\min} + (CX)_{\max})^2 (X \cdot CX) - 4 (CX)_{\min} (CX)_{\max} (X^* \cdot CX) \right) < 0
\end{align*}
Observe now that since $X \cdot CX < X^* \cdot CX$, and using straight algebra, the quadratic expression on left-hand-side of the previous inequality is less than
\begin{align*}
(X \cdot CX - X^* \cdot CX) + \frac{1}{4} \alpha ((CX)_{\max} - (CX)_{\min})^2 + \frac{1}{4} \alpha^2 ((CX)_{\max} - (CX)_{\min})^2.
\end{align*}
It, therefore, suffices to consider values of $\alpha > 0$ such that the latter expression is $< 0$. Observe that the latter quadratic expression is convex, negative when $\alpha = 0$, and, since it is convex, it attains a minimum at the stationary point $\alpha = -(1/2)$. Letting $\Delta$ be the discriminant of the quadratic expression, that is,
\begin{align*}
\Delta = \frac{1}{16} ((CX)_{\max} - (CX)_{\min})^4 - (X \cdot CX - X^* \cdot CX) ((CX)_{\max} - (CX)_{\min})^2,
\end{align*}
then the step size rule that selects any $\alpha$ such that 
\begin{align}
0 < \alpha < - \frac{1}{2} + \frac{\sqrt{\Delta} }{(1/2) ((CX)_{\max} - (CX)_{\min})^2},\label{upper_bound}
\end{align}
which is guaranteed to be positive, implies asymptotic stability of discrete-time replicator dynamic \eqref{main} by Proposition \ref{asymptotic_stability} (since for all $X$ in a neighborhood of $X^*$, $\dot{V}(X) < 0$). 
\end{proof}

The step size rule in \eqref{upper_bound} requires knowledge of $X^*$. We note that there exist applications where such a rule can be helpful. For example, a question of practical interest is, given a symmetric bimatrix game $(C, C^T)$ and a strategy $Y$ in the probability simplex of this game, whether $Y$ is an ESS (for example, see \citep{Etessami, Nisan-ESS}). Unless $Y$ is asymptotically stable under the aforementioned step size rule, we can conclude that $Y$ is not an ESS. 

In another application, there are agent populations (for example, selfish Internet users) governed by evolutionary dynamics such as the discrete-time replicator dynamic (that in the setting of the Internet may correspond to algorithms that route user traffic), and engineers interested in guiding such populations toward desirable equilibrium outcomes can only manipulate these dynamics' respective parameters (rather than the demand). In these settings, knowledge of $X^*$ can be easily be assumed; for example, in selfish routing, a subject discussed in more depth later in this paper, $X^*$ can be computed by any convex programming algorithm assuming that traffic demand is known. Multiplicative weights dynamics are, owing to their good regret performance, more appealing than, for example, gradient projection algorithms in the event of network faults and failures, which they can more easily circumvent, therefore, such a setup is not unreasonable in practical applications.

Before concluding this discussion, let us look more carefully into \eqref{upper_bound}. Observe that for all $X$ in a neighborhood of $X^*$, $\Delta > 0$, which is an implication of the definition of an ESS (i.e., $X^* \cdot CX > X \cdot CX$, $\forall X$ in a neighborhood $O$). Observe further, however, that as $X$ approaches $X^*$, $\alpha$ approaches the value $0$, which a simple implication of careful inspection of \eqref{upper_bound}. Therefore, the previous formula cannot provide a (strictly) positive upper bound on $\alpha$ such that for all values $\alpha$ less than this upper bound, the potential function is guaranteed to descend, even if we assume knowledge of $X^*$. It is an interesting question whether such an upper bound exists.

\subsection{Discussion of Weissing's result}
\label{Weissings_result}

Although our promised step size rule that is independent of the ESS will not be defined and analyzed until the next section, we believe it is reasonable to discuss Weissing's result here. Let us first prove, in this vein, that an interior ESS (i.e., an ESS supported by every pure strategy) is (interior) globally asymptotically stable under the discrete-time replicator dynamic (noting that, slightly abusing terminology, we use the term {\em interior} when we refer to asymptotic stability to mean that asymptotic stability is implied starting from an interior point of every coface of the probability simplex that supports the ESS, which in the case of an interior ESS is the entire simplex). Our proof relies on the following characterization of an ESS in a symmetric bimatrix game (corresponding to the original definition of \cite{TheLogicOfAnimalConflict} and \cite{Evolution}).

\begin{proposition}
\label{ess_characterization}
Let $(C, C^T)$ be a symmetric bimatrix games. $X^* \in \mathbb{X}$, where $\mathbb{X}$ is the space of mixed strategies of $(C, C^T)$, is an ESS if and only if 
\begin{align*}
X^* \cdot C X^* &\geq X \cdot C X^*, \mbox{ } \forall X \in \mathbb{X}, \mbox{ and }\\
X^* \cdot C X^* &= X \cdot C X^* \Rightarrow X^* \cdot C X > X \cdot C X, \mbox{ } \forall X \in \mathbb{X} \mbox{ such that } X \neq X^*.
\end{align*}
\end{proposition}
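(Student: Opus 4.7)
The plan is to prove both implications directly. The forward direction---local superiority implying the two Maynard Smith conditions---is a routine small-perturbation argument along line segments from $X^*$, and I expect it to fall out in a few lines. The reverse direction is the technically substantive part, and the main obstacle there will be producing a \emph{single} neighborhood on which $(X^* - X) \cdot CX > 0$ holds uniformly in $X$; I would handle this with a compactness-and-contradiction argument that upgrades the single-point strict inequality offered by condition (2) into a neighborhood-level statement.

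For the forward direction, I would fix $X \in \mathbb{X}$ with $X \neq X^*$ and probe $X^*$ along the segment $X_\epsilon := (1-\epsilon)X^* + \epsilon X$, which lies in the local-superiority neighborhood $O$ for sufficiently small $\epsilon > 0$ and differs from $X^*$. A direct expansion gives
\[ (X^* - X_\epsilon) \cdot C X_\epsilon \;=\; \epsilon(1-\epsilon)\, (X^* - X) \cdot CX^* \;+\; \epsilon^2\, (X^* - X) \cdot CX, \]
and this is strictly positive by hypothesis. Dividing by $\epsilon$ and letting $\epsilon \to 0^+$ yields $(X^* - X) \cdot CX^* \geq 0$, which is condition (1). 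If moreover $(X^* - X) \cdot CX^* = 0$, the first term vanishes identically in $\epsilon$ and the remaining $\epsilon^2$ term must be positive at any admissible $\epsilon > 0$, forcing $(X^* - X) \cdot CX > 0$, which is condition (2).

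For the reverse direction, I would assume conditions (1) and (2) and argue by contradiction. If local superiority failed, there would be a sequence $X_n \in \mathbb{X}$ with $X_n \to X^*$, $X_n \neq X^*$, and $(X^* - X_n) \cdot CX_n \leq 0$. Setting $V_n := (X_n - X^*)/\|X_n - X^*\|$ and extracting a convergent subsequence $V_n \to V$ with $\|V\| = 1$, the limit $V$ lies in the tangent cone of $\mathbb{X}$ at $X^*$, so $X^* + tV \in \mathbb{X}$ for all sufficiently small $t > 0$. Writing $X_n = X^* + \|X_n - X^*\|\, V_n$ in the hypothesis gives
\[ V_n \cdot CX^* \;+\; \|X_n - X^*\|\; V_n \cdot CV_n \;\geq\; 0, \]
while condition (1) at $X_n$ forces $V_n \cdot CX^* \leq 0$; passing to the limit pins down $V \cdot CX^* = 0$. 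I would then apply condition (2) at $Y := X^* + tV$, which satisfies $Y \cdot CX^* = X^* \cdot CX^*$ by construction, to obtain $(X^* - Y) \cdot CY > 0$, and this simplifies to $V \cdot CV < 0$. Substituted back into the displayed inequality, this forces $V_n \cdot CX^*$ to be strictly positive for all large $n$, contradicting condition (1) and yielding the desired contradiction.
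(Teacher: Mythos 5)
Your proof is correct. Note, though, that the paper does not actually supply its own argument for this proposition: it simply remarks that ``a proof can, for example, be found in Weibull,'' so there is no in-paper proof to compare against line by line. Your forward direction is the standard segment-probing computation and is fine (the identity $(X^* - X_\epsilon)\cdot CX_\epsilon = \epsilon(1-\epsilon)(X^*-X)\cdot CX^* + \epsilon^2 (X^*-X)\cdot CX$ checks out, and the two conditions drop out exactly as you say). Your reverse direction differs in presentation from the classical Hofbauer--Schuster--Sigmund/Weibull argument, which typically writes a nearby state as $(1-\epsilon)X^* + \epsilon Y$ with $Y$ ranging over a compact set bounded away from $X^*$ and extracts a uniform invasion barrier by compactness of that set; you instead normalize a contradicting sequence and pass to a limiting direction $V$ on the unit sphere. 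The two routes are equivalent in substance (both are compactness arguments), but yours is cleaner in one respect: the limiting direction $V$ isolates exactly the degenerate case $V\cdot CX^* = 0$ where the second Maynard Smith condition must be invoked, and the polyhedrality of the simplex is what licenses the step $X^* + tV \in \mathbb{X}$ for small $t>0$ --- you are right to flag that this uses the tangent cone being locally realized, which holds for polytopes but would need more care for general convex domains. The only cosmetic caveat is that the paper's Definition~1 literally quantifies over all $X \in O$ without excluding $X = X^*$; your proof (correctly) uses the intended reading that excludes $X^*$ itself, consistent with the informal discussion earlier in the paper.
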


A proof of Proposition \ref{ess_characterization} can, for example, be found in \citep{Weibull}. Observe now that, by an elementary property of Nash equilibria, if $X^*$ is an interior Nash equilibrium, then, for all $X \in \mathbb{X}$, $(X^* - X) \cdot CX^* = 0$, and, therefore, by Proposition \ref{ess_characterization}, if $X^*$ is an interior ESS, for all $X \in \mathbb{X}$, $(X^* - X) \cdot CX > 0$. This implies that $X^*$ is a GESS. That $X^*$ is (interior) globally asymptotically stable under the discrete-time replicator dynamic follows now by an argument analogous to the proof of Theorem \ref{primary_theorem} (starting from an interior strategy, the potential function descends). 

\cite{Weissing} claims in his Theorem 6.5 that such an ESS is a global attractor for the discrete-time replicator dynamic (assuming a, small enough, constant step size). Although, clearly, global stability is an impossibility as, starting on the boundary of the probability simplex, unsupported pure strategies cannot receive positive population mass, it remains an interesting question whether it is possible to extend our asymptotic stability analysis assuming that the step size is constant.

\section{General population games}
\label{asymptotic_stability_general_case}

In this section, we generalize Theorem \ref{primary_theorem} to general population games. The evolution rule we consider is an approximation of \eqref{Hedge} for a small enough step size:
\begin{align}
T_i^j(X) = X_i^j \frac{1 + \alpha F_i^j(X)}{1 + \alpha (1/\omega_i) X_i \cdot F_i(X) }, \mbox{ where } i = 1, \ldots, n \mbox{ and } j \in S_i.\label{main_general}
\end{align}
Note that \eqref{main_general} coincides with \eqref{main} if we let $n=1$ and $F(X) = CX$. Our analysis proceeds starting with the same pattern as in the previous section, although it ends up being more involved due to the interaction effects among populations and the derivation of a step size rule that does not depend on knowledge of the ESS. Before proceeding with the analysis, let us first identify the fixed points of \eqref{main_general} in this more general setting. We, in fact, show that the fixed points of \eqref{main_general} coincide with the fixed points of \eqref{Hedge}, which, although expected, is something intuitively pleasing.

\subsection{Fixed points of multiplicative weights dynamics}

The following theorem was promised in the prequel.

\begin{theorem}
\label{fixed_points}
Let $S_+(X_i) = \{j \in S_i | X_i^j > 0 \}$. $X = (X_1, \ldots, X_n)^T$ is a fixed point of \eqref{main_general} if and only if, for all $i=1,\ldots,n$, and, for all $j_1, j_2 \in S_+(X_i)$, $F_i^{j_1}(X) = F_i^{j_2}(X)$. 
\end{theorem}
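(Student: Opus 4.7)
The plan is to analyze the fixed point equation $T_i^j(X) = X_i^j$ for each population $i$ and each pure strategy $j \in S_i$, treating separately the cases $j \notin S_+(X_i)$ and $j \in S_+(X_i)$. The off-support case is trivial: if $X_i^j = 0$, then $T_i^j(X) = 0 = X_i^j$ automatically, regardless of the payoff values, so no condition arises. All the content of the theorem comes from the on-support case.

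For $j \in S_+(X_i)$ we have $X_i^j > 0$, so we may divide through, and the fixed point equation reduces to
\begin{align*}
\frac{1 + \alpha F_i^j(X)}{1 + \alpha (1/\omega_i) X_i \cdot F_i(X)} = 1,
\end{align*}
which, since $\alpha > 0$, is equivalent to
\begin{align*}
F_i^j(X) = \frac{1}{\omega_i} X_i \cdot F_i(X) \quad \text{for every } j \in S_+(X_i).
\end{align*}
In particular, every $F_i^j(X)$ with $j \in S_+(X_i)$ equals the common value $(1/\omega_i) X_i \cdot F_i(X)$, so the forward direction follows: $F_i^{j_1}(X) = F_i^{j_2}(X)$ for all $j_1, j_2 \in S_+(X_i)$.

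For the converse, suppose $F_i^j(X) = v_i$ for some constant $v_i$ and all $j \in S_+(X_i)$. The key observation is that $(1/\omega_i) X_i \cdot F_i(X)$ is a convex combination of the payoffs on the support, since $X_i^j = 0$ off the support and $\sum_{j \in S_+(X_i)} X_i^j = \omega_i$ by definition of $\mathbb{X}_i$. Consequently,
\begin{align*}
\frac{1}{\omega_i} X_i \cdot F_i(X) = \frac{1}{\omega_i} \sum_{j \in S_+(X_i)} X_i^j \, v_i = v_i,
\end{align*}
so the ratio in \eqref{main_general} equals $1$ on the support, giving $T_i^j(X) = X_i^j$ there; off the support the equation holds trivially. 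Hence $X$ is a fixed point. There is no real analytic obstacle here; the only subtlety is being careful that the convention $0/0 \doteq 1$ (or rather, the simple fact that the multiplicative factor is irrelevant when $X_i^j = 0$) is used consistently so that nothing is required of $F_i^j(X)$ for strategies outside the support.
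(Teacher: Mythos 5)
Your proposal is correct and follows essentially the same route as the paper's proof: the forward direction divides the fixed-point equation by $X_i^j > 0$ to conclude $F_i^j(X) = (1/\omega_i) X_i \cdot F_i(X)$ on the support, and the converse observes that the population-average payoff is a convex combination of the (equal) supported payoffs, so the multiplicative factor is $1$ on the support and irrelevant off it. The only cosmetic difference is that you make the off-support triviality and the convex-combination step explicit, which the paper leaves implicit.
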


\begin{proof}
First we show sufficiency: If for all $i$ and for all $j_1, j_2 \in S_+(X_i)$, $F_i^{j_1}(X) = F_i^{j_2}(X)$,
then $T(X) = X$.
Some of the coordinates of $X_i$
are zero and some are positive. Clearly, 
the zero coordinates will not become positive after applying the map. 
Now, notice that, for all $j \in S_+(X_i)$, $1+\alpha F_i^j(X) =1+\alpha (1/\omega_i) \sum_{k \in S_i} X_i \cdot F_i(X)$, and, therefore, $T_i(X) = X_i$, and this is true for all $i$.

Now we show necessity: 
If $X$ is a fixed point of \eqref{main_general}, 
then for all $i$ and for all $j_1, j_2 \in S_+(X_i)$, $F_i^{j_1}(X) = F_i^{j_2}(X)$.
Let $\hat{X}_i = T_i(x)$.
Because $X$ is a fixed point, $\hat{X}_i^j = X_i^j$. Therefore,
{\allowdisplaybreaks
\begin{align}
\hat{X}_i^j &= X_i^j\notag\\
\frac{X_i^j (1+\alpha F_i^j(X))}{1+\alpha (1/\omega_i)\sum_{k} X_i^k F_i^k(X)} &= X_i^j\notag\\
1+\alpha F_i^j(X) &= 1+ \alpha (1/\omega_i)\sum_{k} X_i^k F_i^k(X)\notag\\
F_i^j(X) &= (1/\omega_i)\sum_{k} X_i^k F_i^k(X)\label{eqcondition-finalstep_1}
\end{align}
}
Equation \eqref{eqcondition-finalstep_1} implies that, for all $i=1,\ldots, n$ and for all $j \in S_i$,
\begin{align*}
X_i^j > 0 \Rightarrow F_i^j(X) = c
\end{align*}
where $c$ is a constant. This completes the proof.
\end{proof}

The following theorem asserts that the general dynamic \eqref{Hedge} has the same fixed points as \eqref{main_general}.

\begin{theorem}
Let $S_+(X_i) = \{j \in S_i | X_i^j > 0 \}$. $X = (X_1, \ldots, X_n)^T$ is a fixed point of \eqref{Hedge} if and only if, for all $i=1,\ldots,n$, and, for all $j_1, j_2 \in S_+(X_i)$, $F_i^{j_1}(X) = F_i^{j_2}(X)$. 
\end{theorem}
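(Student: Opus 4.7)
The plan is to mirror the structure of the proof of Theorem \ref{fixed_points}, since \eqref{Hedge} differs from \eqref{main_general} only in replacing the weight $1 + \alpha F_i^j(X)$ by $\exp\{\alpha F_i^j(X)\}$. That earlier argument depended on two properties of the weights: they are determined by $F_i^j(X)$ alone, and they take a common value across $j \in S_+(X_i)$ if and only if the payoffs themselves agree on that set. Both properties persist under the change of weight (with injectivity of $\exp$ substituting for the one-to-one nature of the affine map $y \mapsto 1 + \alpha y$), so essentially no new idea is required.

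For sufficiency, I would assume that for each population $i$ there is a constant $c_i$ with $F_i^j(X) = c_i$ for every $j \in S_+(X_i)$. Coordinates of $X_i$ that are already zero remain zero under \eqref{Hedge} (the numerator carries the factor $X_i^j$), so the task reduces to verifying $T_i^j(X) = X_i^j$ for $j \in S_+(X_i)$. For such $j$ the numerator equals $X_i^j \exp\{\alpha c_i\}$, and the denominator equals
\begin{align*}
(1/\omega_i) \sum_{k \in S_i} X_i^k \exp\{\alpha F_i^k(X)\} = (1/\omega_i) \exp\{\alpha c_i\} \sum_{k \in S_+(X_i)} X_i^k = \exp\{\alpha c_i\},
\end{align*}
because the terms with $X_i^k = 0$ drop out of the sum regardless of the value of $\exp\{\alpha F_i^k(X)\}$, and the surviving masses sum to $\omega_i$. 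The ratio is $1$, as required.

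For necessity, if $X$ is a fixed point of \eqref{Hedge} and $j \in S_+(X_i)$, then cancelling $X_i^j > 0$ from both sides of $T_i^j(X) = X_i^j$ leaves
\begin{align*}
\exp\{\alpha F_i^j(X)\} = (1/\omega_i) \sum_{k \in S_i} X_i^k \exp\{\alpha F_i^k(X)\}.
\end{align*}
The right-hand side does not depend on $j$, so $\exp\{\alpha F_i^j(X)\}$ is constant on $S_+(X_i)$, and injectivity of the exponential yields $F_i^{j_1}(X) = F_i^{j_2}(X)$ for all $j_1, j_2 \in S_+(X_i)$.

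There is no real obstacle here; the only point to watch is that the sum in the denominator of \eqref{Hedge} runs over all of $S_i$ and not just $S_+(X_i)$, but the extra terms vanish because of the factor $X_i^k = 0$. Once this is noted the argument is a transparent translation of Theorem \ref{fixed_points}, and the stated equivalence follows.
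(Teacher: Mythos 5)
Your proof is correct, and the sufficiency direction is identical to the paper's. In the necessity direction you take a slightly different (and in fact tighter) route: after cancelling $X_i^j > 0$ from the fixed-point equation you simply observe that $\exp\{\alpha F_i^j(X)\}$ equals the population average $(1/\omega_i)\sum_{k} X_i^k \exp\{\alpha F_i^k(X)\}$, a quantity independent of $j$, so injectivity of the exponential finishes the argument immediately. The paper instead rearranges the same equation into the vanishing signed sum $0 = \sum_{k \neq j} X_i^k \left(\exp\{\alpha (F_i^k(X) - F_i^j(X))\} - 1\right)$ and concludes that each term vanishes; as written that last inference is not automatic for an arbitrary $j$ (positive and negative terms could cancel) and really requires choosing $j$ to maximize $F_i^j(X)$ over the support so that all summands have the same sign. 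Your version sidesteps this entirely, so it is both a faithful translation of the paper's argument and a small improvement on it.
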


\begin{proof}
First we show sufficiency: If for all $i$ and for all $j_1, j_2 \in S_+(X_i)$, $F_i^{j_1}(X) = F_i^{j_2}(X)$,
then $T(X) = X$.
Some of the coordinates of $X_i$
are zero and some are positive. Clearly, 
the zero coordinates will not become positive after applying the map. 
Now, notice that, for all $j \in S_+(X_i)$, $\exp\{\alpha F_i^j(X)\} = (1/\omega_i)\sum_{k \in S_i} X_i^k \exp\{\alpha F_i^k(X)\}$. Therefore, $T_i(X) = X_i$, and this is true for all $i$.

Now we show necessity: 
If $X$ is a fixed point of Hedge, 
then for all $i$ and for all $j_1, j_2 \in S_+(X_i)$, $F_i^{j_1}(X) = F_i^{j_2}(X)$.
Let $\hat{X}_i = T_i(x)$.
Because $X$ is a fixed point, $\hat{X}_i^j = X_i^j$. Therefore,
{\allowdisplaybreaks
\begin{align}
\hat{X}_i^j &= X_i^j\notag\\
\frac{X_i^j \exp{\{\alpha F_i^j(X)\}}}{(1/\omega_i)\sum_{k} X_i^k \exp{\{\alpha F_i^k(X)\}}} &= X_i^j\notag\\
1 &= X_i^j \frac{(1/\omega_i)\sum_{k} X_i^k \exp{\{\alpha F_i^k(X)\}}}{X_i^j \exp{\{\alpha F_i^j(X)\}}}\notag\\
\omega_i &= \sum_k X_i^k \exp{\{\alpha (F_i^k(X) - F_i^j(X))\}}\notag\\
\omega_i &= X_i^j + \sum_{k \neq j} X_i^k \exp{\{ \alpha (F_i^k(X) - F_i^j(X))\}}\notag\\
\omega_i &= \omega_i - \sum_{k \neq j} X_i^k + \sum_{k \neq j} X_i^k \exp{\{\alpha (F_i^k(X) - F_i^j(X))\}}\notag\\
0 &= \sum_{k \neq j} X_i^k \left(\exp{\{ \alpha (F_i^k(X) - F_i^j(X))\}} - 1 \right)\label{eqcondition-finalstep}
\end{align}
}
Equation \eqref{eqcondition-finalstep} implies that 
\[\exp{\{\alpha (F_i^k(X) - F_i^j(X))\}} = 1, X_i^k > 0,\]
and, thus,
\[F_i^k(X) = F_i^j(X), X_i^k > 0.\]
This completes the proof.
\end{proof}

The previous theorems imply that Nash equilibria (in their general definition as given earlier) are necessarily fixed points of both \eqref{Hedge} and \eqref{main_general}. This follows immediately from a characterization of Nash equilibria in population games according to which a state is a Nash equilibrium if and only if all strategies that are in use (that is, receive a (strictly) positive population mass) in each population receive the same payoff and that strategies that are not in use (their population mass is zero) receive a payoff that is not inferior to the payoff received by strategies that are in use. (Keeping this characterization in mind, the previous theorems also imply that  there exist fixed points of the aforementioned discrete dynamics that not Nash equilibria.) Since evolutionarily stable states are necessarily Nash equilibria, they are also fixed points of such dynamics. In the rest of this section, we are concerned with the stability of evolutionarily stable states.

\subsection{Evolutionary stability implies asymptotic stability}

Let $X^* \in \mathbb{X}$ be an ESS of the population game $(\mathbb{X}, F)$. We consider the discrete-time replicator dynamic mapping $X \in \mathbb{X}$ to $\hat{X} \in \mathbb{X}$ where
\begin{align*}
\hat{X}^i_j = X_i^j \frac{1 + \alpha F_i^j(X)}{1+ \alpha (1/\omega_i) X_i \cdot F_i(X)}, \mbox{ and where } i = 1, \ldots, n \mbox{ and } j \in S_i.
\end{align*}
Recall that, as noted earlier, if $\alpha$ is small enough, \eqref{Hedge} reduces to this equation. Our stability analysis of this evolution rule is based on the Lyapunov function $RE(X^*, X)$. Since our definition of relative entropy uses probability vectors we assume that $\sum_{i} \omega_i = 1$. As we argue later, this assumption is made without loss of generality. We have:
\begin{align*}
RE(X^*, \hat{X}) - RE(X^*, X) = \sum_{i=1}^n \sum_{j \in S_i} X^{*j}_i \ln\left( \frac{1+ \alpha (1/\omega_i) X_i \cdot F_i(X)}{1 + \alpha F_i^j(X)} \right)
\end{align*}
Since, for any $y > 0$, $\ln y \leq y-1$, we obtain
\begin{align*}
RE(X^*, \hat{X}) - RE(X^*, X) \leq \sum_{i=1}^n \sum_{j \in S_i} X^{*j}_i \left( \frac{1+ \alpha (1/\omega_i) X_i \cdot F_i(X)}{1 + \alpha F_i^j(X)} \right) - \sum_{i=1}^n \omega_i.
\end{align*}
Letting
\begin{align*}
G_{ij}(X) \doteq \frac{1 + \alpha F_i^j(X)}{1+ \alpha (1/\omega_i) X_i \cdot F_i(X)},
\end{align*}
the previous expression becomes
\begin{align}
RE(X^*, \hat{X}) - RE(X^*, X) &\leq \sum_{i=1}^n \sum_{j \in S_i} X^{*j}_i \frac{1}{G_{ij}(X)} - \sum_{i=1}^n \omega_i\notag\\
  &= \sum_{i=1}^n \sum_{j \in S_i} X^{*j}_i \frac{1}{G_{ij}(X)} - 1.\label{firstt_difference}
\end{align}
Our objective is to find a step size rule such that the right-hand-side of the previous inequality is negative. To that end, we use the Kantorovich inequality:
\begin{align*}
\left( \sum_{i=1}^n \sum_{j \in S_i} X^{*j}_i G_{ij}(X) \right) \left( \sum_{i=1}^n \sum_{j \in S_i} X^{*j}_i \frac{1}{G_{ij}(X)}  \right) \leq \left( \frac{\frac{1}{2} \left( G_{\min} + G_{\max} \right)}{\sqrt{G_{\min} G_{\max}}} \right)^2,
\end{align*}
where $G_{\min} = \min_{ij} G_{ij}(X)$ and $G_{\max} = \max_{ij} G_{ij}(X)$ where $i$ and $j$ vary in strategies in the support of $X$. Note that, by the arithmetic mean / geometric mean inequality,
\begin{align*}
\left( \frac{\frac{1}{2} \left( G_{\min} + G_{\max} \right)}{\sqrt{G_{\min} G_{\max}}} \right)^2 > 1.
\end{align*}
In order for the Kantorovich inequality to be useful in the analysis, we must then have
\begin{align*}
\sum_{i=1}^n \sum_{j \in S_i} X^{*j}_i G_{ij}(X) > \sum_{i=1}^n \omega_i = 1.
\end{align*}
Before proving this, a few observations are in order: We may assume, without loss of generality, that $F_i^j(X) > 0$, for all $i = 1, \ldots, n$, $j \in S_i$, and $X \in \mathbb{X}$, since adding a large enough constant to every $F_i^j(X)$ does not affect the evolutionary properties of a Nash equilibrium. Since multiplying all such entries by the same positive constant does not affect the evolutionary properties of an equilibrium either, we may similarly assume, without loss of generality that $F_i^j(X) < 1$, and, furthermore, that $\sum_{i=1}^n \omega_i = 1$. Under these assumptions, for all $i = 1, \ldots, n$, $0 < (1/\omega_i) X_i \cdot F_i(X) < 1$.

\begin{lemma}
\label{opopopop}
\begin{align*}
\sum_{i=1}^n \sum_{j \in S_i} X^{*j}_i G_{ij}(X) > \sum_{i=1}^n \omega_i = 1.
\end{align*}
\end{lemma}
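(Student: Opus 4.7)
The plan is to first simplify the left-hand side into a transparent form, and then reduce the inequality to a direct consequence of the ESS property. Since the denominator in $G_{ij}(X)$ does not depend on the pure-strategy index $j$, I would pull it out of the inner sum over $j \in S_i$ and invoke the two identities $\sum_{j \in S_i} X^{*j}_i = \omega_i$ and $\sum_{j \in S_i} X^{*j}_i F_i^j(X) = X^*_i \cdot F_i(X)$ to obtain
\[ \sum_{j \in S_i} X^{*j}_i G_{ij}(X) = \frac{\omega_i + \alpha X^*_i \cdot F_i(X)}{1 + \alpha (1/\omega_i) X_i \cdot F_i(X)}. \]
Writing $\omega_i$ over this common denominator and cancelling the shared $\alpha X_i \cdot F_i(X)$ contribution in the numerator, each summand collapses, after subtraction of $\omega_i$, to $\alpha (X^*_i - X_i) \cdot F_i(X) / (1 + \alpha (1/\omega_i) X_i \cdot F_i(X))$. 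The target inequality is therefore equivalent to
\[ \sum_{i=1}^n \frac{\alpha (X^*_i - X_i) \cdot F_i(X)}{1 + \alpha (1/\omega_i) X_i \cdot F_i(X)} > 0. \]

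The ESS hypothesis now enters directly. For $X \neq X^*$ in the ESS neighborhood $O$ we have $\sum_i (X^*_i - X_i) \cdot F_i(X) = (X^* - X) \cdot F(X) > 0$. The reduced left-hand side is a continuous function of $\alpha \geq 0$ that vanishes at $\alpha = 0$ and has derivative there equal to this positive ESS quantity, so by continuity it is strictly positive for all sufficiently small $\alpha > 0$. For a more explicit window, the normalizations $0 < F_i^j(X) < 1$ force each denominator into the interval $(1, 1 + \alpha)$; splitting the numerators $M_i = (X^*_i - X_i) \cdot F_i(X)$ into their positive and negative parts $M^{+}$ and $M^{-}$ (with $M^{+} - M^{-} > 0$ by the ESS), one may bound the sum from below by $\alpha ( M^{+}/(1+\alpha) - M^{-} )$, which is strictly positive whenever $\alpha < (M^{+} - M^{-})/M^{-}$ (and vacuously so when $M^{-} = 0$).

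The main obstacle I foresee is that this admissible window for $\alpha$ depends on the state $X$ and degenerates as $X \to X^*$, because both $M^{+}$ and $M^{-}$ tend to zero in that limit. The lemma is therefore really a statement conditioned on the step size being chosen small enough at each state, and the subsequent stability analysis, which feeds this lemma into a Kantorovich-type bound in the spirit of Lemma \ref{Kantorovichlemma}, will have to select a step size rule $\alpha = \alpha(X)$ that sits inside this state-dependent window throughout a neighborhood of $X^*$. This is the same type of step-size delicacy already encountered in the bimatrix argument of Theorem \ref{primary_theorem}, where a discriminant-based bound on $\alpha$ vanishes in the same way as $X$ approaches the ESS.
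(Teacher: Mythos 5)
Your proof is correct, and it arrives at the same pivotal expression as the paper---namely that the left-hand side minus $\sum_i \omega_i$ equals $\sum_{i=1}^n \alpha\,(X^*_i - X_i)\cdot F_i(X)\,/\,\bigl(1+\alpha(1/\omega_i)X_i\cdot F_i(X)\bigr)$---but by a cleaner route: the paper obtains this identity by expanding $1/(1+\alpha(1/\omega_i)X_i\cdot F_i(X))$ as an infinite geometric series and resumming, which forces an extra ``$\alpha$ small enough for convergence'' caveat that your direct common-denominator algebra avoids entirely. Where the two arguments genuinely part ways is in the last step. The paper replaces every coefficient $\alpha/(1+\alpha(1/\omega_i)X_i\cdot F_i(X))$ by the uniform lower bound $\alpha/(1+\alpha)$ and concludes immediately from $(X^*-X)\cdot F(X)>0$; but that substitution only bounds the sum from below when every population term $M_i=(X^*_i-X_i)\cdot F_i(X)$ is nonnegative, and the ESS condition only controls the sum $\sum_i M_i$, not its individual terms. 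Your positive/negative split is the honest version of this step: it buys correctness in the mixed-sign case at the price of a state-dependent admissible window for $\alpha$, so the lemma becomes conditional on the step size rather than holding for every $\alpha$ for which the paper's series converges. That conditioning is harmless downstream---the step size rule constructed in Theorem \ref{primary_theorem_general} is adaptive anyway, and your closing remark about the window shrinking near $X^*$ mirrors the paper's own discussion of \eqref{upper_bound} after Theorem \ref{primary_theorem}---but it is worth recording that in the single-population case ($n=1$) there is only one term, the sign issue disappears, and the two proofs coincide.
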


\begin{proof}
We have
\begin{align}
\sum_{i=1}^n \sum_{j \in S_i} X^{*j}_i G_{ij}(X) = \sum_{i=1}^n \omega_i \frac{1 + \alpha (1/\omega_i) X^*_i \cdot F_i(X)}{1 + \alpha (1/\omega_i) X_i \cdot F_i(X)}.\label{inter}
\end{align}
Referring to \eqref{inter}, taking a Taylor expansion of $1/(1 + \alpha (1/\omega_i) X_i \cdot F_i(X))$, and assuming $\alpha$ is sufficiently small to obtain a convergent infinite series, we obtain
\begin{align*}
\sum_{i=1}^n \omega_i \frac{1 + \alpha (1/\omega_i) X^*_i \cdot F_i(X)}{1 + \alpha (1/\omega_i) X_i \cdot F_i(X)} = \sum_{i=1}^n \omega_i \left( 1 + \alpha (1/\omega_i) X^*_i \cdot F_i(X) \right) \left[ 1 - \alpha (1/\omega_i) X_i \cdot F_i(X) + \cdots \right]
\end{align*}
The right hand side of the previous expression can be written as
\begin{align*}
\sum_{i=1}^n \omega_i \left( 1 + \alpha (1/\omega_i) X^*_i \cdot F_i(X) - \alpha (1/\omega_i) X_i \cdot F_i(X) \right) + \\
\sum_{i=1}^n \omega_i \left\{ \left( \alpha (1/\omega_i) X_i \cdot F_i(X) \right)^2 - \left( \alpha (1 / \omega_i) \right)^2 (X^*_i \cdot F_i(X)) (X_i \cdot F_i(X)) \right\} + \\
\sum_{i=1}^n \omega_i \left\{ \left( \alpha (1/\omega_i) \right)^3 (X^*_i \cdot F_i(X)) (X_i \cdot F_i(X))^2  - \left( \alpha (1/\omega_i) X_i \cdot F_i(X) \right)^3 \right\} +\\
\cdots
\end{align*}
which we may rewrite as 
\begin{align*}
\sum_{i=1}^n \omega_i + \sum_{i=1}^n \omega_i \sum_{k = 0}^{\infty} (\alpha (1/\omega_i))^k (X^*_i \cdot F_i(X) - X_i \cdot F_i(X)) (-1)^k (X_i \cdot F_i(X))^k.
\end{align*}
Rearranging terms we obtain
\begin{align*}
\sum_{i=1}^n \omega_i + \sum_{i=1}^n \omega_i (X^*_i \cdot F_i(X) - X_i \cdot F_i(X)) \sum_{k = 1}^{\infty} (\alpha (1/\omega_i))^k (-1)^{k-1} (X_i \cdot F_i(X))^{k-1},
\end{align*}
and straight algebra gives
\begin{align*}
\sum_{i=1}^n \omega_i + \sum_{i=1}^n \alpha (X^*_i \cdot F_i(X) - X_i \cdot F_i(X)) \sum_{k = 1}^{\infty} (\alpha (1/\omega_i))^{k-1} (-1)^{k-1} (X_i \cdot F_i(X))^{k-1},
\end{align*}
which is equivalent to
\begin{align*}
\sum_{i=1}^n \omega_i + \sum_{i=1}^n (X^*_i \cdot F_i(X) - X_i \cdot F_i(X)) \frac{\alpha}{1 + \alpha (1/\omega_i) X_i \cdot F_i(X)}.
\end{align*}
Recalling that, for all $i = 1, \ldots, n$, $0 < (1/\omega_i) X_i \cdot F_i(X) < 1$,
\begin{align*}
\frac{\alpha}{1 + \alpha (1/\omega_i) X_i \cdot F_i(X)} > \frac{\alpha}{1+\alpha},
\end{align*}
implying that 
\begin{align*}
\sum_{i=1}^n \omega_i &+ \sum_{i=1}^n (X^*_i \cdot F_i(X) - X_i \cdot F_i(X)) \frac{\alpha}{1 + \alpha (1/\omega_i) X_i \cdot F_i(X)} >\\
  &> \sum_{i=1}^n \omega_i + \frac{\alpha}{1+\alpha} \sum_{i=1}^n (X^*_i \cdot F_i(X) - X_i \cdot F_i(X))\\
  &= \sum_{i=1}^n \omega_i + \frac{\alpha}{1+\alpha} (X^*-X) \cdot F(X)\\
  &> \sum_{i=1}^n \omega_i,
\end{align*}
since $X^*$ is an ESS. Therefore,
\begin{align*}
\sum_{i=1}^n \sum_{j \in S_i} X^{*j}_i G_{ij}(X) > \sum_{i=1}^n \omega_i
\end{align*}
as desired. 
\end{proof}

Let us now consider the expression
\begin{align*}
\left( \frac{\frac{1}{2} \left( G_{\min} + G_{\max} \right)}{\sqrt{G_{\min} G_{\max}}} \right)^2 \frac{1}{\sum_{i=1}^n \sum_{j \in S_i} X^{*j}_i G_{ij}(X)} \equiv \left( \frac{\frac{1}{2} \left( G_{\min} + G_{\max} \right)}{\sqrt{G_{\min} G_{\max}}} \right)^2 \frac{1}{\sum_{i=1}^n \omega_i \frac{1 + \alpha (1/\omega_i) X^*_i \cdot F_i(X)}{1 + \alpha (1/\omega_i) X_i \cdot F_i(X)}},
\end{align*}
noting that our goal is to find a step size rule such that this expression is $< \sum_{i=1}^n \omega_i = 1$. To that end, we proceed in a fashion similar to that used in the previous (special) setting of symmetric bimatrix games. We, therefore, let
\begin{align*}
f(\alpha) = g(\alpha) h(\alpha)
\end{align*}
where
\begin{align*}
g(\alpha) = \left( \frac{\frac{1}{2} \left( G_{\min} + G_{\max} \right)}{\sqrt{G_{\min} G_{\max}}} \right)^2
\end{align*}
and
\begin{align*}
h(\alpha) = \left( \sum_{i=1}^n \omega_i \frac{1 + \alpha (1/\omega_i) X^*_i \cdot F_i(X)}{1 + \alpha (1/\omega_i) X_i \cdot F_i(X)} \right)^{-1}.
\end{align*}
Note that 
\begin{align}
f'(0) = g'(0) h(0) + g(0) h'(0).\label{derivative}
\end{align}
Let us consider $g(\alpha)$ first. Using the notation,
\begin{align*}
G_{\min}(X) \equiv \frac{1 + \alpha F^{\min_j}_{\min_i}(X)}{1+ \alpha (1/\omega_{\min_i}) X_{\min_i} \cdot F_{\min_i}(X)} \equiv \frac{1 + \alpha F_{\min}}{1 + \alpha \bar{F}_{\min}}
\end{align*}
and
\begin{align*}
G_{\max}(X) \equiv \frac{1 + \alpha F^{\max_j}_{\max_i}(X)}{1+ \alpha (1/\omega_{\max_i}) X_{\max_i} \cdot F_{\max_i}(X)} \equiv \frac{1 + \alpha F_{\max}}{1 + \alpha \bar{F}_{\max}},
\end{align*}
we obtain the following lemma. Before stating the lemma we note that, in this paper, the $O(\cdot)$ notation refers to the standard definition in real analysis and nonlinear optimization \citep[p. 72]{Zak} rather than that typically used in the theory of algorithms. 

\begin{lemma}
\label{regarding_g}
\begin{align*}
\left( \frac{\frac{1}{2} \left( G_{\min} + G_{\max} \right)}{\sqrt{G_{\min} G_{\max}}} \right)^2 = \frac{1}{2} \left(1 + \frac{1 + \alpha (F_{\min} + F_{\max} + \bar{F}_{\min} + \bar{F}_{\max}) + O(\alpha^2)}{1 + \alpha (F_{\min} + F_{\max} + \bar{F}_{\min} + \bar{F}_{\max}) + O(\alpha^2)} \right)
\end{align*}
\end{lemma}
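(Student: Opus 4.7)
The proposed approach is to reduce the claim to a routine first-order Taylor expansion in $\alpha$. I would start from the purely algebraic identity
\begin{align*}
\left( \frac{\frac{1}{2}(G_{\min} + G_{\max})}{\sqrt{G_{\min} G_{\max}}} \right)^2 = \frac{(G_{\min} + G_{\max})^2}{4\, G_{\min} G_{\max}} = \frac{1}{2}\left( 1 + \frac{G_{\min}^2 + G_{\max}^2}{2\, G_{\min} G_{\max}} \right),
\end{align*}
which follows by writing $(G_{\min} + G_{\max})^2 = G_{\min}^2 + 2\, G_{\min} G_{\max} + G_{\max}^2$ and dividing through. This rewriting reduces the lemma to showing that the ratio $(G_{\min}^2 + G_{\max}^2)/(2\, G_{\min} G_{\max})$ has the form $(1 + \alpha (F_{\min} + F_{\max} + \bar{F}_{\min} + \bar{F}_{\max}) + O(\alpha^2))/(1 + \alpha (F_{\min} + F_{\max} + \bar{F}_{\min} + \bar{F}_{\max}) + O(\alpha^2))$.

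Next, I would substitute the definitions of $G_{\min}$ and $G_{\max}$ and clear denominators, obtaining
\begin{align*}
\frac{G_{\min}^2 + G_{\max}^2}{2\, G_{\min} G_{\max}} = \frac{(1 + \alpha F_{\min})^2 (1 + \alpha \bar{F}_{\max})^2 + (1 + \alpha F_{\max})^2 (1 + \alpha \bar{F}_{\min})^2}{2\, (1 + \alpha F_{\min})(1 + \alpha F_{\max})(1 + \alpha \bar{F}_{\min})(1 + \alpha \bar{F}_{\max})}.
\end{align*}
The denominator expands directly to $2 + 2\alpha (F_{\min} + F_{\max} + \bar{F}_{\min} + \bar{F}_{\max}) + O(\alpha^2)$. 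For the numerator, I would use the elementary identity $[(1+\alpha a)(1+\alpha b)]^2 = 1 + 2\alpha(a+b) + O(\alpha^2)$, applied with $(a,b) = (F_{\min},\bar{F}_{\max})$ and $(a,b) = (F_{\max},\bar{F}_{\min})$, and sum the two resulting expressions; the cross terms combine to produce precisely $2 + 2\alpha (F_{\min} + F_{\max} + \bar{F}_{\min} + \bar{F}_{\max}) + O(\alpha^2)$.

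Dividing both numerator and denominator by $2$ yields the ratio in the form asserted by the lemma. I do not anticipate any substantive obstacle, as the computation is mechanical. The only mild point worth flagging is that the numerator and denominator agree to first order in $\alpha$, so the ratio is itself $1 + O(\alpha^2)$ and hence the entire expression equals $1 + O(\alpha^2)$. This is exactly the regime needed to balance this Kantorovich factor against the $\alpha$-linear gain produced by the ESS inequality in Lemma \ref{opopopop} when bounding the Lyapunov first difference, so the form in which the lemma is stated is already aligned with its intended downstream use.
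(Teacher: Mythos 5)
Your proposal is correct and follows essentially the same route as the paper's own proof: the algebraic rewriting $\left(\tfrac{1}{2}(G_{\min}+G_{\max})/\sqrt{G_{\min}G_{\max}}\right)^2 = \tfrac{1}{2} + \tfrac{1}{4}(G_{\min}^2+G_{\max}^2)/(G_{\min}G_{\max})$, followed by clearing denominators using the definitions of $G_{\min}$ and $G_{\max}$ and a first-order expansion in $\alpha$ of both numerator and denominator, is exactly the paper's computation. Your closing observation that the ratio is $1+O(\alpha^2)$, hence $g(0)=1$ and $g'(0)=0$, is likewise precisely how the lemma is used downstream.
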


\begin{proof}
We have
\begin{align*}
\left( \frac{\frac{1}{2} \left( G_{\min} + G_{\max} \right)}{\sqrt{G_{\min} G_{\max}}} \right)^2 &= \frac{\left( \frac{1}{2} (G_{\min} + G_{\max}) \right)^2}{G_{\min} G_{\max}}\\
  &= \frac{1}{4} \frac{G_{\min}}{G_{\max}} + \frac{1}{4} \frac{G_{\max}}{G_{\min}} + \frac{1}{2}\\
  &= \frac{1}{4} \frac{G_{\min}^2 + G_{\max}^2}{G_{\min} G_{\max}} + \frac{1}{2}.
\end{align*}
Therefore,
\begin{align*}
\frac{G_{\min}^2 + G_{\max}^2}{G_{\min} G_{\max}} &= \frac{\left( 1 + \alpha F_{\min} \right)^2 \left( 1+ \alpha \bar{F}_{\max} \right)^2 + \left( 1 + \alpha F_{\max} \right)^2 \left( 1+ \alpha \bar{F}_{\min} \right)^2}{\left(1 + \alpha F_{\min} \right) \left(1 + \alpha F_{\max}\right) \left(1 + \alpha \bar{F}_{\min} \right) \left(1 + \alpha \bar{F}_{\max}\right)}\\
  &= \frac{2 + 2 \alpha (F_{\min} + F_{\max} + \bar{F}_{\min} + \bar{F}_{\max}) + O(\alpha^2)}{1 + \alpha (F_{\min} + F_{\max} + \bar{F}_{\min} + \bar{F}_{\max}) + O(\alpha^2)}.
\end{align*}
Straight algebra completes the proof.
\end{proof}

Lemma \ref{regarding_g} implies $g(0) = 1$ and $g'(0) = 0$. Therefore, 
\begin{align}
f'(0) = h'(0).\label{derivative_2}
\end{align} 
Recalling from the proof of Lemma \ref{opopopop} that
\begin{align}
h(\alpha) = \left( \sum_{i=1}^n \omega_i + \sum_{i=1}^n (X^*_i \cdot F_i(X) - X_i \cdot F_i(X)) \frac{\alpha}{1 + \alpha (1/\omega_i) X_i \cdot F_i(X)} \right)^{-1},\label{h_important}
\end{align}
it is easy to verify that 
\begin{align*}
h'(0) = (X - X^*) \cdot F(X) < 0,
\end{align*}
by the assumptions that $X^*$ is an ESS and $\sum_i \omega_i = 1$, and, therefore, by \eqref{derivative_2}, that $f'(0) < 0$. To complete our argument we rely on the following lemmas.

\begin{lemma}
\label{MW-better-response_multi}
Let $X \in \mathbb{X}$, $\hat{X} = T(X)$ where $T$ corresponds to \eqref{main_general}, and assume $X$ is not a fixed point of $T$. Then, for all $\alpha > 0$,
\begin{align*}
\hat{X} \cdot F(X) > X  \cdot F(X).
\end{align*}
That is, $\hat{X}$ is a better response to $X$ than $X$ itself.
\end{lemma}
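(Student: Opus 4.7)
The plan is to compute the difference $\hat{X} \cdot F(X) - X \cdot F(X)$ directly and recognize the result as a (strictly positive) weighted variance. Expanding using the definition of the map,
\begin{align*}
\hat{X} \cdot F(X) - X \cdot F(X) &= \sum_{i=1}^n \sum_{j \in S_i} (\hat{X}_i^j - X_i^j) F_i^j(X)\\
   &= \sum_{i=1}^n \sum_{j \in S_i} X_i^j F_i^j(X) \left(\frac{1 + \alpha F_i^j(X)}{1 + \alpha (1/\omega_i) X_i \cdot F_i(X)} - 1\right).
\end{align*}
Setting $\mu_i \doteq (1/\omega_i) X_i \cdot F_i(X)$ and grouping by population, this simplifies to
\begin{align*}
\sum_{i=1}^n \frac{\alpha}{1 + \alpha \mu_i} \left(\sum_{j \in S_i} X_i^j F_i^j(X)^2 - \omega_i \mu_i^2\right).
\end{align*}

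Next I would recognize the inner expression as a variance. Writing $p_i^j \doteq X_i^j/\omega_i$ so that $\{p_i^j\}_{j \in S_i}$ is a probability distribution with expectation $\mu_i = \sum_{j} p_i^j F_i^j(X)$, the bracketed quantity equals $\omega_i \cdot \mathrm{Var}_{p_i}[F_i(X)]$, which is nonnegative and vanishes if and only if $F_i^j(X)$ is constant across $j \in S_+(X_i)$. Under the standing WLOG normalization $0 < F_i^j(X) < 1$, the denominators $1+\alpha \mu_i$ are strictly positive for every $\alpha > 0$, so every term in the outer sum is nonnegative.

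Finally, to obtain strict inequality I would invoke Theorem~\ref{fixed_points}: if $X$ is not a fixed point of $T$, then there exists some population $i$ and strategies $j_1, j_2 \in S_+(X_i)$ with $F_i^{j_1}(X) \neq F_i^{j_2}(X)$, which makes $\mathrm{Var}_{p_i}[F_i(X)] > 0$ for that $i$. Combined with $\alpha > 0$ and positivity of the denominator, this yields
\begin{align*}
\hat{X} \cdot F(X) - X \cdot F(X) = \sum_{i=1}^n \frac{\alpha \omega_i}{1 + \alpha \mu_i} \, \mathrm{Var}_{p_i}[F_i(X)] > 0,
\end{align*}
as desired. I do not anticipate a genuine obstacle here: the argument is essentially a discrete-time analogue of Fisher's fundamental theorem, and the only subtle points are keeping the bookkeeping between $\omega_i$, $p_i$, and $\mu_i$ consistent, and using Theorem~\ref{fixed_points} to convert the ``not a fixed point'' hypothesis into a strict variance bound in at least one population.
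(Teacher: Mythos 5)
Your proof is correct, but it takes a genuinely different route from the paper's. The paper proves Lemma \ref{MW-better-response_multi} by observing that the relative entropy $RE((1/\omega_i)\hat{X}_i,(1/\omega_i)X_i)$ is positive, which yields $\ln(1+\alpha(1/\omega_i)X_i\cdot F_i(X)) < \sum_j (1/\omega_i)\hat{X}_i^j\ln(1+\alpha F_i^j(X))$, and then caps the right-hand side via Jensen's inequality by $\ln(1+\alpha(1/\omega_i)\hat{X}_i\cdot F_i(X))$; combining and exponentiating gives $\hat{X}_i\cdot F_i(X) > X_i\cdot F_i(X)$ population by population. You instead expand the difference directly and identify it as
\begin{align*}
\hat{X}\cdot F(X)-X\cdot F(X)=\sum_{i=1}^n\frac{\alpha\,\omega_i}{1+\alpha\mu_i}\,\mathrm{Var}_{p_i}[F_i(X)],
\end{align*}
which is the discrete analogue of the classical variance computation for the replicator dynamic. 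Your route buys an explicit quantitative expression for the payoff gain (not merely its sign), and it handles strictness more carefully: you invoke Theorem \ref{fixed_points} to guarantee that at least one population has nonconstant payoffs on its support, whereas the paper's proof asserts strict positivity of the relative entropy for every population $i$, which fails for any population in which $F_i$ happens to be constant on $S_+(X_i)$ (the overall strict inequality survives after summing over $i$, but your bookkeeping makes this explicit). The paper's route, in exchange, avoids the algebraic expansion and stays within the relative-entropy toolkit already used for the Lyapunov analysis. One minor point: the positivity of the denominators $1+\alpha\mu_i$ for all $\alpha>0$ does rely on the standing normalization $0<F_i^j(X)<1$ (or at least nonnegativity of payoffs), which you correctly flag; the same implicit assumption is needed in the paper's proof for the logarithms to be defined.
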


\begin{proof}
The relative entropy between the probability vectors $(1/\omega_i) \hat{X}_i$ and $(1/\omega_i) X_i$ is positive:
\begin{align*}
RE((1/\omega_i)\hat{X}_i, (1/\omega_i)X_i) = \sum_{j \in S_i} \frac{1}{\omega_i} \hat{X}_i^j \ln \left( \frac{\hat{X}_i^j}{X_i^j} \right) = \sum_{j \in S_i} \frac{1}{\omega_i} \hat{X}_i^j \ln \left( \frac{1+\alpha F_i^j(X)}{1+\alpha (1/\omega_i) X_i \cdot F_i(X)} \right) > 0.
\end{align*}
Therefore,
\begin{align*}
\ln(1 + \alpha (1/\omega_i) X_i \cdot F_i(X)) < \sum_{j \in S_i} \frac{1}{\omega_i} \hat{X}_i^j \ln\left( {1 + \alpha F_i^j(X)} \right).
\end{align*}
Exponentiating both sides we get
\begin{align}
1 + \alpha (1/\omega_i) X_i \cdot F_i(X) < \exp \left\{ \sum_{j \in S_i} \frac{1}{\omega_i} \hat{X}_i^j \ln\left( {1 + \alpha F_i^j(X)} \right) \right\}.\label{qqqwwweee_}
\end{align}
Using Jensen's inequality, we obtain
\begin{align*}
\sum_{j \in S_i} \frac{1}{\omega_i} \hat{X}_i^j \ln\left( {1 + \alpha F_i^j(X)} \right) &\leq \ln \left( \sum_{j \in S_i} \frac{1}{\omega_i} \hat{X}_i^j \left( {1 + \alpha F_i^j(X)} \right) \right)\\
                                                                                    &= \ln \left( 1 + \alpha (1/\omega_i) \hat{X}_i \cdot F_i(X) \right).
\end{align*}
Therefore, exponentiating both terms we get
\begin{align}
\exp \left\{\sum_{j \in S_i} \frac{1}{\omega_i} \hat{X}_i^j \ln\left( 1 + \alpha F_i^j(X) \right) \right\} \leq 1 + \alpha (1/\omega_i) \hat{X}_i \cdot F_i(X).\label{eeewwwqqq_}
\end{align}
Combining \eqref{qqqwwweee_} and \eqref{eeewwwqqq_}, we get
\begin{align*}
1 + \alpha (1/\omega_i) X_i \cdot F_i(X) < 1 + \alpha (1/\omega_i) \hat{X}_i \cdot F_i(X)
\end{align*}
which implies that
\begin{align*}
\hat{X}_i \cdot F_i(X) > X_i \cdot F_i(X).
\end{align*}
Summing over $i = 1, \ldots, n$, completes the proof.
\end{proof}

\begin{lemma}
\label{interim_multi}
There exists $\bar{\alpha} > 0$, which may depend on $X$, such that, for all $0 < \alpha \leq \bar{\alpha}$, 
\begin{align*}
(X^* - \hat{X}) \cdot F(X) > 0,
\end{align*}
where $\hat{X} = T(X)$ where $T$ is defined according to \eqref{main_general}.
\end{lemma}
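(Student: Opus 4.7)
The strategy is a continuity argument in $\alpha$, pivoting on the fact that at $\alpha = 0$ the multiplicative weights map reduces to the identity, so the conclusion $(X^* - \hat X)\cdot F(X) > 0$ degenerates to the ESS inequality $(X^* - X)\cdot F(X) > 0$, which is strictly positive by hypothesis (implicit in the statement is that $X$ lies in the ESS neighborhood $O$ from Definition 2 and $X \neq X^*$, for otherwise $\hat X = X^* = X$ and the conclusion fails with equality).

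I would proceed in three steps. First, observe that for fixed $X$ the map $\alpha \mapsto \hat X = T(X)$ is a rational, hence continuous, function of $\alpha$ on a neighborhood of $0$, with
\[
\lim_{\alpha \to 0^+} \hat X_i^j \;=\; X_i^j \cdot \frac{1 + 0}{1 + 0} \;=\; X_i^j,
\]
so $\hat X \to X$ as $\alpha \to 0^+$. Second, by continuity of $F$ (which is part of the standing assumption on population games), the scalar function
\[
\varphi(\alpha) \;\doteq\; (X^* - \hat X(\alpha)) \cdot F(X)
\]
is continuous in $\alpha$ on $[0,\bar\alpha_0]$ for some $\bar\alpha_0>0$, and satisfies $\varphi(0) = (X^* - X)\cdot F(X) > 0$ by the hypothesis that $X^*$ is an ESS and that $X \in O \setminus \{X^*\}$. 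Third, by continuity at $\alpha = 0$ there exists $\bar\alpha \in (0, \bar\alpha_0]$ such that $\varphi(\alpha) > 0$ for all $\alpha \in (0,\bar\alpha]$, which is precisely the desired inequality.

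No real obstacle arises: the argument is essentially ``$T$ is the identity at $\alpha=0$, so for tiny $\alpha$ the update $\hat X$ is close enough to $X$ that the strict ESS inequality at $X$ is preserved after replacing $X$ by $\hat X$.'' If a more quantitative version were needed, one could instead expand
\[
\hat X_i^j - X_i^j \;=\; \alpha\, X_i^j\!\left( F_i^j(X) - (1/\omega_i) X_i \cdot F_i(X)\right) + O(\alpha^2),
\]
plug this into $(\hat X - X)\cdot F(X)$, and bound the first–order coefficient by a constant depending only on $X$ and $F$, thereby exhibiting an explicit $\bar\alpha$ in terms of $(X^* - X)\cdot F(X)$ and these quantities; but for the pointwise existence claim asserted by the lemma, the pure continuity argument above is both cleaner and sufficient. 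I would reserve the quantitative expansion for later, where the main theorem must combine this bound with Lemma \ref{MW-better-response_multi} to drive the Lyapunov difference strictly negative.
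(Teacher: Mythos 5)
Your proof is correct, and it rests on the same underlying idea as the paper's --- at $\alpha=0$ the map \eqref{main_general} is the identity, so $(X^*-\hat X)\cdot F(X)$ degenerates to the strict ESS inequality $(X^*-X)\cdot F(X)>0$, which survives a small perturbation in $\alpha$ --- but the execution differs. The paper carries out the quantitative route you only sketch at the end: it expands $1/(1+\alpha(1/\omega_i)X_i\cdot F_i(X))$ as a geometric series, collects the first-order term, and bounds its coefficient, namely $\sum_i (1/\omega_i)(X_i\cdot F_i(X))^2-\sum_i\sum_j X_i^j(F_i^j(X))^2$, using $\omega_i\le 1$ and Jensen's inequality. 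Note, however, that Jensen makes this coefficient \emph{non-positive}, so the paper's displayed chain ending in ``$>X^*\cdot F(X)-X\cdot F(X)$'' cannot be read literally for all $\alpha$; the intended (and correct) conclusion is only that the $O(\alpha)$ perturbation is dominated by the strictly positive zeroth-order term for $\alpha$ small enough --- which is precisely the content of your soft continuity argument. Your version is therefore cleaner and sidesteps that sign issue entirely, at the cost of giving no explicit handle on $\bar\alpha$; the paper's expansion is what one would fall back on if an explicit $\bar\alpha$ in terms of $(X^*-X)\cdot F(X)$ and bounds on $F$ were required downstream. You are also right to make explicit the hypothesis the paper leaves implicit, that $X$ lies in the ESS neighborhood $O$ with $X\neq X^*$; one small nitpick is that you do not need continuity of $F$ anywhere, since $F(X)$ is a fixed vector and only continuity of $\alpha\mapsto\hat X(\alpha)$ (a rational function with positive denominator near $\alpha=0$) is used.
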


\begin{proof}
We have
{\allowdisplaybreaks
\begin{align*}
(X^* - \hat{X}) \cdot F(X) &= \sum_{i=1}^n \sum_{j \in S_i} (X^{*j}_i - \hat{X}_i^j) F_i^j(X)\\
  &= \sum_{i=1}^n \sum_{j \in S_i} \left(X^{*j}_i - X_i^j \cdot \frac{1 + \alpha F_i^j(X)}{1 + \alpha (1/\omega_i) X_i \cdot F_i(X)} \right) F_i^j(X)\\
  &= X^* \cdot F(X) - \sum_{i=1}^n \sum_{j \in S_i} X_i^j F_i^j(X) \cdot \frac{1 + \alpha F_i^j(X)}{1 + \alpha (1/\omega_i) X_i \cdot F_i(X)}\\
  &= X^* \cdot F(X) - \sum_{i=1}^n \sum_{j \in S_i} X_i^j F_i^j(X) \cdot (1 + \alpha F_i^j(X)) (1 - \alpha (1/\omega_i) X_i \cdot F_i(X) + \cdots)\\
  &= X^* \cdot F(X) - \sum_{i=1}^n \sum_{j \in S_i} X_i^j F_i^j(X) \cdot (1 + \alpha F_i^j(X) - \alpha (1/\omega_i) X_i \cdot F_i(X)) + \cdots)\\
  &= X^* \cdot F(X) - X \cdot F(X) + \alpha \left( \sum_{i=1}^n (1/\omega_i) (X_i \cdot F_i(X))^2 - \sum_{i=1}^n \sum_{j \in S_i} X_i^j (F_i^j(X))^2 \right) + O(\alpha^2)\\
  &\geq X^* \cdot F(X) - X \cdot F(X) + \alpha \left( \sum_{i=1}^n (X_i \cdot F_i(X))^2 - \sum_{i=1}^n \sum_{j \in S_i} X_i^j (F_i^j(X))^2 \right) + O(\alpha^2)\\
  &> X^* \cdot F(X) - X \cdot F(X)\\
  &> 0.
\end{align*}
}
In the previous derivation we have taken the Taylor expansion of $1/(1+\alpha (1/\omega_i X_i \cdot F_i(X))$, noting that the corresponding infinite series is convergent provided $\alpha$ is sufficiently small. Note further that the third to last inequality is an implication that $\omega_i \leq 1$ due to the aforementioned assumption that $\sum_i \omega_i = 1$ and that the second to last inequality is an implication of Jensen's inequality.
\end{proof}

The rest of the argument is summarized in the proof of the following theorem.

\begin{theorem}
\label{primary_theorem_general}
If $X^* \in \mathbb{X}$ is an ESS of the population game $(\mathbb{X}, F)$, then there exists a step size rule $\alpha$ such that $X^*$ is asymptotically stable under \eqref{main_general}.
\end{theorem}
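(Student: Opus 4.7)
The plan is to complete the Lyapunov argument that the preceding lemmas have already set up. Take $V(X) = RE(X^*, X)$; this is positive definite in a neighborhood of $X^*$ by standard properties of the relative entropy. The chain \eqref{firstt_difference}, the Kantorovich inequality, and Lemma \ref{opopopop} furnish the bound $\dot V(X) \leq f(\alpha) - 1$, where $f(\alpha) = g(\alpha) h(\alpha)$ with $g$ and $h$ defined above. Lemma \ref{regarding_g} gives $g(0) = 1$ and $g'(0) = 0$; the form \eqref{h_important} gives $h(0) = 1$ and, via the ESS inequality $(X^* - X) \cdot F(X) > 0$, the derivative $h'(0) = (X - X^*) \cdot F(X) < 0$ on a neighborhood of $X^*$. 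Hence $f(0) = 1$ and $f'(0) < 0$, so by continuity there is a positive $\alpha$-window on which $f(\alpha) < 1$, and therefore $\dot V(X) < 0$. If we allowed the step size rule to depend on $X^*$, this would already finish the proof via Proposition \ref{asymptotic_stability}, exactly as in Theorem \ref{primary_theorem}.

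The main obstacle is producing a step size rule independent of $X^*$, and my plan is to use $\hat X = T(X)$ as a data-driven surrogate for $X^*$ in the step size computation. Define $\tilde f(\alpha)$ by replacing every occurrence of $X^*$ in $h(\alpha)$ by $\hat X$. Lemma \ref{MW-better-response_multi} gives $(\hat X - X) \cdot F(X) > 0$ for every $\alpha > 0$, so the analogue of $h'(0)$ for the surrogate is strictly negative and $\tilde f'(0) < 0$, producing a positive upper bound $\tilde\alpha(X)$ that depends only on $X$. Lemma \ref{interim_multi} then supplies $\bar\alpha(X) > 0$ on a neighborhood of $X^*$ for which $(X^* - \hat X) \cdot F(X) > 0$; combining this with Lemma \ref{MW-better-response_multi} yields $(X^* - X) \cdot F(X) > (\hat X - X) \cdot F(X) > 0$, so any $\alpha$ chosen below both $\tilde\alpha(X)$ and $\bar\alpha(X)$ automatically falls inside the original ($X^*$-dependent) window, so $\dot V(X) < 0$ holds along the actual iteration.

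The subtlety I expect to be hardest is the implicit dependence of $\hat X$ on $\alpha$: the surrogate $\tilde f(\alpha)$ is a function of a single variable whose coefficients move with $\alpha$ through $\hat X = T_\alpha(X)$. I would resolve this by a short bootstrap --- pick a trial step size, compute the induced $\hat X$, verify $\tilde f(\alpha) < 1$, and shrink $\alpha$ if necessary --- using continuity of $T$ in $\alpha$ and the fact that as $\alpha \downarrow 0$ we have $\hat X \to X$, so that the surrogate margins collapse to the true ones. A modest compactness argument on a punctured neighborhood of $X^*$ then shows the resulting rule is well-defined and bounded below by a continuous function of $X$ that does not vanish except at $X^*$ itself. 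With such a rule in hand, $\dot V$ is negative on a punctured neighborhood of $X^*$, $-\dot V$ is positive definite with respect to $X^*$, and Proposition \ref{asymptotic_stability} closes the proof.
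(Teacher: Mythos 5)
Your proposal is correct and follows essentially the same route as the paper's proof: your surrogate $\tilde f(\alpha)$ built by substituting $\hat X = T(X)$ for $X^*$ is precisely the paper's $\hat f(\alpha) = g(\alpha)\hat h(\alpha)$, the chain $(X^* - X)\cdot F(X) > (\hat X - X)\cdot F(X) > 0$ via Lemmas \ref{MW-better-response_multi} and \ref{interim_multi} is the same comparison the paper uses to get $f(\alpha) < \hat f(\alpha) < 1$, and your bootstrap is the paper's halving line-search terminating by \eqref{ineq_multi}. The only cosmetic difference is that the paper compares $h$ and $\hat h$ through the uniform bound $\alpha/(1+\alpha)$ on the aggregate inner products rather than by literal termwise substitution of $X^*$ in \eqref{h_important}, which sidesteps the population-by-population bookkeeping your phrasing would otherwise require.
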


\begin{proof}
We use $V(X) = RE(X^*, X)$ as our Lyapunov function. Therefore, $\dot{V}(X) = f(\alpha) - 1$. Recall from the proof of Lemma \ref{opopopop} that (since $\sum_i \omega_i$ = 1)
\begin{align*}
h(\alpha) < \left( 1 + (\alpha / (1+\alpha)) (X^*-X) \cdot F(X) \right)^{-1}.
\end{align*}
Letting $\hat{X}$ as in Lemma \ref{interim_multi} and using this lemma, we obtain
\begin{align*}
f(\alpha) = g(\alpha) h(\alpha) < g(\alpha) \hat{h}(\alpha) \equiv \hat{f}(\alpha),
\end{align*}
where
\begin{align*}
\hat{h}(\alpha) = \left( 1 + (\alpha / (1 + \alpha)) (\hat{X}(\alpha)-X) \cdot F(X) \right)^{-1}
\end{align*}
for all $0 < \alpha \leq \bar{\alpha}$, where $\bar{\alpha} > 0$. (In the previous expression showing the dependence of $\hat{X}$ on $\alpha$ is for emphasis.)  Lemma \ref{MW-better-response_multi} implies that $\hat{h}(\alpha) < 1$ for all $\alpha > 0$. Taking the Taylor expansions of $g(a)$ at $a=0$ and the Taylor expansion of $\hat{h}(\alpha)$, we obtain
\begin{align*}
\hat{f}(\alpha) = \left( g(0) + \alpha g'(0) + O(\alpha^2) \right) \cdot \left(1 - (\alpha/(1+\alpha)) (\hat{X}(\alpha)-X) \cdot F(X) + O(\alpha^2) \right).
\end{align*}
However, $g(0) = 1$ and $g'(0) = 0$, and, therefore,
\begin{align}
\hat{f}(\alpha) = 1 - (\alpha/(1+\alpha)) (\hat{X}(\alpha)-X) \cdot F(X) + O(\alpha^2),\label{ineq_multi}
\end{align}
implying that, for small enough $\alpha$, $\hat{f}(\alpha) < 1$, by Lemma \ref{MW-better-response_multi}. This analysis, therefore, suggests, the following step size rule: Select some initial value, say $\alpha_0$, for the step size, and compare the value $\hat{f}(\alpha_0)$ with one. If $\hat{f}(\alpha_0) < 1$, transition to the new state $\hat{X}(\alpha_0)$. Otherwise let $\alpha_1 = \alpha_0 / 2$, and repeat this process until $\hat{f}(\alpha_k) < 1$, where $\alpha_k = \alpha_0 / 2^k$ and $k$ is a positive integer. Lemma \ref{interim_multi} and \eqref{ineq_multi} ensure this process terminates in a finite number of steps. Using this step size rule (resembling {\em line search} in nonlinear optimization) and invoking Proposition \ref{asymptotic_stability} completes the proof.
\end{proof}

\subsection{Discussion on simpler step size rules}

In the rest of this section, we discuss the case of using simpler step size rules such as a constant step size in the discrete-time replicator dynamic. To that end, note that although 
\begin{align*}
g(\alpha) = \left( \frac{\frac{1}{2} \left( G_{\min} + G_{\max} \right)}{\sqrt{G_{\min} G_{\max}}} \right)^2 > 1,
\end{align*}
and referring to Lemma \ref{regarding_g}, $g(\alpha)$ can approximate the value $1$ with arbitrary accuracy for a small enough value of $\alpha$. The expression in Lemma \ref{regarding_g}, in fact, implies that $g(\alpha)$ approaches the value $1$ (from above) much faster (quadratically so) than $\alpha$ as $\alpha$ approaches $0$, which is desirable with respect to the convergence rate. Recall further that, by Lemma \ref{opopopop} and the Kantorovich inequality, the first difference of the Lyapunov function $RE(X^*, X)$, as expressed in \eqref{firstt_difference}, is negative everywhere but in small neighborhood, say $\mathcal{N}$, of $X^*$, which can be made arbitrarily small by an appropriate selection of the step size $\alpha$ (since, by Lemma \ref{regarding_g}, $g(a)$ approaches $1$ quadratically, from above, whereas $h(\alpha)$ approaches $1$, from below, only linearly as $\alpha \rightarrow 0$ for fixed $X$). 

Furthermore, since, as is well-known, the relative entropy between two probability distributions exceeds the square of the Euclidean distance between these distributions (assuming such distance is positive), the previous observation about the existence of a neighborhood $\mathcal{N}$ around $X^*$ where the first difference of the Lyapunov function is negative, implies the existence of an arbitrarily small neighborhood around $X^*$ (that is, a ball of arbitrarily small radius and center $X^*$ intersecting the evolution space $\mathbb{X}$), say $\mathcal{N}^*$, such that the discrete-time replicator dynamic is guaranteed to enter and remain thereafter even if a constant step size rule is used. We, therefore, expect that, in practice, an ESS can be approximated with arbitrary precision even by a constant step size rule. 

However, the previous analysis further implies that other step size rules, more sophisticated than using a constant step size but less demanding in terms of function evaluations than the previous one implying asymptotic stability, also suffice to ensure a property analogous to what we have just described. In this vein, since the setting we are considering is one where multiple populations interact, it is meaningful to generalize the discrete-time replicator dynamic such that each population adapts according to a population-dependent learning rate $\alpha_i$ as follows:
\begin{align*}
\hat{X}_i^j = X_i^j \frac{1 + \alpha_i F_i^j(X)}{1+ \alpha_i (1/\omega_i) X_i \cdot F_i(X)}, \mbox{ } i = 1, \ldots, n \mbox{ , } j \in S_i.
\end{align*}
Letting now 
\begin{align*}
\alpha_i = \kappa \frac{\omega_i}{X_i \cdot F_i(X)}, i = 1, \ldots, n
\end{align*}
where $\kappa$ is a small constant (which can always be chosen as such since $F$ is continuous and the domain of $F$ is a compact polytope), and recalling that $f(\alpha) = g(\alpha) h(\alpha)$ where $h(\alpha)$ and referring to either \eqref{inter} or \eqref{h_important}, we obtain (without resorting to the argument in Lemma \ref{opopopop}) that
\begin{align*}
\sum_{i=1}^n \sum_{j \in S_i} X^{*j}_i G_{ij}(X) > \sum_{i=1}^n \omega_i = 1,
\end{align*}
and, therefore, that the previous argument regarding the existence of $\mathcal{N}^*$ for the constant step size rule applies to this latter (adaptive) step size rule as well. In closing this section, we should note that we cannot preclude the case that asymptotic stability using a small enough constant step size rule may well be happening, and our inability to prove such a result could be due to the analytical approach we have taken. However, our arguments correspond, we believe, to a rather compelling case that the constant step size rule could work well in practice.

\section{Multiplicative weights and nonlinear optimization}
\label{addendum_optimization}

In this section, we relate multiplicative weights with concepts from nonlinear optimization theory. To that end, we first relate the ESS with the notion of a strict local optimum. Our primary objective, in this vein, is to further motivate the generality of our results on the asymptotic stability of the ESS under multiplicative weights dynamics (in particular, Theorem \ref{primary_theorem_general}) with the general nonlinear optimization framework. In the same vein, we subsequently relate the multiplicative weights dynamic with the gradient of a continuously differentiable objective function.

\subsection{Strict local optima and the ESS}

Population games are a quite general modeling environment. A special case of a population game is a nonlinear programming problem wherein the constraint set is a simplotope and the objective function is continuously differentiable. (This is easy to see as follows: Since the objective function is continuously differentiable, the corresponding gradient field obtained by taking the gradient of the objective function is continuous. Therefore, viewing such an optimization problem as a particular instance of a population game is nothing but an alternative means of representing the same problem.) An important solution concept in nonlinear programming is the strict local optimum (minimum or maximum), which corresponds to a point of the constraint set whose objective value is strictly lower (assuming a minimum) than the objective values of all nearby points (in a neighborhood). The notion of a strict local minimum does not coincide with the notion of an evolutionarily stable state for arbitrary objective functions; interesting counterexamples can be demonstrated through inspection of objective functions considered by \cite{Absil}. 

Considering these examples it is natural to ask to what extent the notion of evolutionarily stable states in potential games (that is, games admitting a potential function), or equivalently continuously differentiable nonlinear programming problems over a simplotope, coincide with strict local optima. \cite{StableGames} show that if the potential function is strictly convex (at least locally), then these notions coincide. Furthermore, as mentioned earlier in the introduction, in doubly symmetric bimatrix games, that is, symmetric bimatrix games $(C, C^T)$ where $C$ is a symmetric matrix ($C = C^T$), strict local maxima are known to coincide with evolutionarily stable states (or, more precisely, strategies in this special setting) as shown by \cite{Hofbauer-Sigmund}. In fact, as mentioned in the introduction, the general quadratic programming problem over a polytope can be cast in this setting. But these are not the only cases in which such coincidence between strict local optima and evolutionarily stable states happens. 

Notions of convexity of continuously differentiable scalar functions can be captured through notions of monotonicity of their corresponding gradient fields (viewed as operators). For example, convex functions give rise to monotone gradient fields. Monotone vector fields are more general than convex functions, but a monotone vector field that admits a potential function gives rise to a convex function. In fact, the aforementioned work of \cite{StableGames} on stable games was motivated by observations along these lines: Stable games generalize potential games with a concave potential function. Note that strictly stable games give rise to a GESS.

Quasimonotone operators generalize monotone operators: If a quasimonotone operator admits a potential function, the corresponding scalar field is a quasiconvex or quasiconcave function; for example, the Gaussian probability distribution is quasiconcave. \cite{Konnov} shows that strictly quasimonotone operator $F$ over a compact convex domain $\mathbb{X}$ implies the existence of a unique element $X^* \in \mathbb{X}$ such that, for all $X \in \mathbb{X}$, $(X^* - X) \cdot F(X) > 0$. If $\mathbb{X}$ is simplotope, $(\mathbb{X}, F)$ is a population game, and Konnov's result implies $X^*$ is a GESS. A straightforward implication of this result is that the stationary point of a (locally) strictly quasiconvex function is also an ESS. 

\if(0) 

In the interest of gaining intuition in the problem at hand we provide a simple  proof using elementary notions of analysis and optimization that an interior point of a locally quasiconvex function is an ESS. Although this is special case of Konnov's result, our proof is more elementary. Recall that a function is called quasi-convex, if its {\em sublevel sets} are convex, where the sublevel set of $f : \mathbb{X} \rightarrow \mathbb{R}$ at a feasible point $X$, denoted as $L_f(x)$, is
\begin{align*}
L_f(x) = \{ y \in \mathbb{X} | f(y) \leq f(x) \}.
\end{align*}
In the next proposition, we assume for simplicity that the point in the constraint set that is a strict local minimum is the origin, which avoids cluttering the proof with unnecessary notation. Recalling the definition of evolutionary stability, and noting that we assume for simplicity that $x^* \equiv 0$, the condition the following theorem implies is the minimization analogue of evolutionary stability. In our proof, we use a rather fundamental result in nonlinear optimization theory (for example, see \citep{Bertsekas}) that the gradient at an interior point of the constraint set of a nonlinear program (whose objective function is continuously differentiable) is perpendicular to the level set of that point (an analogous rather technical result also holds for boundary points of the constraint set).

\begin{proposition}
\label{laskdfhjdhfnnncmvn}
Let $f: \mathbb{X} \subseteq \mathbb{R}^m \rightarrow \mathbb{R}$ be $C^1$, $\mathbb{X}$ is a convex subset of $\mathbb{R}^m$, $\mathbb{X}$ contain the origin, let the origin be an interior strict local minimum of $f$, and suppose $f$ is locally quasi-convex near the origin. Then, there exists a neighborhood $O$ of the origin such that, for all $x \in O$, $x^T \nabla f(x) > 0$.
\end{proposition}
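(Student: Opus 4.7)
The plan is to exploit the convexity of the sublevel sets of $f$ (guaranteed by local quasi-convexity) together with the fact, invoked in the problem statement, that at an interior point the gradient is perpendicular to the level set and points in the direction of increase. In geometric terms, I would like to show that $\nabla f(x)$ serves as an outward normal to the convex sublevel set at its boundary point $x$, and that the origin lies strictly inside that sublevel set.

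First I would shrink $O$ to a convex neighborhood of the origin on which simultaneously: (a) $f$ is quasi-convex, (b) the origin is the unique minimizer, i.e., $f(0) < f(x)$ for every $x \in O$ with $x \neq 0$, and (c) $O$ lies in the interior of $\mathbb{X}$. All three requirements can be arranged by a single shrinking, using continuity of $f$ and the hypothesis that $0$ is an interior strict local minimum. Fix $x \in O$ with $x \neq 0$ and consider the sublevel set $L(x) = \{y \in O : f(y) \leq f(x)\}$. Two observations drive the argument: (i) $x$ is on the relative boundary of $L(x)$ in $O$, because along the segment from $x$ toward $0$ the function drops strictly below $f(x)$ by strict local minimality combined with continuity; and (ii) the origin lies in the \emph{interior} of $L(x)$, since $f(0) < f(x)$ strictly and $f$ is continuous, so a whole ball about $0$ stays inside $L(x)$.

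Next I would invoke local quasi-convexity to conclude that $L(x)$ is convex. A convex set admits a supporting hyperplane at each boundary point, and by the gradient--level-set orthogonality result cited in the statement, the supporting normal at the interior (in $\mathbb{X}$) boundary point $x$ can be taken to be $\nabla f(x)$ itself. Concretely, $(y - x) \cdot \nabla f(x) \leq 0$ for every $y \in L(x)$, with strict inequality whenever $y$ lies in the interior of $L(x)$ (and $\nabla f(x) \neq 0$). Applying this with the interior point $y = 0$ gives $-x \cdot \nabla f(x) < 0$, which is exactly the desired $x^{T} \nabla f(x) > 0$.

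The most delicate point is the proviso ``$\nabla f(x) \neq 0$''. At the minimum one has $\nabla f(0) = 0$ by the first-order condition, and a priori a $C^{1}$ quasi-convex function could have further critical points arbitrarily close to $0$, at which the supporting-hyperplane argument would only yield $\geq 0$ rather than strict inequality. I expect this to be the main obstacle. The cleanest resolution is to strengthen the hypothesis slightly to \emph{strict} (or semistrict) quasi-convexity, under which the set of critical points coincides with the set of minimizers, so the uniqueness of the minimum in $O$ forces $\nabla f(x) \neq 0$ for $x \in O \setminus \{0\}$. Alternatively one argues directly: if $\nabla f(x_{0}) = 0$ at some $x_{0} \in O \setminus \{0\}$, the first-order condition at $x_{0}$ combined with convexity of the sublevel sets and the strict drop $f(0) < f(x_{0})$ produces a contradiction along the segment from $x_{0}$ to $0$. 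Either route closes the argument and gives the conclusion.
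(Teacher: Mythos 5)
Your argument is correct in its essentials and takes a genuinely different, and arguably cleaner, route than the paper's. The paper proceeds by contradiction: assuming $x^T \nabla f(x) \leq 0$ for some $x$ in the neighborhood, it uses the mean value theorem to produce a point $y = \epsilon x$ on the segment $[0,x]$ with $y^T \nabla f(y) > 0$, then the intermediate value theorem to locate a $z$ between $y$ and $x$ with $z^T \nabla f(z) = 0$ while $f$ increases strictly along the segment just before $z$; it then argues that the convex sublevel set at level $f(z)$, which contains that piece of segment and is supported at $z$ by the hyperplane normal to $\nabla f(z)$, forces $\nabla f(z) = 0$, contradicting a standing assumption that the gradient is nonzero away from the origin. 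Your direct route --- the origin is an interior point of the convex sublevel set $L(x)$ because $f(0) < f(x)$ and $f$ is continuous; the first-order characterization of quasi-convexity gives $(y-x) \cdot \nabla f(x) \leq 0$ for every $y \in L(x)$; and strictness at interior points (when $\nabla f(x) \neq 0$) applied at $y=0$ yields $x^T \nabla f(x) > 0$ --- dispenses with the MVT/IVT machinery and makes the geometry transparent. (Note that your observation (i), that $x$ lies on the boundary of $L(x)$, is not actually needed: the first-order characterization $f(y) \leq f(x) \Rightarrow (y-x)\cdot\nabla f(x) \leq 0$ holds at every $x$ in the convex region of quasi-convexity.)

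Both proofs hinge on the same delicate point, which you correctly isolate and the paper does not: one must know that $\nabla f(x) \neq 0$ for $x \neq 0$ in the chosen neighborhood. The paper simply declares that such a neighborhood ``exists by the assumption of the theorem,'' which does not follow from $C^1$ regularity, strict local minimality, and quasi-convexity alone; in one dimension one can construct an even, $C^1$, quasi-convex function with a strict minimum at $0$ whose derivative vanishes on intervals accumulating at the origin, so the stated conclusion can genuinely fail without an extra hypothesis. Your first remedy --- strengthening the hypothesis so that critical points must be minimizers --- is the right fix, though the appropriate class is pseudoconvexity rather than strict quasi-convexity (the function $t \mapsto t^3$ is strictly quasi-convex yet has a non-minimizing critical point). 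Your second, ``direct'' remedy does not close the gap: at a putative critical point $x_0 \neq 0$ the first-order condition $(y - x_0) \cdot \nabla f(x_0) \leq 0$ is vacuously satisfied, so no contradiction is produced from it. In short, your supporting-hyperplane argument is a sound and more economical alternative to the paper's contradiction argument, and it has the additional merit of flagging the one step the paper leaves unjustified.
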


\begin{figure}[tb]
\centering
\includegraphics[width=14cm]{figure.eps}
\caption{\label{sldfksdkjfhdjfh}
Unless $F(z) = \nabla f(z)$ is the zero vector, it is impossible for the level set at $z$ (which is convex by assumption) to contain the entire line segment between $z$ and $y$. }
\end{figure}

\begin{proof}
Let $O$ be a neighborhood of the origin such that, for all $x \in O - \{0\}$, $f(0) < f(x)$ and such that, for all $x \in O - \{0\}$, $\nabla f(x) \neq 0$. (Such a neighborhood exists by the assumption of the theorem.) Suppose, for the sake of contradiction, that there exists $x \in O$ such that $x^T \nabla f(x) \leq 0$. Then, by the mean value theorem, there exists $\epsilon \in (0, 1)$ such that $f(0) - f(x) = - x^T \nabla f(\epsilon x)$. Since $f(0) - f(x) < 0$ by the assumption that the origin is a strict local minimum, this is equivalent to saying that there exists $y$ in between the origin and $x$ such that $y^T \nabla f(y) > 0$. Since $f$ is $C^1$, $\nabla f$ is continuous, and by the intermediate value theorem, there exists $z$ in between $y$ and $x$ such that $z^T \nabla f(z) = 0$ and such that, for all $w$ in between $y$ and $z$, $w^T \nabla f(w) > 0$. This is equivalent to saying that $f(\cdot)$ must be strictly increasing in the direction $x$ in between $y$ and $z$. Therefore, the sublevel set $S = \{ \psi \in O | f(\psi) \leq f(z) \}$ must contain the set of all such aforementioned $w$. However, $S$ is necessarily perpendicular to $\nabla f(z)$. Furthermore, $S$ is convex by the assumption that the origin is locally quasi-convex. Therefore, it must necessarily be the case that $\nabla f(z) = 0$ (see Figure~\ref{sldfksdkjfhdjfh}), which contradicts the assumption of the lemma that, for all $x \in O$, $\nabla f(x) \neq 0$, and, therefore, also the hypothesis that there exists $x \in O$ such that $x^T \nabla f(x) \leq 0$.
\end{proof}

Note that the origin may not belong to the simplotope domain of a population game; for example, the origin lies outside the probability simplex. However, a simple translation of the domain brings the problem to the setting of the previous theorem. 

\fi

In the rest of this section, we continue our investigation on nonlinear optimization.

\subsection{Multiplicative weights and gradient descent}

An elementary result in calculus states that, in the setting of unconstrained optimization, the gradient of a continuous differentiable objective function points to the direction of maximum increase of the objective function. It is, therefore, not surprising that a variety of nonlinear programming algorithms are based on the gradient. For example, in minimization problems, the iterates of {\em gradient descent algorithms} correspond to taking linear steps along the direction of the gradient (and projecting the point obtained after such a linear step to the constraint set of the corresponding problem, assuming such a point is infeasible). In the rest of this section, we show that the steps multiplicative weights dynamics take in the constraint set of nonlinear optimization problems bear close kinship to such gradient-based methods. Recalling the aforementioned notion of forward variance, which can be easily shown to hold in the general setting, multiplicative weights updates do not require a projection step. Taking on a minimization perspective, we show that multiplicative weights dynamics initially descend the objective function assuming a small enough step size. In this sense, it would not be an overstatement to say that they correspond to a form of gradient in arbitrary nonlinear programs (wherein the constraint set is a simplotope).

\subsubsection{Preliminaries}

We have insofar represented population games from a ``maximization'' perspective wherein the incentive structure is represented by a (vector-valued) payoff function. Population games are sometimes more appropriately represented considering an equivalent ``minimization'' perspective using a (vector-valued) {\em cost function} (the payoff function's negative). We are going to denote such a cost function by $c$ whose domain is the space of feasible states $\mathbb{X}$. We are going to assume the same notational convention as in payoff vectors, namely, $c_i^j(x)$ is the cost of strategy $j$ in population $i$ at state $x$. In this setting, \eqref{Hedge} becomes
\begin{align}
T_i^j(x) = x_i^j \frac{\exp\{ - \alpha c_i^j(x) \}}{(1/\omega_i) \sum_{j \in S_i} x_i^j \exp\{ - \alpha c_i^j(x) \}}, \mbox{ where } i = 1, \ldots, n \mbox{ and } j \in S_i.\label{Hedge_minimization}
\end{align}
We note in passing that viewing population games from the perspective of minimizing costs is a particularly appropriate perspective in the setting of selfish routing games we consider in the next section. In the setting of selfish routing, states correspond to network flows, populations correspond to commodities, and strategies within in population/commodity correspond to network paths. The objective is to minimize delay from source to destination, and costs are interpreted as such.

We consider a population game whose incentive structure can be represented by a continuously differentiable scalar function, say $\Phi$, whose domain is $\mathbb{X}$. As mentioned earlier, selfish routing games, for example, admit a potential function that is, in fact, convex. At present we do not assume convexity though. The algebraic expression of the potential function can be complicated, but all we care about is that its gradient equals the cost function $c$, that is, at any feasible state $x \in \mathbb{X}$, $\nabla \Phi (x) = c(x)$. Let us further recall from optimization theory the notion of a sublevel set of a scalar function: The sublevel set of $\Phi$ at a feasible state $x$, denoted as $L_\Phi(x)$, is
\begin{align*}
L_\Phi(x) = \{ y \in \mathbb{X} | \Phi(y) \leq \Phi(x) \}.
\end{align*}
If $\Phi$ is, for example, a convex or quasi-convex function, its sublevel sets are convex sets but, in general sublevel sets can take on arbitrary shapes. Note that in unconstrained optimization problems the gradient of the objective function at a point of the domain is perpendicular to the corresponding level set at that point. Since we are concerned with a constrained optimization problem, the analogous statement is that the projection of the cost vector, $c(x)$, at an interior state $x$ at the tangent space of the constraint space is perpendicular to the sublevel set $L_\Phi(x)$ at $x$.

\subsubsection{Multiplicative weights as a descent algorithm}

Considering a potential game with potential function $\Phi : \mathbb{X} \rightarrow \mathbb{R}$, the next theorem demonstrates that Hedge (as given by the update rule in \eqref{Hedge_minimization}, the minimization analogue of \eqref{Hedge}) is a descent algorithm (in the sense of nonlinear programming), assuming a sufficiently small step size, for the game's potential function. Note, in this vein, that given a potential game and starting at any state $x$ in the space of feasible states $\mathbb{X}$, \eqref{Hedge_minimization} defines a (nonlinear) curve $\mathcal{C}_x: (0, +\infty) \rightarrow \mathbb{X}$ on the space of feasible states parametrized by the learning rate $\alpha$, and, depending on the choice of $\alpha$, the next iterate is generated by moving to a point along $\mathcal{C}_x$. (Contrast this with the {\em linear step} that the majority of nonlinear programming algorithms take.) Since the entire curve $\mathcal{C}_x$ lies on the space $\mathbb{X}$ of feasible flows, Hedge does not require an extra projection step to achieve feasibility (which is a property we earlier called forward invariance). The next theorem shows that $\mathcal{C}_x(\alpha)$ defines (initially, for a small value of the step size) a descent direction in the sense that the tangent of this curve at $x$ (corresponding to $\alpha = 0$) is a direction wherein the objective function decreases locally.

\begin{theorem}
Given any interior $x \in \mathbb{X}$ that is not a fixed point, the tangent of $\mathcal{C}_x$ at $x$ is a descent direction of $\Phi$.
\end{theorem}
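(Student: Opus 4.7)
The plan is to compute the tangent vector $v \equiv \mathcal{C}_x'(0)$ in closed form and then evaluate the directional derivative $\nabla\Phi(x)\cdot v = c(x)\cdot v$, showing it is strictly negative. First I would apply logarithmic differentiation to the coordinates of the Hedge update from \eqref{Hedge_minimization}. Writing
\begin{align*}
\ln T_i^j(\alpha) = \ln x_i^j - \alpha\, c_i^j(x) - \ln\!\left( (1/\omega_i) \sum_{k \in S_i} x_i^k \exp\{-\alpha\, c_i^k(x)\} \right),
\end{align*}
differentiating in $\alpha$, and evaluating at $\alpha = 0$ (using $T_i^j(0) = x_i^j$ together with $(1/\omega_i)\sum_{k \in S_i} x_i^k = 1$), I obtain
\begin{align*}
v_i^j \;=\; x_i^j \left( (1/\omega_i)\, x_i \cdot c_i(x) - c_i^j(x) \right), \qquad i=1,\ldots,n,\ j \in S_i.
\end{align*}
A direct check that $\sum_{j \in S_i} v_i^j = 0$ for each $i$ confirms that $v$ lies in the tangent space of $\mathbb{X}$ at $x$, consistent with the forward invariance of the curve $\mathcal{C}_x$.

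Next I would compute $c(x) \cdot v$. Introducing the population-normalized distributions $p_i \equiv x_i / \omega_i$ (which are genuine probability vectors on $S_i$ since $x$ is interior), the inner product telescopes as
\begin{align*}
c(x) \cdot v \;=\; \sum_{i=1}^n \omega_i \left( \mathbb{E}_{p_i}[c_i(x)]^2 \;-\; \mathbb{E}_{p_i}[c_i(x)^2] \right) \;=\; -\sum_{i=1}^n \omega_i \operatorname{Var}_{p_i}(c_i(x)),
\end{align*}
where the moments are taken coordinatewise over $j \in S_i$. Each summand is nonpositive, so $c(x) \cdot v \leq 0$ unconditionally.

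The remaining step is strictness. If every summand vanished, then for each $i$ the cost $c_i^j(x)$ would be constant across all $j \in S_i$ that carry positive mass. Because $x$ is interior, every such $j$ carries positive mass, so this would imply that $c_i^j(x)$ is independent of $j$ for every $i$, which by Theorem \ref{fixed_points} (applied to the payoff vector $-c$, whose fixed-point set agrees with that of the Hedge dynamic by the subsequent theorem in Section \ref{asymptotic_stability_general_case}) would make $x$ a fixed point, contradicting the hypothesis. Hence at least one variance is strictly positive, and $\nabla\Phi(x)\cdot v = c(x)\cdot v < 0$, which is precisely the definition of a descent direction. I do not anticipate any real obstacle; the only subtlety is being careful to invoke the correct fixed-point characterization for the exponential dynamic \eqref{Hedge_minimization} (which coincides with that of the polynomial dynamic \eqref{main_general}).
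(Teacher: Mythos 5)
Your proposal is correct and follows essentially the same route as the paper: both compute the tangent vector $v_i^j = x_i^j\bigl((1/\omega_i)\,x_i\cdot c_i(x) - c_i^j(x)\bigr)$ and reduce $c(x)\cdot v$ to $\sum_i \omega_i\bigl(\mathbb{E}_{p_i}[c_i]^2 - \mathbb{E}_{p_i}[c_i^2]\bigr) \le 0$ via Jensen; the paper routes the computation through the projected gradient $[c(x)]^+$ and shows the extra term vanishes, whereas you observe directly that $v$ lies in the tangent space, which is equivalent. If anything, your treatment of strictness is more careful than the paper's, which asserts strict inequality from Jensen without explicitly invoking the non-fixed-point hypothesis through the fixed-point characterization as you do.
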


\begin{proof}
The tangent of $\mathcal{C}_x$ at $x$ is $(d \mathcal{C}_x / d \alpha)(0)$. The normal of the level set of $\Phi$ at $x$ is $[\nabla \Phi(x)]^+ = [c(x)]^+$ where $[\cdot]^+$ denotes the projection of $c(x)$ onto the tangent space of $\mathbb{X}$. To prove the theorem, it suffices to prove that the angle between these vectors is obtuse, i.e., $(d \mathcal{C}_x / d \alpha)(0) \cdot [c(x)]^+ < 0$. By straight calculus, we have
\begin{align*}
\frac{d \mathcal{C}_x}{d \alpha}(0) = \left[ x_i^j \left( \frac{1}{\omega_i} \sum_{k \in S_i} x^k_i c^k_i(x) - c^j_i(x) \right) \right]_{ij}.
\end{align*}
(The expression on the right hand side is a vector of size $m = \sum_i m_i$ whose element corresponding to population $i$ and strategy $j$ is given in the brackets. Note that, as a function of $x$, this vector field precisely corresponds to the vector field of the continuous-time replicator dynamic.) Furthermore, \cite{PopulationGames} shows that
\begin{align*}
[c(x)]^+ = \left[ c_i^j(x) - \mathbf{1} \left( \frac{1}{m_i} \sum_{k \in S_i} c_k^j(x) \right) \right]_{ij}
\end{align*}
where $\mathbf{1}$ is an $m \times 1$ vector of ones. Therefore,
\begin{align*}
\frac{d \mathcal{C}_x}{d \alpha}(0) \cdot c(x) &= \sum_{i=1}^n \omega_i \left( \left( \frac{1}{\omega_i} \sum_{j \in S_i} x^j_i c^j_i(x) \right)^2 - \frac{1}{\omega_i} \sum_{j \in S_i} x_i^j \left( c^j_i(x) \right)^2 \right)\\
&- \sum_{i = 1}^n \left( \frac{1}{m_i} \sum_{k \in S_i} c_k^j(x) \right) \sum_{j \in S_i} x_i^j \left( \frac{1}{\omega_i} \sum_{k \in S_i} x^k_i c^k_i(x) - c^j_i(x) \right)\\
&= \sum_{i=1}^n \omega_i \left( \left( \frac{1}{\omega_i} \sum_{j \in S_i} x^j_i c^j_i(x) \right)^2 - \frac{1}{\omega_i} \sum_{j \in S_i} x_i^j \left( c^j_i(x) \right)^2 \right) < 0
\end{align*}
by Jensen's inequality.
\end{proof}

To understand the previous theorem recall that the vector $[c(x)]^+$ at $x$ is perpendicular to the level set $L_\Phi(x)$ at $x$. The theorem then states that for a small enough value of the step size $\alpha$, the next iterate lies inside $L_\Phi(x)$, and the objective value is decreased. We should note, however, that $\alpha$ continues to increase $\mathcal{C}_x(\alpha)$ is not guaranteed to remain in the level set $L_\Phi(x)$. The importance of carefully selecting the step size is further argued in the next section.

\section{Selfish routing games}
\label{selfish_routing}

In this section, we are concerned with the setting of selfish routing games whose origins can be traced to road traffic research \citep{Wardrop}. They were popularized in the computer science literature by the work of \cite{HowBadisSelfishRouting} on the price of anarchy \citep{PriceOfAnarchy}. Selfish routing games are population games where each population corresponds to a commodity from a source node to a destination node in a directed graph whose arcs correspond to congestible resources (e.g., communication links in a computer network).

This class of games is equipped with a convex potential function, which is strictly convex under the (reasonable in many practical settings) assumption that link cost functions are strictly increasing. By an argument analogous that used in the proof of Theorem \ref{primary_theorem_general}, the (unique) stationary point of this potential function (which corresponds to the Nash/Wardrop equilibrium), is (interior) globally asymptotically stable under Hedge (and the discrete-time replicator dynamic). 

In the sequel, we take this result further. In particular, our objectives, in this vein, are twofold: First, we demonstrate an example of a simple selfish routing game in which there exists a step size such that the induced behavior of Hedge is chaotic (implying care must be taken in selecting the step size in practice). Second, we analyze a simple selfish routing game (consisting of parallel arcs between a pair of vertices) using linearization, proving that all eigenvalues of the corresponding Jacobian matrix lie inside the unit disc (in all but a number of systems that have measure zero). Our analysis implies the stronger results of asymptotic stability under a small enough constant step size as well as a geometric rate of convergence (in this particular setting), however, the example is also used to demonstrate the difficulty of employing this type of analysis (based on linearization) in the generality of the discrete setting of evolutionary dynamics considered in this paper.

In closing this section, we provide an intuitive interpretation of evolutionary stability in a practical, from a computer science perspective, problem related to network systems design and the Internet. Although the notion of evolutionary stability was originally motivated and developed from the perspective of mathematical biology, we argue that computer science and networked systems design research can, in principle, benefit by gaining closer familiarity with this concept. Selfish routing games provide an appropriate framework to thoroughly articulate such a case.

\subsection{Preliminaries}

Selfish routing games are a special case of {\em nonatomic congestion games} whenever the latter assume an interpretation corresponding to routing flow in a network represented as a (directed) graph. We may define a nonatomic congestion game as the following five-tuple $$(E, N, \{ \omega_i \}_{i \in N}, \{ S_i \}_{i \in N}, \{c_e(\cdot)\}_{e \in E}).$$ $E$ is a finite set of {\em resources} such as communication links in a network represented by a graph. $N = \{1,\ldots,n \}$ is a finite set of {\em commodities} such as source-destination pairs. To each commodity $i$ corresponds a scalar {\em demand} $\omega_i > 0$ and a set of {\em paths} (or {\em pure strategies}) $S_i \subseteq 2^E$. The pure strategies typically correspond to simple paths in the network's graph. The demand of each commodity is allocated to paths by (infinitesimal) {\em agents} (or {\em packets}) inducing {\em flow} on the resources. 

A nonatomic congestion game is a special case of a population game where $N$ corresponds to the (finite) set of $n$ populations, $\omega_i, i \in N$ is the mass of population $i$, and $S_i$ is the set of pure strategies of the same population. States corresponds to (multi-commodity, as they typically called) flows. We denote the space of all multi-commodity flows by $\mathbb{X}$. A multi-commodity flow $x \in \mathbb{X}$ is the ``aggregate behavior'' of a population of agents (which we prefer to think of as packets) of mass $\sum_{i=1}^n \omega_i$ where each packet has infinitesimal mass (hence the name {\em nonatomic}). We may view the population as being partitioned in $n$ sub-populations (one for each commodity) so that the fraction of packets assigned to sub-population $i$ is $\omega_i/(\sum_i \omega_i)$. 

In the previous definition of a nonatomic congestion game, each pure strategy is a subset of the collection of resources in $E$ (that is, the links comprising the path that the pure strategy corresponds to). Let $S_e$ be the set of pure strategies that assign positive flow to resource $e \in E$, and let $x_e$ be flow assigned to $e$, that is, $x_e = \sum_{i \in N, j \in S_e} x_i^j$. To each resource $e$ corresponds an increasing {\em cost function} $c_e(\cdot)$ that measures the cost (or {\em delay}) of the resource as a function of $x_e$. The delay $c_i^j(x)$ of path $j \in S_i$ is then given by $c_i^j(x) = \sum_{e \in j} c_e(x_e),$
and it is equal to the delay that a packet (of commodity $i$) suffers when it is sent along path $j$ under flow $x$.

\cite{Beckmann-McGuire-Winsten} show that, under the previous assumptions, selfish routing games are potential games and that the potential function is convex; if the delay function of each link is a strictly increasing function of the amount of flow in the link, then the potential function is strictly convex, in which case, \cite{StableGames} show that the critical point of the potential function is a GESS, an ESS  that is ``superior'' against every other state of the selfish routing game.

\subsection{The possibility of chaotic behavior}

If the learning rate is sufficiently large, Hedge may induce chaotic behavior.
We demonstrate this possibility through an example of a scalar system that has period three; this implies the possibility of chaotic behavior by the famous theorem that ``period three implies chaos''~\citep{PeriodThree}. Noting that Hedge acting on a system of two parallel links gives rise to a scalar dynamical system, our example consists of two links with cost functions $c_1(x) = x$ and $c_2(x) = 10 \cdot x$ respectively, and a learning rate $\alpha$ that is equal to $5$. Using this particular setting, (\ref{Hedge_minimization}) becomes
\begin{align*}
H(x) = \frac{x \exp\{- 5 \cdot x\}}{x \exp\{- 5 \cdot x\} + (1-x) \exp\{- 5 \cdot 10 \cdot (1-x)\}}.
\end{align*}
If $x \in (0,1)$ is a point in a periodic orbit of period three, applying $H$ three times to $x$ gives back $x$, that is, $H(H(H(x))) = x$ or $H^3(x) = x$. Figure~\ref{saldkjfnalxdkhjf} shows the fixed points of $H^3$; these include the fixed point of $H$ (as a point of period one trivially also has period three) and, in addition, contain a periodic orbit of period three as indicated by the arrows shown in the figure. 

\begin{figure}[tb]
\centering
\includegraphics[width=8cm]{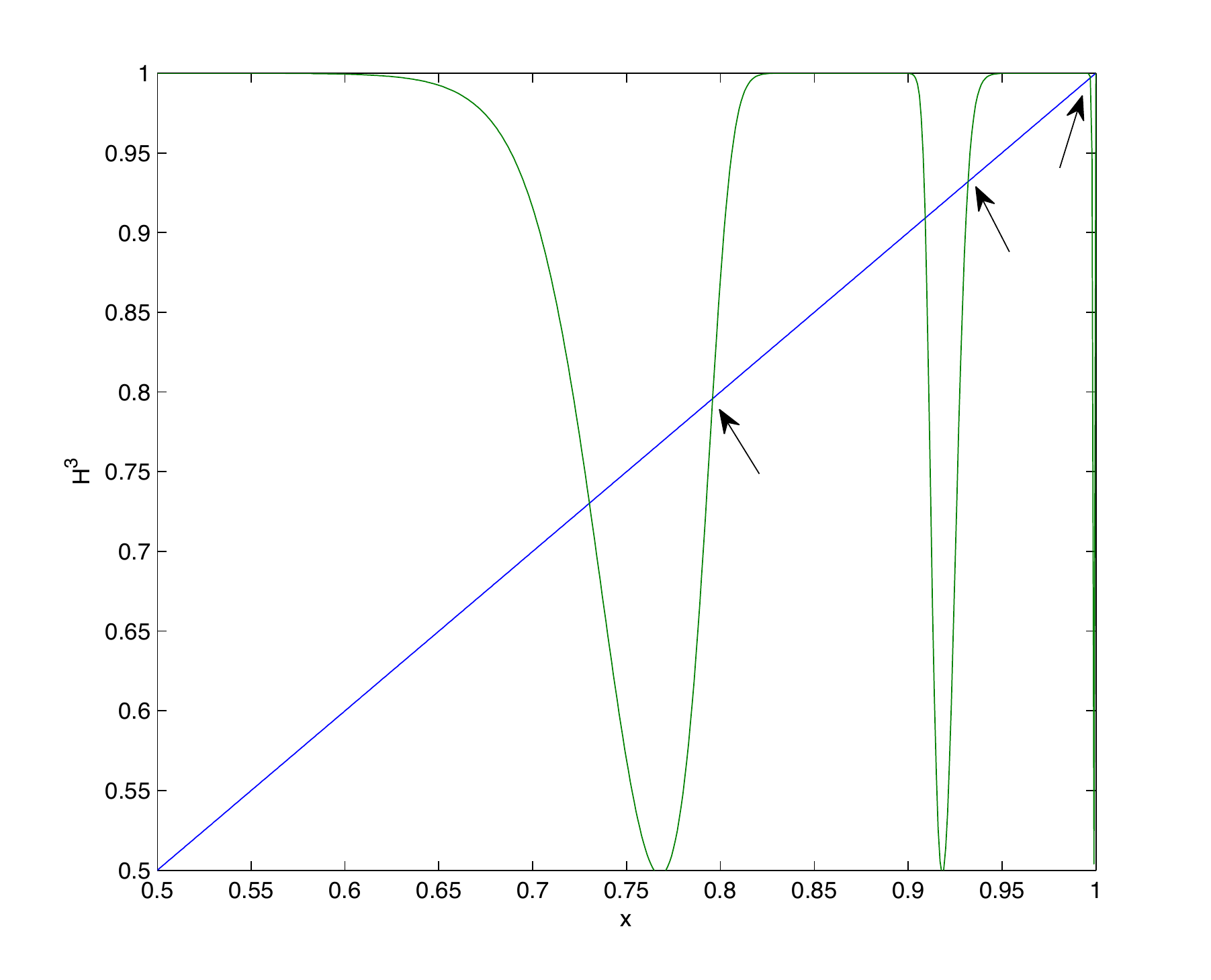}
\caption{\label{saldkjfnalxdkhjf}
$H^3(x)$ as a function of $x$. The arrows indicate a periodic orbit of period three.}
\end{figure}

\subsection{Linear stability in a simple selfish routing game}

In the remainder of this section, we consider a selfish routing game consisting of parallel links (arcs) between a pair of vertices, a source and a destination. We assume that the amount of flow from source to destination is equal to one (without loss of generality) and that the link cost functions are affine. Our goal is to show that the Wardrop equilibrium of such a selfish routing game is attractive under Hedge using an approach based on the linearization of the corresponding dynamical system.

\subsubsection{Summary of analytical approach}

The first step consists of computing the Jacobian matrix of the dynamical system that arises as we use Hedge to distribute the (unit) amount of flow across the system of parallel links; the Jacobian matrix is naturally computed at the corresponding Wardrop equilibrium (which is easily shown to be unique provided the slopes of the cost functions are strictly positive). Our objective is to show that the eigenvalues of such Jacobian matrix are (strictly) inside the unit disk. (We achieve this objective for all but a subset of systems of parallel links of affine cost functions that has measure zero.) In order to bound the area where the eigenvalues are located we rely on the {\em Perron-Frobenius theorem} and the {\em intermediate value theorem}. According to the Perron-Frobenius theorem, a nonnegative matrix (that is, a matrix with nonnegative entries) has a {\em real} eigenvalue that is equal to its {\em spectral radius} (defined as the modulus of the largest-in-modulus eigenvalue). It turns out that for a small enough learning rate, the Jacobian is indeed a nonnegative matrix, which implies its largest-in-modulus eigenvalue is real. The rest of the proof aims at showing that this eigenvalue is (strictly) smaller than one (which, therefore, implies that {\em all} eigenvalues have a modulus smaller than one and, thus, that the Wardrop equilibrium is asymptotically stable with a geometric rate of convergence near the equilibrium). This is accomplished through a continuity argument of the eigenvalues of the Jacobian and the intermediate value theorem.

\subsubsection{Analysis}
\label{linear_analysis}

Recall (from dynamical systems theory) that the Jacobian $J$ of a map $M$ whose domain and range are subsets of $\mathbb{R}^m$ (the $m$-dimensional Euclidean space) at a point $x$ of the domain is given by:
\[ 
J := 
\left[ \begin{array}{ccccc}
\frac{\partial M_1}{\partial x_1}  & \frac{\partial M_1}{\partial x_2} & \cdots & \frac{\partial M_1}{\partial x_m} \\
\frac{\partial M_2}{\partial x_1}  & \frac{\partial M_2}{\partial x_2} & \cdots & \frac{\partial M_2}{\partial x_m} \\
\vdots & \vdots & \vdots & \vdots \\
\frac{\partial M_m}{\partial x_1}  & \frac{\partial M_m}{\partial x_2} & \cdots & \frac{\partial M_m}{\partial x_m} \\
\end{array} \right] \] 
where the partial derivatives are evaluated at $x$. As noted earlier, since selfish routing games are more appropriately viewed from a minimization perspective, we taken on that perspective in this section using lower case letters, e.g., $x$, to denote flows rather than capital case. We will also use $j$ as the running index of a link (instead of $i$). Let us give the formula for the Jacobian at a Wardrop equilibrium (in a routing game consisting of parallel links) of full support (that is, an equilibrium such that every link carries population mass). Let $x = (x_1,\ldots,x_m)$ be such a full-support equilibrium (and, therefore, also a fixed point of Hedge). Assuming (for simplicity and without loss of generality) unit demand, the Jacobian $J_m$ of \eqref{Hedge_minimization} at $x$ is given by:
\begin{align}
J_m(i,j) = \left\{ \begin{array}{ll} 
                           \left( 1 - \alpha x_i \frac{\partial c_i(x_i)}{\partial x_i} \right) (1- x_i), &  i = j\\ 
                           - x_i \left( 1 - \alpha x_j \frac{\partial c_j(x_j)}{\partial x_j} \right), &  i \neq j.
                        \end{array}
                \right.\label{Jacobian-full-support}
\end{align}
Here $i$ denotes a row of the Jacobian matrix and $j$ a column. $J_m$ has the eigenvalue zero (it is easy to verify its rows sum to zero). We find it convenient to use {\em deflation} (see, for example, see \citep{Horn}) to obtain a square matrix $K_m$, of size
one smaller, whose eigenvalues are the remaining eigenvalues of $J_m$.
It is possible to perform the deflation in such a way that the resulting matrix
$K_m$ is given by the following expression:
\begin{align}
K_m = I - \alpha \hat{K}_m\label{zxcvmnsdfklj}
\end{align}
where
\[
\hat{K}_m(i,j) = \left\{ \begin{array}{ll} 
                           x_1 x_{i+1} \frac{\partial c_1}{\partial x_1} + x_{i+1} (1-x_{i+1}) \frac{\partial c_{i+1}}{\partial x_{i+1}} , &  i = j\\ 
                           x_{i+1} \left( x_1 \frac{\partial c_1}{\partial x_1} - x_{i+1} \frac{\partial c_{i+1}}{\partial x_{i+1}} \right), &  i \neq j.
                        \end{array}
                \right.\\
\]

In a system of $m$ parallel links whose cost functions are affine and strictly increasing (that is, $c_j(x_j) = \rho_j + \sigma_j x_j, \sigma_j > 0$), a full-support equilibrium is the unique solution of the following system of equations:
\[
\left[ \begin{array}{ccccc}
\sigma_1 & -\sigma_2 & 0 & \cdots & 0\\
\sigma_1 & 0 & -\sigma_3 & \cdots & 0\\
\vdots & \vdots & \vdots & \vdots & \vdots\\
\sigma_1 & 0 & 0 & \cdots & -\sigma_m\\
1 & 1 & 1 & \cdots & 1\\
\end{array} \right]
\left[ \begin{array}{cc}
x_1\\
x_2\\
\vdots\\
x_{m-1}\\
x_m\\
\end{array} \right]=
\left[ \begin{array}{cc}
\rho_2 - \rho_1\\
\rho_3 - \rho_1\\
\vdots\\
\rho_m - \rho_1\\
1\\
\end{array} \right].
\]
The solution for $j=1,\ldots,m$ is
\begin{align}
x_j = \frac{\sum_{i \neq j} (\rho_{i} - \rho_j) \prod_{k \neq i,j} \sigma_k + \prod_{i \neq j} \sigma_i}{\sum_{1 \leq i_1 < \cdots < i_{m-1} \leq m} \sigma_{i_1} \cdots \sigma_{i_{m-1}}}.\label{solution-of-linear-system}
\end{align}
Furthermore, $K_m$ simplifies as follows:
\begin{align}
K_m(i,j) = \left\{ \begin{array}{ll} 
                           1-\alpha x_{i+1}(\sigma_{i+1} + \rho_{i+1} - \rho_1), &  i = j\\ 
                           \alpha x_{i+1}(\rho_1 - \rho_{i+1}), &  i \neq j.
                        \end{array}
                \right.\label{asdklfjslkdfjd}
\end{align}

In the sequel, we show that the eigenvalues of $J_m$ are inside the unit disk for a sufficiently small learning rate $\alpha$. Let us, to that end, recall the definition of the {\em spectral radius}.

\begin{definition}
If $\lambda_i,  i = 1,\ldots,m$ are the eigenvalues of a square matrix A, then its spectral radius
$\rho(A)$ is defined as $\rho(A) := \max_i \{ |\lambda_i| \}$.
\end{definition}

Our goal is to prove the following theorem (considering the case of full-support equilibria first).

\begin{theorem}
\label{stability-theorem-linear-costs}
There exists an upper bound $\bar{\alpha} > 0$ on the learning rate such that, if the learning rate $\alpha$ is not greater than this upper bound (i.e., $0 < \alpha \leq \bar{\alpha}$), the spectral radius of the Jacobian at a Wardrop equilibrium of full support in a system of $m$ parallel links whose cost functions are affine (and have positive slopes) is smaller than one.
\end{theorem}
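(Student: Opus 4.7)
I would follow the roadmap laid out just above the theorem statement. First, I would relabel the arcs so that $\rho_1 = \max_j \rho_j$; this does not affect the spectrum because it only changes which coordinate is deflated in passing from $J_m$ to $K_m$. Under this labeling, direct inspection of \eqref{asdklfjslkdfjd} shows that the off-diagonal entries of $K_m$ equal $\alpha x_{i+1}(\rho_1 - \rho_{i+1}) \geq 0$, while for $\alpha$ sufficiently small each diagonal entry $1 - \alpha x_{i+1}(\sigma_{i+1} + \rho_{i+1} - \rho_1)$ lies in $[0,1]$. Hence for small $\alpha$ the matrix $K_m$ is entrywise nonnegative, and the Perron-Frobenius theorem furnishes a real nonnegative eigenvalue $\mu_*(\alpha)$ equal to $\rho(K_m)$. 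Since $K_m = I - \alpha \hat{K}_m$, its eigenvalues are exactly $1-\alpha\lambda$ as $\lambda$ ranges over eigenvalues of $\hat{K}_m$; thus $\mu_*(\alpha) = 1-\alpha\lambda_*$ for some real eigenvalue $\lambda_*$ of $\hat{K}_m$, and the problem reduces to showing $\lambda_* > 0$.

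The main obstacle is locating the spectrum of $\hat{K}_m$ strictly in the open right half-plane. Under the above labeling, the equilibrium identity $\sigma_{i+1} x_{i+1} = \sigma_1 x_1 + \rho_1 - \rho_{i+1}$ shows that $\hat{K}_m$ has strictly positive diagonal entries $x_1 x_{i+1}\sigma_1 + x_{i+1}(1-x_{i+1})\sigma_{i+1}$ and non-positive off-diagonal entries $x_{i+1}(\rho_{i+1} - \rho_1)$, i.e.\ $\hat{K}_m$ is a Z-matrix. I would then show that $\hat{K}_m$ is a nonsingular M-matrix, for which the cleanest route is to verify $\det \hat{K}_m > 0$ (and more generally positivity of all leading principal minors). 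A direct expansion using the equilibrium identity to eliminate the $\rho_j$'s gives, after telescoping, a clean expression in which $\det\hat{K}_m$ is proportional to $x_1 x_2 \cdots x_m$ times a manifestly positive symmetric polynomial in the slopes $\sigma_j$; I have checked this for $m=2,3$ and expect the general case to follow by induction on $m$, with the measure-zero exceptional set corresponding to coincidences that make the determinant vanish. Once $\hat{K}_m$ is identified as an M-matrix, every eigenvalue $\lambda_i$ has $\operatorname{Re}(\lambda_i) > 0$, so $|1-\alpha\lambda_i|^2 = 1 - 2\alpha\operatorname{Re}(\lambda_i) + \alpha^2|\lambda_i|^2 < 1$ for all small $\alpha$, yielding $\rho(K_m) < 1$.

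To finish, I would wrap the analysis in an intermediate value argument as suggested in the excerpt: the Perron eigenvalue $\mu_*(\alpha)$ depends continuously on $\alpha$, satisfies $\mu_*(0)=1$, and $\det(K_m(\alpha) - I) = (-\alpha)^{m-1}\det\hat{K}_m \neq 0$ for $\alpha > 0$, so $1$ is never again an eigenvalue of $K_m(\alpha)$; combined with $\mu_*'(0^+) = -\lambda_* < 0$, this forces $\mu_*(\alpha) < 1$ throughout a positive interval $(0,\bar{\alpha}]$ where nonnegativity of $K_m$ is preserved. The step I anticipate to be hardest is the induction establishing $\det\hat{K}_m > 0$ in general $m$; if this bookkeeping becomes unwieldy, the alternative I would pursue is to recognize $\hat{K}_m$, up to a positive diagonal rescaling, as the restriction to the tangent space of (minus) the Jacobian of the continuous-time replicator vector field at the Wardrop equilibrium, whose asymptotic stability is a consequence of the strict convexity of the Beckmann potential; this yields positivity of $\operatorname{Re}(\lambda_i)$ without any combinatorial computation and explains the ``measure-zero'' caveat as the degenerate cases where the Hessian of the potential fails to be strictly positive definite on the tangent space.
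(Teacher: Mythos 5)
Your opening moves (relabel the links so that $\rho_1 = \max_j \rho_j$, observe that $K_m$ is then entrywise nonnegative for $\alpha \leq \bar{\alpha}$, invoke Perron--Frobenius to obtain a real Perron eigenvalue) coincide with the paper's Lemma \ref{zmvbzmxcvb}. After that you diverge, and this is where the gap sits. Everything in your argument ultimately rests on the claim that every eigenvalue of $\hat{K}_m$ has strictly positive real part: the estimate $|1-\alpha\lambda_i|^2 = 1 - 2\alpha\operatorname{Re}(\lambda_i)+\alpha^2|\lambda_i|^2 < 1$ needs it, and so does the sign of $\mu_*'(0^+)$ in your closing intermediate-value argument, since for small $\alpha$ the Perron root of $K_m(\alpha)$ behaves like $1-\alpha\min_i\operatorname{Re}(\lambda_i)$. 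You propose to obtain this by showing that $\hat{K}_m$ is a nonsingular M-matrix via positivity of all leading principal minors, but you verify only $m=2,3$ and ``expect'' an induction; that claim is the entire content of the theorem and cannot be left as a conjecture. The fallback you offer does not close the hole either: a positive diagonal rescaling does not in general preserve the property of having spectrum in the open right half-plane (this is exactly the D-stability issue), so identifying $\hat{K}_m$ with a positive diagonal matrix times a stable replicator Jacobian proves nothing without further argument. For what it is worth, your main route can be completed without induction: the $k$-th leading principal submatrix of $\hat{K}_m$ is $\operatorname{diag}(x_2,\dots,x_{k+1})$ times $\Sigma_k + u_k\mathbf{1}^T$ with $\Sigma_k=\operatorname{diag}(\sigma_2,\dots,\sigma_{k+1})$ and $u_k=(\rho_2-\rho_1,\dots,\rho_{k+1}-\rho_1)^T$, and the matrix determinant lemma together with the equilibrium identities $\rho_j-\rho_1=\sigma_1x_1-\sigma_jx_j$ and $\sum_j x_j=1$ gives this minor as $\bigl(\prod_{j=2}^{k+1}x_j\sigma_j\bigr)\bigl(x_1+\sum_{j=k+2}^{m}x_j+\sigma_1x_1\sum_{j=2}^{k+1}\sigma_j^{-1}\bigr)>0$, with no measure-zero exception in the full-support case (the paper's measure-zero caveat concerns partial-support fixed points, not this theorem).

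The paper avoids locating the spectrum of $\hat{K}_m$ altogether. It proves only that $1$ is never an eigenvalue of $K_m$ for any positive learning rate (Lemma \ref{zxclovhanm2}, a determinant identity tying $\det \hat{K}_m$ to the numerator of $x_1$ in \eqref{solution-of-linear-system}, where only non-vanishing matters and no sign needs to be controlled), and then applies the intermediate value theorem not over $\alpha$ but over the connected parameter space of all full-support parallel-link systems (Lemmas \ref{continuity-argument} and \ref{woirwortwoirjtj}), anchored by the single explicitly diagonalizable case $\rho_1=\cdots=\rho_m=0$ (Lemma \ref{stability-of-nash-equilibria-linear-costs}): the Perron root at $\alpha=\bar{\alpha}$ is real, continuous in the parameters, never equal to $1$, and below $1$ at one point of a connected domain, hence below $1$ everywhere. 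Your eigenvalue-location strategy is more direct and, once the minor computation is actually carried out, arguably cleaner; the paper's is a global topological argument that trades the combinatorics for connectedness of the parameter space.
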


Once the proof of this theorem is complete, we will follow up with the case of Wardrop equilibria having partial support (that is, the equilibrium flow in at least one link is zero) rather easily.\\

Proceeding with the proof, we find first an the upper bound on the learning rate such that for all values of the learning rate up to this upper bound, the Jacobian is a nonnegative matrix. This implies, by the Perron-Frobenius theorem, that, for all such values of the learning rate, the Jacobian has a real and positive eigenvalue that is equal to its spectral radius. 

\begin{lemma}
\label{zmvbzmxcvb}
Consider a system of $m$ parallel links such that $c_j(x) = \rho_j + \sigma_j x_j, \sigma_j > 0, j=1\ldots,m$ whose Wardrop equilibrium has full support and assume the links have been numbered such that $\rho_1 \geq \rho_j, j=2,\ldots,m$. Then
there exists $\bar{\alpha}$ such that,
if $0 < \alpha \leq \bar{\alpha}$, $K_m$ has an eigenvalue
that is equal to its spectral radius. 
\end{lemma}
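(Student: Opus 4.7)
The plan is a direct application of the Perron--Frobenius theorem, which asserts that any (entrywise) nonnegative square matrix has a real nonnegative eigenvalue equal to its spectral radius. It therefore suffices to exhibit an upper bound $\bar{\alpha} > 0$ on the learning rate such that, for every $0 < \alpha \leq \bar{\alpha}$, the matrix $K_m$ given by \eqref{asdklfjslkdfjd} is entrywise nonnegative.

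First I would handle the off-diagonal entries. By \eqref{asdklfjslkdfjd}, for $i \neq j$ we have $K_m(i,j) = \alpha\, x_{i+1}(\rho_1 - \rho_{i+1})$. Since $\alpha > 0$, $x_{i+1} > 0$ (by the full-support hypothesis), and the numbering convention of the lemma ensures $\rho_1 \geq \rho_{i+1}$ for every $i$, each off-diagonal entry of $K_m$ is nonnegative regardless of the choice of $\alpha > 0$.

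Next I would analyze the diagonal entries $K_m(i,i) = 1 - \alpha\, x_{i+1}(\sigma_{i+1} + \rho_{i+1} - \rho_1)$. Partition the index set $I = \{1,\ldots,m-1\}$ as $I = I_{\le 0} \cup I_{> 0}$ according to the sign of $\beta_i \doteq \sigma_{i+1} + \rho_{i+1} - \rho_1$. For $i \in I_{\le 0}$, the diagonal entry is automatically at least $1$ for every $\alpha > 0$. For $i \in I_{> 0}$, nonnegativity of $K_m(i,i)$ is equivalent to $\alpha \leq 1/(x_{i+1}\beta_i)$. Setting
\begin{align*}
\bar{\alpha} \doteq \min_{i \in I_{> 0}} \frac{1}{x_{i+1}(\sigma_{i+1} + \rho_{i+1} - \rho_1)}
\end{align*}
(with the convention $\bar{\alpha} = +\infty$ if $I_{> 0} = \emptyset$, in which case the conclusion holds trivially for all $\alpha > 0$), we obtain a strictly positive bound such that every $0 < \alpha \leq \bar{\alpha}$ renders all diagonal entries of $K_m$ nonnegative. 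Combined with the previous paragraph, $K_m$ is a nonnegative matrix for every such $\alpha$, so the Perron--Frobenius theorem supplies a real nonnegative eigenvalue equal to $\rho(K_m)$, as required.

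I do not expect any significant obstacle for this lemma: the hypothesis that the equilibrium has full support (so $x_{i+1} > 0$), together with the convention $\rho_1 = \max_j \rho_j$, is exactly what is needed for the off-diagonal nonnegativity, and the diagonal entries become nonnegative by continuity at $\alpha = 0$. The real analytical work is postponed to the subsequent step (not this lemma), namely, showing that this Perron--Frobenius eigenvalue is strictly less than $1$ for a sufficiently small learning rate, which is where the intermediate value theorem and the continuity of eigenvalues as a function of $\alpha$ will come into play.
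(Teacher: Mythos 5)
Your proposal is correct and follows essentially the same route as the paper: show the off-diagonal entries are nonnegative from the numbering convention and full support, bound $\alpha$ so the diagonal entries are nonnegative, and invoke Perron--Frobenius. The only difference is that your case split on the sign of $\sigma_{i+1}+\rho_{i+1}-\rho_1$ is unnecessary: the paper notes that the full-support equilibrium condition $\rho_j+\sigma_j x_j=\rho_1+\sigma_1 x_1$ gives $\rho_j+\sigma_j>\rho_1$ for every $j$, so all these quantities are strictly positive and the minimum defining $\bar{\alpha}$ is always finite (a fact the paper later uses when treating $\bar{\alpha}$ as a function of the system parameters).
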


\begin{proof}
Since $\rho_1 \geq \rho_j, j=2,\ldots,m$, the off-diagonal entries of $K_m$ 
are nonnegative. Since 
$\rho_j + \sigma_j > \rho_j + \sigma_j x_j = \rho_1 + \sigma_1 x_1 > \rho_1$,
if
\begin{align} 
\bar{\alpha} = \min_{j \in \{2,\ldots,m\}} \left\{ \frac{1}{x_j(\sigma_j + \rho_j - \rho_1)} \right\},\label{alpha_upper_bound}
\end{align}
then if $0 < \alpha \leq \bar{\alpha}$ the diagonal entries are also nonnegative, and, therefore,
$K_m$ is nonnegative. Thus, by an application of the Perron-Frobenius
theorem (see, for example, \citep[p. 503]{Horn}), $K_m$ has a (real and positive) eigenvalue
that is equal to its spectral radius. 
\end{proof}

Then, to show that the spectral radius is smaller than one (and, therefore, all eigenvalues are inside the unit disk) we rely on the {\em intermediate value theorem}. We apply this theorem to a function that maps couples consisting of the vectors of offsets and slopes of the links to the largest in magnitude eigenvalue (that is real and positive) assuming the learning rate is equal to the aforementioned upper bound. Let's denote this function by $f$. Letting $\mbox{\boldmath${\rho}$} = (\rho_1,\ldots,\rho_m)$ and $\mbox{\boldmath${\sigma}$} = (\sigma_1,\ldots,\sigma_m)$, the domain of $f$ is the space of all parameters $(\mbox{\boldmath${\rho}$}, \mbox{\boldmath${\sigma}$})$ corresponding to systems of $m$ parallel links whose Wardrop equilibrium has full support. Furthermore, 
\begin{align*}
f(\mbox{\boldmath${\rho}$}, \mbox{\boldmath${\sigma}$}) = \rho(K_m(\bar{\alpha}(\mbox{\boldmath${\rho}$}, \mbox{\boldmath${\sigma}$}), \mbox{\boldmath${\rho}$}, \mbox{\boldmath${\sigma}$})).
\end{align*}

First, we show that this function is continuous, then we show that the domain of this function is connected, and finally we show that the Jacobian cannot have an eigenvalue equal to one. 

\begin{lemma}
\label{continuity-argument}
$f$ is a continuous function of $\mbox{\boldmath${\rho}$}$ and $\mbox{\boldmath${\sigma}$}$.
\end{lemma}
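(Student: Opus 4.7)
The plan is to establish continuity by decomposing $f$ into a composition of continuous functions and invoking the classical fact that eigenvalues depend continuously on matrix entries. Specifically, I would write
\begin{align*}
f(\mbox{\boldmath${\rho}$}, \mbox{\boldmath${\sigma}$}) = \rho \circ K_m \circ (\bar{\alpha}, \mathrm{id})(\mbox{\boldmath${\rho}$}, \mbox{\boldmath${\sigma}$}),
\end{align*}
and argue continuity of each link in turn on the domain in question (the open set of parameters giving a full-support Wardrop equilibrium).

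First, I would observe that the equilibrium coordinates $x_j$, as given by the closed-form expression \eqref{solution-of-linear-system}, are rational functions of $(\mbox{\boldmath${\rho}$}, \mbox{\boldmath${\sigma}$})$ whose denominator $\sum_{1 \leq i_1 < \cdots < i_{m-1} \leq m} \sigma_{i_1} \cdots \sigma_{i_{m-1}}$ is strictly positive (every $\sigma_i > 0$), hence each $x_j$ is continuous in $(\mbox{\boldmath${\rho}$}, \mbox{\boldmath${\sigma}$})$. Next, the step-size bound
\begin{align*}
\bar{\alpha}(\mbox{\boldmath${\rho}$}, \mbox{\boldmath${\sigma}$}) = \min_{j \in \{2,\ldots,m\}} \left\{ \frac{1}{x_j(\sigma_j + \rho_j - \rho_1)} \right\}
\end{align*}
is a finite minimum of continuous functions (note that on the relevant domain, where the equilibrium has full support with links numbered so that $\rho_1 \geq \rho_j$, we have $\sigma_j + \rho_j - \rho_1 > 0$ as noted in the proof of Lemma \ref{zmvbzmxcvb}, and $x_j > 0$), and a finite min of continuous functions is continuous. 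Consequently, every entry of $K_m(\bar{\alpha}(\mbox{\boldmath${\rho}$},\mbox{\boldmath${\sigma}$}), \mbox{\boldmath${\rho}$}, \mbox{\boldmath${\sigma}$})$, inspected via \eqref{asdklfjslkdfjd}, is a continuous function of $(\mbox{\boldmath${\rho}$}, \mbox{\boldmath${\sigma}$})$.

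Finally, I would invoke the standard result from matrix analysis that the eigenvalues of a square matrix depend continuously on its entries; this is the statement that the roots of a monic polynomial depend continuously on its coefficients, applied to the characteristic polynomial (see, e.g., \citep{Horn}). As an immediate corollary, the spectral radius $\rho(\cdot) = \max_i |\lambda_i(\cdot)|$ is continuous as a function of the matrix entries, being the maximum of the continuous moduli of the (continuously varying) eigenvalues. Composing the three continuous maps $(\mbox{\boldmath${\rho}$}, \mbox{\boldmath${\sigma}$}) \mapsto (\bar{\alpha}, \mbox{\boldmath${\rho}$}, \mbox{\boldmath${\sigma}$}) \mapsto K_m \mapsto \rho(K_m)$ yields continuity of $f$. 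The main subtlety is ensuring we remain on the open domain where the full-support equilibrium persists so that the formulas for $x_j$ and $\bar{\alpha}$ are valid; but this is exactly the domain on which $f$ is defined by hypothesis, and all the continuity arguments above go through pointwise on that set.
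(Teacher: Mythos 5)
Your proposal is correct and follows essentially the same route as the paper: the paper's proof likewise argues that the entries of $K_m(\bar{\alpha}(\mbox{\boldmath${\rho}$},\mbox{\boldmath${\sigma}$}),\mbox{\boldmath${\rho}$},\mbox{\boldmath${\sigma}$})$ are continuous in $(\mbox{\boldmath${\rho}$},\mbox{\boldmath${\sigma}$})$ and then composes with the continuity of eigenvalues in matrix entries. You simply fill in more of the intermediate detail (continuity of the $x_j$ via \eqref{solution-of-linear-system}, of $\bar{\alpha}$ as a finite minimum, and of the spectral radius as a maximum of moduli), which the paper leaves implicit.
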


\begin{proof}
The eigenvalues of a matrix are continuous functions of the elements of the matrix, and the elements of $K_m(\bar{\alpha}(\mbox{\boldmath${\rho}$}, \mbox{\boldmath${\sigma}$}), \mbox{\boldmath${\rho}$}, \mbox{\boldmath${\sigma}$})$ are continuous functions of $\mbox{\boldmath${\rho}$}$ and $\mbox{\boldmath${\sigma}$}$. Therefore, by composition, the eigenvalues of $K_m$ are continuous functions of $\mbox{\boldmath${\rho}$}$ and $\mbox{\boldmath${\sigma}$}$.
\end{proof}

Let's denote the domain of $f$ by $S$. We have
\begin{align*}
S = \{(\mbox{\boldmath${\rho}$}, \mbox{\boldmath${\sigma}$}) \in \mathbb{R}^m \times \mathbb{R}^m_+ | x(\mbox{\boldmath${\rho}$}, \mbox{\boldmath${\sigma}$}) \mbox{ has full support } \}
\end{align*}
where $\mathbb{R}^m$ is the $m$-dimensional Euclidean space and $\mathbb{R}_+^m$ is its positive orthant.

\begin{lemma}
\label{woirwortwoirjtj}
$S$ is connected.
\end{lemma}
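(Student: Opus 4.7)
The plan is to derive an explicit inequality that characterizes $S$ and then exhibit concrete continuous paths in $S$ between any two points. At any full-support Wardrop equilibrium on $m$ parallel links with affine costs, all links carry the same delay $d$, so $x_j = (d - \rho_j)/\sigma_j$ and $\sum_j x_j = 1$ yield
\[
d \;=\; \frac{1 + \sum_j \rho_j/\sigma_j}{\sum_j 1/\sigma_j}.
\]
Positivity of all $x_j$ is then equivalent to $d > \rho_{\max}$, where $\rho_{\max} := \max_j \rho_j$. Rearranging this condition gives the clean description
\[
S \;=\; \Bigl\{ (\mbox{\boldmath${\rho}$}, \mbox{\boldmath${\sigma}$}) \in \mathbb{R}^m \times \mathbb{R}_+^m \;:\; \sum_{j=1}^m \frac{\rho_{\max} - \rho_j}{\sigma_j} < 1 \Bigr\}.
\]
(Consistency with formula \eqref{solution-of-linear-system} is automatic: that formula gives the unique solution of the associated nonsingular linear system, and it satisfies the Wardrop conditions precisely when all $x_j$ are positive.)

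Next I would exploit the fact that the defining functional $\Psi(\mbox{\boldmath${\rho}$}, \mbox{\boldmath${\sigma}$}) := \sum_j (\rho_{\max} - \rho_j)/\sigma_j$ is nonnegative and positively homogeneous of degree one in $\mbox{\boldmath${\rho}$}$, since $\max_j(t \rho_j) = t \rho_{\max}$ for any $t \geq 0$. Given $(\mbox{\boldmath${\rho}$}, \mbox{\boldmath${\sigma}$}) \in S$, along the contraction $\gamma(t) = (t \mbox{\boldmath${\rho}$}, \mbox{\boldmath${\sigma}$})$, $t \in [0,1]$, one has $\Psi(\gamma(t)) = t\, \Psi(\mbox{\boldmath${\rho}$}, \mbox{\boldmath${\sigma}$}) \leq \Psi(\mbox{\boldmath${\rho}$}, \mbox{\boldmath${\sigma}$}) < 1$, so the entire segment lies in $S$. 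This connects every point of $S$ to the ``zero-offset slice'' $S_0 := \{(0, \mbox{\boldmath${\sigma}$}) : \mbox{\boldmath${\sigma}$} \in \mathbb{R}_+^m\} \subseteq S$, which is convex (as the product of $\{0\}$ and the positive orthant) and therefore path-connected.

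Finally, given any $(\mbox{\boldmath${\rho}$}_a, \mbox{\boldmath${\sigma}$}_a), (\mbox{\boldmath${\rho}$}_b, \mbox{\boldmath${\sigma}$}_b) \in S$, concatenating the contraction from $(\mbox{\boldmath${\rho}$}_a, \mbox{\boldmath${\sigma}$}_a)$ to $(0, \mbox{\boldmath${\sigma}$}_a)$, a straight-line segment inside $S_0$ from $(0, \mbox{\boldmath${\sigma}$}_a)$ to $(0, \mbox{\boldmath${\sigma}$}_b)$, and the reverse contraction from $(0, \mbox{\boldmath${\sigma}$}_b)$ to $(\mbox{\boldmath${\rho}$}_b, \mbox{\boldmath${\sigma}$}_b)$ yields a continuous path in $S$. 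Hence $S$ is path-connected, and in particular connected. The only real work is the characterization of $S$; once that is in hand, path-connectedness is immediate from the degree-one homogeneity of $\Psi$ in $\mbox{\boldmath${\rho}$}$, and I expect no further obstacle.
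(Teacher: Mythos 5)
Your proof is correct and follows essentially the same route as the paper's: both connect two arbitrary points of $S$ through the zero-offset slice via the three-leg path $(\mbox{\boldmath${\rho}$},\mbox{\boldmath${\sigma}$}) \rightarrow (\mbox{\boldmath${0}$},\mbox{\boldmath${\sigma}$}) \rightarrow (\mbox{\boldmath${0}$},\mbox{\boldmath${\sigma'}$}) \rightarrow (\mbox{\boldmath${\rho'}$},\mbox{\boldmath${\sigma'}$})$. The only cosmetic difference is that you justify the radial legs by the explicit characterization $\sum_{j}(\rho_{\max}-\rho_j)/\sigma_j < 1$ together with its degree-one homogeneity in $\mbox{\boldmath${\rho}$}$, whereas the paper observes that each fixed-slope slice is an intersection of $m$ open half-spaces (hence convex); both arguments are valid.
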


\begin{proof}
The set $\{ (\mbox{\boldmath${\rho}$}, \mbox{\boldmath${\sigma_0}$}) | x_j > 0, j=1,\ldots,m, \mbox{\boldmath${\sigma_0}$ is constant} \}$
that results if we fix the slopes and let the offsets vary
is the intersection of $m$ open half-spaces (see Equation~\ref{solution-of-linear-system}) and, therefore, it is convex, and, thus, connected. 
The set $\{ (\mbox{\boldmath$0$}, \mbox{\boldmath${\sigma}$}) | x_j > 0, j=1,\ldots m \}$ that results if we fix the
offsets to $0$ and let the slopes vary is also
connected since it is homeomorphic to the open positive orthant of $\mathbb{R}^m$.
Therefore, $S = \{ (\mbox{\boldmath${\rho}$}, \mbox{\boldmath${\sigma}$}) | x_j > 0,j=1,\ldots,m \}$ is  
connected since it is possible to connect any two points $(\mbox{\boldmath${\rho}$}, \mbox{\boldmath${\sigma}$})$ and $(\mbox{\boldmath${\rho'}$},\mbox{\boldmath${\sigma'}$})$ in $S$ as follows: $(\mbox{\boldmath${\rho}$},\mbox{\boldmath${\sigma}$}) \rightarrow (\mbox{\boldmath${0}$},\mbox{\boldmath${\sigma}$}) \rightarrow (\mbox{\boldmath${0}$},\mbox{\boldmath${\sigma'}$}) \rightarrow (\mbox{\boldmath${\rho'}$},\mbox{\boldmath${\sigma'}$})$.
\end{proof}

\begin{lemma}
\label{zxclovhanm2}
$K_m$ does not have the eigenvalue $1$ for any value of the learning rate.
\end{lemma}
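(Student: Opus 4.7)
The plan is to reduce the claim to a nonsingularity statement and then exploit the rank-one structure of $I - K_m$ together with the Wardrop equilibrium condition. Since $K_m = I - \alpha \hat K_m$ (see \eqref{zxcvmnsdfklj}), the value $1$ is an eigenvalue of $K_m$ if and only if $I - K_m = \alpha\hat K_m$ is singular, which for any $\alpha > 0$ amounts to showing $\det(I - K_m) \neq 0$. I will work with the explicit affine-cost formula \eqref{asdklfjslkdfjd} and show in fact that $\det(I - K_m)$ is a strictly positive multiple of $\alpha^{m-1}$.

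The key observation from \eqref{asdklfjslkdfjd} is that every off-diagonal entry in row $i$ of $K_m$ takes the \emph{same} value $\alpha x_{i+1}(\rho_1 - \rho_{i+1})$, independent of the column index $j$. Consequently, after factoring $\alpha x_{i+1}$ out of row $i$ of $I - K_m$ (which is legitimate since $x_{i+1} > 0$ at a full-support equilibrium), the resulting matrix has the shape
\begin{align*}
\Sigma + u\,\mathbf{1}^T, \qquad \Sigma = \operatorname{diag}(\sigma_2,\ldots,\sigma_m), \qquad u = (\rho_2 - \rho_1,\ldots,\rho_m - \rho_1)^T.
\end{align*}
This is a diagonal-plus-rank-one matrix, so the matrix determinant lemma yields
\begin{align*}
\det(\Sigma + u\,\mathbf{1}^T) = \Bigl(\prod_{i=2}^m \sigma_i\Bigr)\Bigl(1 + \sum_{i=2}^m \frac{\rho_i - \rho_1}{\sigma_i}\Bigr).
\end{align*}
Thus the whole question collapses to showing that the scalar $1 + \sum_{i=2}^m (\rho_i - \rho_1)/\sigma_i$ is nonzero.

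To handle this scalar I will use the Wardrop equilibrium condition: at a full-support equilibrium all link costs coincide with a common value $c$, giving $\sigma_i x_i = c - \rho_i$ for every $i$, so $\rho_i - \rho_1 = \sigma_1 x_1 - \sigma_i x_i$. Substituting and using $\sum_{i=2}^m x_i = 1 - x_1$ from the unit-demand constraint leads to
\begin{align*}
1 + \sum_{i=2}^m \frac{\rho_i - \rho_1}{\sigma_i} = 1 - (1 - x_1) + \sigma_1 x_1 \sum_{i=2}^m \frac{1}{\sigma_i} = x_1\Bigl(1 + \sigma_1 \sum_{i=2}^m \frac{1}{\sigma_i}\Bigr),
\end{align*}
which is strictly positive because $x_1 > 0$ and every $\sigma_i > 0$. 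Putting the pieces together, $\det(I - K_m) = \alpha^{m-1}\prod_{i=2}^m x_{i+1} \cdot \det(\Sigma + u\mathbf{1}^T) > 0$ for every $\alpha > 0$, so $1$ is never an eigenvalue of $K_m$. The only real obstacle is the algebraic manipulation in the last display, which is purely routine once the rank-one structure and the equilibrium identity $\sigma_i x_i = c - \rho_i$ have been recognized; the conceptual content is entirely in the rank-one observation.
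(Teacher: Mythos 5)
Your proof is correct, and it takes a genuinely different route from the paper's. Both arguments begin identically — reduce to $\det(\alpha\hat K_m)\neq 0$, pull the positive factors $x_{i+1}$ out of the rows, and land on the matrix with diagonal entries $\sigma_{i+1}+\rho_{i+1}-\rho_1$ and constant off-diagonal rows — but they diverge at the determinant computation. The paper performs elementary column operations to identify $\det K_m'$ with $(-1)^{m-1}\det N(x_1)$, where $\det N(x_1)$ is recognized as the Cramer's-rule numerator of $x_1$ in \eqref{solution-of-linear-system}, and concludes from $x_1\neq 0$. You instead exploit the diagonal-plus-rank-one structure via the matrix determinant lemma and then invoke the equal-cost condition $\rho_i+\sigma_i x_i = c$ at a full-support Wardrop equilibrium to evaluate the resulting scalar in closed form as $x_1\bigl(1+\sigma_1\sum_{i=2}^m \sigma_i^{-1}\bigr)$. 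The two routes in fact meet: expanding your expression gives $\det K_m' = x_1\sum_{i_1<\cdots<i_{m-1}}\sigma_{i_1}\cdots\sigma_{i_{m-1}}$, which is exactly $(-1)^{m-1}\det N(x_1)$ up to the paper's sign bookkeeping. What your approach buys is a little more: it determines the \emph{sign} of $\det(I-K_m)$ (strictly positive), not merely its nonvanishing, and it makes explicit that the only input from the equilibrium is the equal-cost identity rather than the full solution formula. One trivial slip to fix: the product over the row factors should be $\prod_{i=1}^{m-1}x_{i+1}=\prod_{j=2}^m x_j$ (the matrix is $(m-1)\times(m-1)$), not $\prod_{i=2}^m x_{i+1}$; this does not affect the argument.
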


\begin{proof}
To prove that $K_m$ does not have the eigenvalue $1$,
because of equation~(\ref{zxcvmnsdfklj}),
it suffices to prove that $\hat{K}_m$, given by following expression,
\begin{align*}
\hat{K}_m = 
\left[ \begin{array}{cccc}
 x_2(\sigma_2 + \rho_2 - \rho_1) & \cdots & x_2(\rho_2 - \rho_1) \\
 \vdots & \vdots & \vdots \\
 x_m(\rho_m - \rho_1) & \cdots & x_m(\sigma_m + \rho_m - \rho_1) \\
\end{array} \right],\label{breaking-tilde-Jm}
\end{align*} 
does not have the eigenvalue $0$. We can write $\hat{K}_m$
as the product of a diagonal matrix whose diagonal entries are $x_2, \ldots, x_m$
and the following matrix:
\begin{align*}
K_m' = 
\left[ \begin{array}{cccc}
 \sigma_2 + \rho_2 - \rho_1 & \cdots & \rho_2 - \rho_1 \\
 \vdots & \vdots & \vdots \\
 \rho_m - \rho_1 & \cdots & \sigma_m + \rho_m - \rho_1 \\
\end{array} \right].
\end{align*}
Because $x_j > 0, j=1,\ldots,m$ by assumption, it suffices to show that
$\det K_m' \neq 0$. We show that \[\det K_m' = (-1)^{m-1} \det N(x_1)\] where
\begin{align*}
N(x_1) = 
\left[ \begin{array}{cccccc}
\rho_2 - \rho_1 & -\sigma_2 & 0 & \cdots & 0\\
\rho_3 - \rho_1 & 0 & -\sigma_3 & \cdots & 0\\
\vdots & \vdots & \vdots & \vdots & \vdots\\
\rho_m - \rho_1 & 0 & 0 & \cdots & -\sigma_m\\
1 & 1 & 1 & \cdots & 1\\
\end{array} \right].
\end{align*}
It is easy to show that $\det N(x_1)$ is the numerator of $x_1$ in~(\ref{solution-of-linear-system}).
Therefore, showing that $\det K_m' = (-1)^{m-1} \det N(x_1)$
will suffice to complete the proof because $x_1 \neq 0$ by assumption.

We perform elementary column operations to $N(x_1)$, multiplying all columns except
for the first one by $-1$ and then adding the first column to all other columns, to obtain
the following matrix
\begin{align*}
\left[ \begin{array}{ccccc}
\rho_2 - \rho_1 & \sigma_2 + \rho_2 - \rho_1  & \cdots & \rho_2 - \rho_1\\
\rho_3 - \rho_1 & \rho_3 - \rho_1 & \cdots & \rho_3 - \rho_1\\
\vdots & \vdots & \vdots & \vdots\\
\rho_m - \rho_1 & \rho_m - \rho_1 & \cdots & \sigma_m + \rho_m - \rho_1\\
1 & 0 & \cdots & 0\\
\end{array} \right]
\end{align*}
whose determinant is equal to $\det K_m'$. This completes the proof.
\end{proof}

We are now ready to conclude, by the intermediate value theorem, that either all systems of parallel links with affine cost functions and positive slopes are unstable (have a spectral radius greater than one) or they are asymptotically stable. We prove the latter through giving an example of a system that is asymptotically stable. The example is given in the next lemma.

\begin{lemma}
\label{stability-of-nash-equilibria-linear-costs}
In a system consisting of $m$ parallel links such that $c_j(x_j) = \sigma_j x_j, \sigma_j > 0, j = 1,\ldots,m$, there exists an $\bar{\alpha} > 0$ such that, if $0 < \alpha \leq \bar{\alpha}$, the eigenvalues of $K_m$ are inside the unit disk.
\end{lemma}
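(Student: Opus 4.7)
The plan is to exploit the fact that the assumption $\rho_j = 0$ for all $j$ causes the off-diagonal entries of $K_m$ to vanish. Substituting $\rho_1 = \rho_{i+1} = 0$ into formula~(\ref{asdklfjslkdfjd}) gives $K_m(i,j) = \alpha x_{i+1}(\rho_1 - \rho_{i+1}) = 0$ for $i \neq j$, while the diagonal entries collapse to $K_m(i,i) = 1 - \alpha x_{i+1}\sigma_{i+1}$. Consequently $K_m$ is a diagonal matrix and its eigenvalues are precisely the scalars $\lambda_i = 1 - \alpha x_{i+1}\sigma_{i+1}$ for $i = 1, \ldots, m-1$.

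Before invoking this, I would first verify that the hypothesis of full support used in Section~\ref{linear_analysis} is met here. With zero offsets and strictly positive slopes, the equal-delay characterization of the Wardrop equilibrium gives $\sigma_j x_j = \sigma_k x_k$ for all $j,k$, and combined with $\sum_j x_j = 1$ this yields
\begin{align*}
x_j = \frac{1/\sigma_j}{\sum_{k=1}^m 1/\sigma_k} > 0,
\end{align*}
so the equilibrium has full support and, in particular, every product $x_{i+1}\sigma_{i+1}$ appearing above is strictly positive.

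Next I would choose the bound on the step size. Each eigenvalue $\lambda_i = 1 - \alpha x_{i+1}\sigma_{i+1}$ lies in the open interval $(-1,1)$ if and only if $0 < \alpha x_{i+1}\sigma_{i+1} < 2$. Setting
\begin{align*}
\bar{\alpha} \;=\; \min_{i \in \{1,\ldots,m-1\}} \frac{1}{x_{i+1}\sigma_{i+1}},
\end{align*}
any $0 < \alpha \leq \bar{\alpha}$ gives $0 < \alpha x_{i+1}\sigma_{i+1} \leq 1 < 2$, hence $|\lambda_i| < 1$ for every $i$, so the spectral radius of $K_m$ is strictly less than one.

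There is no real analytical obstacle; the work is a direct computation once the cancellation of off-diagonal terms is noticed. The content of the lemma lies in its role within the surrounding argument: together with Lemmas~\ref{continuity-argument}, \ref{woirwortwoirjtj}, and \ref{zxclovhanm2}, it supplies the missing ingredient for the intermediate value theorem, namely an explicit point in the connected parameter set $S$ at which the (continuous, never-equal-to-one) spectral radius function $f$ takes a value strictly below one, thereby forcing $f < 1$ on all of $S$ and completing the proof of Theorem~\ref{stability-theorem-linear-costs}.
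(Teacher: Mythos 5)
Your proposal is correct and follows essentially the same route as the paper: observe that zero offsets make $K_m$ diagonal with entries $1-\alpha x_{i+1}\sigma_{i+1}$, then bound $\alpha$ so each entry lies in $(-1,1)$. Your explicit verification of full support and the closed form $x_j = (1/\sigma_j)/\sum_k (1/\sigma_k)$ is a small addition the paper leaves implicit, but the argument is the same.
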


\begin{proof}
First note that the Wardrop equilibrium has full support (in all such systems of parallel links), and, therefore, $K_m$ is given by ~(\ref{asdklfjslkdfjd}). Now observe that if $\rho_1 = \cdots = \rho_m = 0$, then $K_m$ is diagonal, and, therefore, if $\alpha < \beta \min\{ \frac{2}{x_j \sigma_j} \}, j=2,\ldots,m$, for any $\beta$ such that $0 < \beta < 1$, then $K_m$ has spectral radius smaller than $1$. This completes the proof.
\end{proof}

Observe that the proof of Theorem~\ref{stability-theorem-linear-costs} is now complete.\\

Finally, we consider the Jacobian at a fixed point that has partial support. (Recall that the fixed points of Hedge do not necessarily coincide with Wardrop equilibria, but they are related as noted earlier.)
At a fixed point with partial support, let's number the links such that $I_+ = \{ 1, \ldots, p \}$ is the set of indices
of links with positive flow and $I_0 = \{ p+1, \ldots, m \}$ is the set of indices of links
with zero flow. Then we can write
\begin{align*}
J_m = \left[ \begin{array}{ccc}
 J_m(I_+, I_+)  & J_m(I_+, I_0)\\
 J_m(I_0, I_+)  & J_m(I_0, I_0)\\
\end{array} \right] =
\left[ \begin{array}{ccc}
 J_m^{++}  & J_m^{+0}\\
 J_m^{0+}  & J_m^{00}\\
\end{array} \right].
\end{align*}
Using calculus we find that $J_m^{0+}$ is the zero matrix, and, therefore, $J_m$ is blog diagonal
\begin{align*}
J_m = \left[ \begin{array}{ccc}
 J_m^{++}  & J_m^{+0}\\
 0  & J_m^{00}\\
\end{array} \right] 
\end{align*}
where
\begin{align*}
J_m^{++}(i,j) &= \left\{ \begin{array}{ll} 
                           \left( 1 - \alpha x_i \frac{\partial c_i(x_i)}{\partial x_i} \right) (1- x_i), &  i = j\\ 
                           - x_i \left( 1 - \alpha x_j \frac{\partial c_j(x_j)}{\partial x_j} \right), &  i \neq j
                        \end{array}
                \right.
\end{align*}
so that $J_m^{++}$ is the Jacobian of a full-support equilibrium in the system consisting of the first $p$ links and
\begin{align*}
J_m^{00}(i,j) &= \left\{ \begin{array}{ll} 
                           \exp\{ - \alpha (c_{i}(x_{i}) - c_1(x_1)), &  i = j\\ 
                           0, &  i \neq j
                        \end{array}
                \right.
\end{align*}
so that $J_m^{00}$ is a diagonal matrix whose diagonal entries are
$\exp\{ - \alpha (c_{j}(0) - c_1(x_1)\}, j = p+1,\ldots,m$.

Because $J_m$ is block diagonal, its eigenvalues are equal to the eigenvalues of $J_m^{++}$ and
the eigenvalues of $J_m^{00}$, and, therefore, we can say the following:

\begin{itemize}

\item If there exists $j\in\{p+1,\ldots, m\}$ such that $c_j(0) < c_1(x_1)$, then,
the fixed point (which is not a Wardrop equilibrium) is unstable (the spectral radius of the Jacobian is larger than one). 

\item If, for all $j\in\{p+1,\ldots, m\}$, $c_j(0) > c_1(x_1)$, then the stability of the
fixed point is determined by the stability of $J_m^{++}$. Therefore,
by Theorem~\ref{stability-theorem-linear-costs}, the fixed point is asymptotically stable (the eigenvalues of the Jacobian are inside the unit disk).

\item If $c_j(0) \geq c_1(x_1)$ for all $j\in\{p+1,\ldots, m\}$ and there exists a
$j\in\{p+1,\ldots, m\}$ such that $c_j(0) = c_1(x_1)$, then
$J_m$ has an eigenvalue equal to one in which case the Jacobian test is inconclusive. 
However, note that this case is rare, as adding an arbitrarily small positive or negative offset to the cost function of link $j$ either changes the equilibrium to a full-support one or brings us to one of the previous two cases.
\end{itemize}

In closing this section, we note that although the previous analysis is important in the sense of not only establishing asymptotic stability but also a geometric rate of convergence near the equilibrium, this type of analysis is evidently quite difficult to generalize to the (quite general) cases that our (much less complicated) argument based on Lyapunov functions was able to handle.

\subsection{An intuitive interpretation of evolutionary stability}
\label{intuition}

Evolutionary stability originated as a branch of mathematical biology that is typically motivated in the literature in the setting of evolution wherein payoffs correspond to the fitness of organisms are measured by the number of offspring they can bear. Organisms, drawn from an infinite population thereof, are thought to interact in pairs (corresponding to the players positions in a symmetric bimatrix game) and the main assumption is that such interactions affect the organisms' ability to reproduce and raise offspring. How this approach gives rise to the notion of evolutionary stability is explained in sufficient detail in the introduction of this paper. However, as also noted in the introduction, the approach has found far reaching applications in economics and social evolution. 

In the rest of this section, we are concerned with an interpretation of evolutionary stability oriented toward a practical problem in networked systems design, namely, how to route traffic between sources and destination in a computer network such as the Internet. It is perhaps meaningful to recall at this point that algorithmic game theory was originally motivated by a quote one of its founding fathers, namely, Christos Papadimitriou attributes to one of the Internet's architects, namely, Scott Shenker: ``The Internet is an equilibrium, we just have to identify the game.''

Wardrop equilibria (that is, Nash flows in nonatomic congestion games), have been studied as a possible model of the traffic pattern in communication networks, such as the Internet, serving selfish users. Perhaps the most prominent example where the selfishness of Internet users has manifested itself is in Internet routing; the Internet uses protocols to route user traffic based on metrics that are not necessarily aligned with the incentives of the users, and users have responded by, for example, deploying overlay networks to override these metrics. This phenomenon in Internet architecture is the motivation for the inquiry in the remainder of this section.

We should note that the main result we prove in this section, albeit original, admits a shorter alternative derivation using notions of equivalence between characterizations of evolution stability in the general setting of population games. Our main contribution, therefore, primarily lies in the discussion and intuition of the analytical approach used in the derivation.

\subsubsection{Routing and architectural stability}

One of the goals of algorithmic game theory is to provide an understanding of the workings and properties of the Internet \citep{Algorithmic-Game-Theory}. {\em Routing}, that is, determining the paths along which traffic is sent from sources to destinations, is perhaps the single most important function that supports a network. Routing has been studied from a game-theoretic perspective in the context of {\em congestion games} wherein a population of agents select paths in a network of congestible resources (in particular, communication links) so as to selfishly minimize delay.

The congestion games of interest in this section are {\em nonatomic}. In a nonatomic congestion game, the population of agents forms a continuum so that each agent controls an infinitesimal amount of load. In these games, as mentioned earlier, Nash equilibria are also called {\em Wardrop equilibria}, due to the early studies of Wardrop in road traffic transportation systems~\citep{Wardrop}. Under mild assumptions, Wardrop equilibria are the unique global minimizers of a strictly convex function.

One of our goals in studying nonatomic congestion games is to provide a game-theoretic view of {\em architectural properties} of routing protocols. For our purposes, a routing protocol is a collection of probability distributions over the paths of a network (where each probability distribution corresponds to a source-destination pair) that is used to determine the routes packets traverse from source to destination. Packets form a population of mass equal to the demand. Then, intuitively by the law of large numbers, a routing protocol induces a {\em flow}. 

In modeling a routing protocol in this way we assume that flow demand is {\em splittable} that, in practice, may interfere with the current implementation of enduser processes running on top of the routing protocol (such as, for example, congestion control as that is implemented in TCP). However, there is an ongoing effort for a possibly major overhaul of the Internet architecture (for example, see \citep{Clean-Slate} for a discussion), and splittable flows have important architectural advantages. Therefore, we believe we are making a legitimate assumption.

In inducing a flow, a routing protocol also induces a {\em social cost} that an optimal routing protocol would minimize. However, Internet traffic is generated and consumed by endusers that are independent self-interested players willing to deviate from the decisions of the routing protocol if they see fit. For example, a subset of the endusers may decide to circumvent the incumbent routing protocol's decisions through deploying an {\em overlay routing protocol} to get better performance. 

In this environment, game theory predicts that optimal routing is {\em architecturally unstable} (in the sense that there exist users whose performance would improve by deviating), and that instead enduser traffic is going to be routed according to a Nash equilibrium. However, game theory's prediction of equilibrium play leaves many important questions open.

In this section, we study Wardrop equilibria from an {\em evolutionary perspective}. One of the main concerns of evolutionary game theory is the study of how {\em invasions} (also called {\em mutations}) fare against an incumbent population when the invaders and incumbents interact. This is a particularly appropriate perspective to reason about questions regarding the {\em architectural stability} of routing protocols. The term `architectural stability' is to be understood as {\em stability to invasions}, where an invasion is an {\em architectural deviation}, that is, a change in the probability distribution that determines the routes of a subset of the packets. This abstraction is independent of  the implementation of any particular network architecture, enabling us to make precise mathematical statements. 

Using such a mathematical formulation, we are interested in answering questions such as: `Are Wardrop equilibria architecturally stable?' or `Can Wardrop equilibria be the natural outcome of architectural evolution?' In this section, we attempt a formal inquiry on the {\em architectural stability} of network flows. Informally, a flow is architecturally stable if it is induced by a routing protocol whose decisions users would not have an incentive to override. Our formal definition of architectural stability is based on that of {\em evolutionary stability}, i.e., stability to invasions by {\em mutations} ({\em architectural deviations}), drawing on the aforementioned concepts from evolutionary game theory. 

Referring back to the classical interpretation of evolutionary stability based on competition between organisms and consider an analogy in which {\em organisms} correspond to {\em packets}. The strategies {\em incumbent packets} follow correspond to the routing decisions of the incumbent routing protocol and these decisions induce a flow, say, $x$. Suppose now that a {\em mutation (architectural deviation)} emerges in which a fraction $\epsilon$ of the packets {\em switch} to a different strategy, and suppose that the induced flow due to this deviation is $\epsilon y$. Then the total flow in this (bimorphic) network consisting of two routing protocols (the incumbent and the mutant) is $z_{\epsilon} = (1-\epsilon) x + \epsilon y$. 

Payoffs are (the negative of) {\em congestion costs} (delays). Then the natural criterion to determine whether the architectural deviation will expand or decline is the relative delay of the mutant packets, say $y \cdot c(z_\epsilon)$, with respect to the delay, say $x \cdot c(z_{\epsilon})$, of the incumbents. That is, if $y \cdot c(z_\epsilon) < x \cdot c(z_{\epsilon})$, the mutation will expand, and if $y \cdot c(z_\epsilon) > x \cdot c(z_{\epsilon})$, it will decline. 

It is then natural to define {\em architecturally stable flows} in a manner analogous to the previous definition of {\em evolutionary stable strategies}, taking into account that packets do not play a normal form game (as in the standard setting wherein, as mentioned earlier, evolutionary stability is discussed) but instead participate in a congestion game. Therefore, an architecturally stable flow is, informally, one that is stable to architectural deviations (mutations). This definition gives precise meaning to our previous questions regarding the architectural properties of Wardrop equilibria.

We answer these questions with the (perhaps surprising) result that Wardrop equilibria are {\em evolutionary dominant} in the following sense: Suppose that the routing decisions in a network are made according to two routing protocols, one routing according to the Wardrop equilibrium and the second routing according to an arbitrary different strategy. We show that the expected cost of a packet routed according to the first routing protocol is lower than the expected cost of a packet routed according to the second routing protocol irrespective of the second protocol's routing strategy and irrespective of the fraction of packets that it routes. 

That Wardrop equilibria are dominant has two important practical implications. The first is that Wardrop routing is architecturally stable as no architectural deviation can benefit when the Wardrop flow is the incumbent. The second is that architectural deviations based on Wardrop routing can be successfully {\em incrementally deployed} against any other incumbent flow in the sense that incumbent flows that are not at a Wardrop equilibrium are vulnerable themselves to invasion by Wardrop routing. Therefore, to state that Wardrop equilibria are the only natural outcome of user-driven architectural evolution would perhaps not be an overstatement.

\subsubsection{Evolutionary dominance of Wardrop equilibria}

Let us now prove the aforementioned claim. We start by defining the notions of routing strategies and induced flows. Then we gradually define `evolutionary dominance', starting with the concepts of `invasion' and `evolutionary stability'. The link between {\em evolutionary} and the notion of {\em architectural stability} as we earlier defined it is created by the observation that an evolutionary stable flow resists incremental architectural changes (invasions). Finally, we show a flow is dominant if and only if it cannot be invaded, and use this as the basis of the proof of the main result.

We may view the aggregate demand $\sum_i \omega_i$ as being allocated to the paths using {\em routing strategies}, which are probabilistic rules that determine the paths packets follow: The routing strategy of commodity $i$ is $(1/\omega_i) X_i \equiv \bar{X}_i$, which is a probability vector. A routing strategy for the network is a collection of probability distributions, one for each commodity. We will denote a network routing strategy by $\bar{X}$ whenever it corresponds to multi-commodity flow $X$.

We may think of a {\em routing protocol} as an arbitrary collection of routing strategies, one for each commodity. A routing protocol induces a flow, and, therefore, it is equivalent to think of a routing protocol as a routing strategy. From now on we will not use the term `routing protocol', and we will use instead the terms `routing strategy' and `flow'. We are going to consider {\em bimorphic populations} whose packets use one of two routing strategies, say $\bar{X}$ and $\bar{Y}$. It is then straightforward to show that any convex combination $(1-\epsilon) \bar{X} + \epsilon \bar{Y}$ is also a routing strategy. We may think of the packets of a bimorphic population as being partitioned into two subsets, the packets of flow $X$ (using routing strategy $\bar{X}$) and the packets of flow $Y$ (using routing strategy $\bar{Y}$). We will say that the former packets are ``routed according to $X$'' and that the latter packets are ``routed according to $Y$.''

We make the following notational conventions: First, $c(x|y) \doteq x \cdot c(y)$, where $x, y \in \mathbb{X}$, and, second, $c(\bar{x}|y) \doteq \bar{x} \cdot c(y)$, where $\bar{x}$ is the probability vector corresponding to $x \in \mathbb{X}$ and $y \in \mathbb{X}$. Using this notation, it is straightforward to show that $c(x|z) \leq c(y|z)$ if and only if $c(\bar{x}|z) \leq c(\bar{y}|z)$. With this background and intuition in mind, let us proceed to the definition of evolutionary dominance.

\begin{definition}
\label{xcnbvlzfjv}
We say that $y \neq x$ invades $x$ if $c(\bar{y}|x) < c(\bar{x}|x)$.
\end{definition}

Intuitively, $y$ invades $x$ if the delay of a packet in the population of $x$ that switches its routing strategy from $\bar{x}$ to $\bar{y}$ is lower than the delay of all other packets in $x$. 

\begin{definition}
We say that $y \neq x$ has a positive invasion barrier against $x$, denoted $b(y|x)$, if there exists a $b(y|x) > 0$ such that for all $0 < \epsilon \leq b(y|x)$, $$c(\bar{x} | (1-\epsilon) x + \epsilon y) < c(\bar{y} | (1-\epsilon) x + \epsilon y).$$
\end{definition}

Think of $x$ as the {\em incumbent} flow and $y$ as the {\em mutant}. If $y$ has a positive invasion barrier against $x$, then clearly $y$ cannot invade $x$. (Use $\epsilon = 0$ and Definition~\ref{xcnbvlzfjv}.) However, a positive invasion barrier further implies that even if an $\epsilon$-fraction of the packets in $x$ (where $\epsilon \leq b(y|x)$) switch to routing strategy $\bar{y}$, then the incumbent packets have a lower delay than that of the mutants.

\begin{definition}
We say that $x$ is evolutionarily stable if there exists $\beta > 0$ such that all other flows $y$ have an invasion barrier $b(y|x) \geq \beta$ against $x$.
\end{definition}

In fact, this is also our definition of an {\em architecturally stable flow,} which, intuitively, is a flow resisting incremental architectural changes. This intuition is made precise in the following definition.

\begin{definition}
We say that $x$ is incrementally deployable against $y$ if, for all $\epsilon \in [0,1]$, $$c(\bar{x} | (1-\epsilon) y + \epsilon x) \leq c(\bar{y} | (1-\epsilon) y + \epsilon x).$$
\end{definition}

Intuitively, a flow $x$ is incrementally deployable\footnote{I should note that ever since these definitions relating the notions of evolutionary stability to network architecture where captured in 2011, the appropriate definition of incremental deployability, a term that originated in the networking literature, has troubled me a lot since then. I refer the reader interested in exploring different rigorous manifestations of the intuition behind this definition to \url{http://arxiv.org/abs/1601.03162}.} against an incumbent flow $y$, if as the incumbent packets gradually change their routing strategy from $\bar{y}$ to $\bar{x}$, the delay of the invading packets is not more than the delay of the incumbent packets.  Going back to our earlier statement that an evolutionarily stable flow resists incremental architectural changes, we can now state more precisely that if a flow is evolutionarily stable then no mutant flow is incrementally deployable against it. In the following definition, we introduce the concept of {\em dominance}.

\begin{definition}
We say that $x$ dominates $y$, if $b(y|x) = 1$. In this case, we say that $y$ has an infinite invasion barrier against $x$.
\end{definition}

Consider a population of packets in which some fraction of the packets are routed according to $x$ and the remaining packets are routed according to $y$. $x$ dominates $y$ if the delay of the former packets is lower than the delay of the latter packets irrespective of the population mix. 

\begin{definition}
We say that a flow is dominant if it dominates every other flow.
\end{definition}

Note that evolutionarily dominant flows are not only evolutionarily stable (that is, robust to invasions), but they are also incrementally deployable against all other incumbents in the sense that these incumbents are vulnerable themselves to invasion by the evolutionarily dominant flow. 

Let us now proceed to proving the evolutionary dominance of a Wardrop flow. The essential property of nonatomic congestion games giving rise to the existence of a dominant flow is captured in Lemma~\ref{sdfsdflksdjf}. In the lemma's statement we need the following definition.

\begin{definition}
Given $x$ and $y$ and $\epsilon \in [0,1]$ let $$\delta(\epsilon | x, y) = c(x | (1-\epsilon) x + \epsilon y) - c(y | (1-\epsilon) x + \epsilon y).$$
\end{definition}

\begin{lemma}
\label{sdfsdflksdjf}
$\delta(\epsilon| x, y)$ is a strictly decreasing function of $\epsilon$.
\end{lemma}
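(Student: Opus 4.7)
The plan is to rewrite $\delta(\epsilon|x,y)$ as a sum over resources (links), exploiting the fact that in a nonatomic congestion game path costs decompose additively into link costs. Writing $z_\epsilon = (1-\epsilon)x + \epsilon y$ and using $c_i^j(z) = \sum_{e \in j} c_e(z_e)$ together with the definition of the link flow $x_e = \sum_{i,j:\, e \in j} x_i^j$, an interchange of summation yields
\begin{align*}
\delta(\epsilon|x,y) \;=\; (x - y) \cdot c(z_\epsilon) \;=\; \sum_{e \in E} (x_e - y_e)\, c_e\!\bigl((z_\epsilon)_e\bigr),
\end{align*}
where $(z_\epsilon)_e = (1-\epsilon) x_e + \epsilon y_e$.

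Next I would show strict monotonicity directly from the strict monotonicity of each $c_e$, without assuming differentiability. Fix $\epsilon_1 < \epsilon_2$ in $[0,1]$ and consider the difference $\delta(\epsilon_2|x,y) - \delta(\epsilon_1|x,y) = \sum_e (x_e - y_e)\bigl[c_e((z_{\epsilon_2})_e) - c_e((z_{\epsilon_1})_e)\bigr]$. On each link $e$, $(z_{\epsilon_2})_e - (z_{\epsilon_1})_e = (\epsilon_2 - \epsilon_1)(y_e - x_e)$, so if $y_e > x_e$ then $c_e$ strictly increases between the two arguments and $(x_e-y_e) < 0$, making the $e$-th term strictly negative; the case $y_e < x_e$ is symmetric; and the term vanishes when $y_e = x_e$. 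Summing over $e$ yields $\delta(\epsilon_2|x,y) < \delta(\epsilon_1|x,y)$, provided at least one link has $x_e \neq y_e$.

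The only subtle point, which I would flag explicitly, is that $\delta(\epsilon|x,y)$ depends on $x$ and $y$ only through their induced link flows, so strict decrease requires $x$ and $y$ to differ on at least one link. This is harmless in the present setting: two path flows inducing identical link flows produce identical path-delay vectors $c(\cdot)$ and are operationally indistinguishable in the nonatomic congestion game, so whenever $x \neq y$ in a non-trivial (link-flow) sense, the strict inequality holds. The main (mild) obstacle is thus just bookkeeping — doing the interchange of summation carefully and handling the non-differentiable case via the two-point monotonicity argument above rather than by differentiating under the sum.
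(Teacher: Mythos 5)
Your proof is correct and uses the same link-level decomposition as the paper (which the paper isolates as Lemma~\ref{bvxjzfhgaosfv}, $c(x|y) = \sum_{e \in E} c_e(y_e) x_e$), arriving at the identical expression $\delta(\epsilon|x,y) = \sum_{e \in E} (x_e - y_e)\, c_e\bigl(x_e + \epsilon(y_e - x_e)\bigr)$. Where you diverge is in how strict monotonicity is extracted: the paper differentiates term by term to get $\delta'(\epsilon|x,y) = -\sum_{e \in E} (x_e - y_e)^2\, c'_e\bigl(x_e + \epsilon(y_e - x_e)\bigr) < 0$, which tacitly assumes each $c_e$ is differentiable, whereas your two-point comparison of $\delta(\epsilon_2|x,y) - \delta(\epsilon_1|x,y)$ needs only that each $c_e$ is strictly increasing --- strictly weaker hypotheses, matching the paper's stated assumption on cost functions more faithfully. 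You also make explicit a degenerate case the paper's proof glosses over: if $x$ and $y$ induce identical link flows ($x_e = y_e$ for all $e$), then $\delta(\epsilon|x,y) \equiv 0$ is constant rather than strictly decreasing; the paper's derivative formula also vanishes identically in that case, so its proof has the same implicit caveat. Your observation that such flows are operationally indistinguishable (they produce the same path-delay vector) is the right way to dispose of it, and is consistent with how the downstream lemmas (e.g., Lemma~\ref{zbkvjvsaio}) only ever use $\delta$ for flows that differ at the link level.
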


In the proof of Lemma~\ref{sdfsdflksdjf} we are going to use the following simple result whose (elementary) proof can, for example, be found in \citep{Roughgarden}.

\begin{lemma}
\label{bvxjzfhgaosfv}
$c(x|y) = \sum_{e \in E} c_e(y_e) x_e$.
\end{lemma}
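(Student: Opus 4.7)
The plan is to unfold the definition $c(x|y) \doteq x \cdot c(y)$ using the specific structure of path costs in a nonatomic congestion game, and then reorganize a double sum by swapping the order of summation (edges versus paths through an edge). This is a pure bookkeeping argument that relies only on definitions laid out in the Preliminaries subsection of Section~\ref{selfish_routing}.

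First I would write out the inner product over path-indexed coordinates:
\begin{align*}
c(x|y) = x \cdot c(y) = \sum_{i \in N} \sum_{j \in S_i} x_i^j \, c_i^j(y).
\end{align*}
Next I would substitute the definition of the path delay $c_i^j(y) = \sum_{e \in j} c_e(y_e)$, yielding a double sum $\sum_{i, j} x_i^j \sum_{e \in j} c_e(y_e)$, and then swap the order of summation so that the outer index runs over resources $e \in E$:
\begin{align*}
c(x|y) = \sum_{e \in E} c_e(y_e) \sum_{i \in N} \sum_{j \in S_i \,:\, e \in j} x_i^j.
\end{align*}
Finally I would invoke the definition $x_e = \sum_{i \in N, \, j \in S_e} x_i^j$ (with $S_e$ the set of paths carrying positive flow through $e$; equivalently, the collection $\{(i,j) : j \in S_i,\ e \in j\}$), so that the inner double sum collapses to $x_e$, which gives $c(x|y) = \sum_{e \in E} c_e(y_e)\, x_e$ as claimed.

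There is essentially no obstacle here: the only subtlety is ensuring that the rearrangement of sums is justified (immediate since the index sets are finite) and that the notation $x_e$ used in the lemma coincides with the aggregate edge flow defined in the Preliminaries. Consequently I would keep the write-up to a few lines, with the swap of summation order as the only substantive step.
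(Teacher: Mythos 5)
Your proof is correct and is exactly the standard computation: the paper itself gives no proof of this lemma, deferring to the citation of Roughgarden, and the argument found there is the same expand-substitute-swap-summation bookkeeping you describe. Your remark reconciling the paper's definition of $S_e$ (paths assigning positive flow to $e$) with the index set $\{(i,j) : e \in j\}$ is a reasonable touch, since zero-flow paths contribute nothing to either sum.
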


\begin{proof}[Proof of Lemma~\ref{sdfsdflksdjf}]
By Lemma~\ref{bvxjzfhgaosfv}, we have
\begin{align*}
\delta(\epsilon| x, y) &= \sum_{e \in E} c_e((1-\epsilon) x_e + \epsilon y_e) x_e - \sum_{e \in E} c_e((1-\epsilon) x_e + \epsilon y_e) y_e\\
     &= \sum_{e \in E} c_e(x_e + \epsilon (y_e - x_e)) (x_e - y_e)
\end{align*}
and, therefore,
\begin{align*}
\delta'(\epsilon| x, y) = - \sum_{e \in E} (x_e - y_e)^2 c'_e(x_e + \epsilon (y_e - x_e)).
\end{align*}
Thus, by the assumption that the resource costs are strictly increasing, $\delta'(\epsilon| x, y) < 0$.
\end{proof}

Consider an incumbent flow $x$ and a mutant flow $y$. Lemma~\ref{sdfsdflksdjf} implies that, unless the mutants can invade when their mass is small, they will not be able to invade as their mass increases. In fact, Lemma~\ref{sdfsdflksdjf} also implies that even if the mutants can invade when their mass is small, as their mass increases, the incumbents might be able to get even or win. This is illustrated in Figure~\ref{saldkjfnalsdkhjf}, which shows an example of $\delta(\epsilon | x, y)$ assuming that the resource cost functions are linear. 

\begin{figure}[tb]
\centering
\includegraphics[width=12cm]{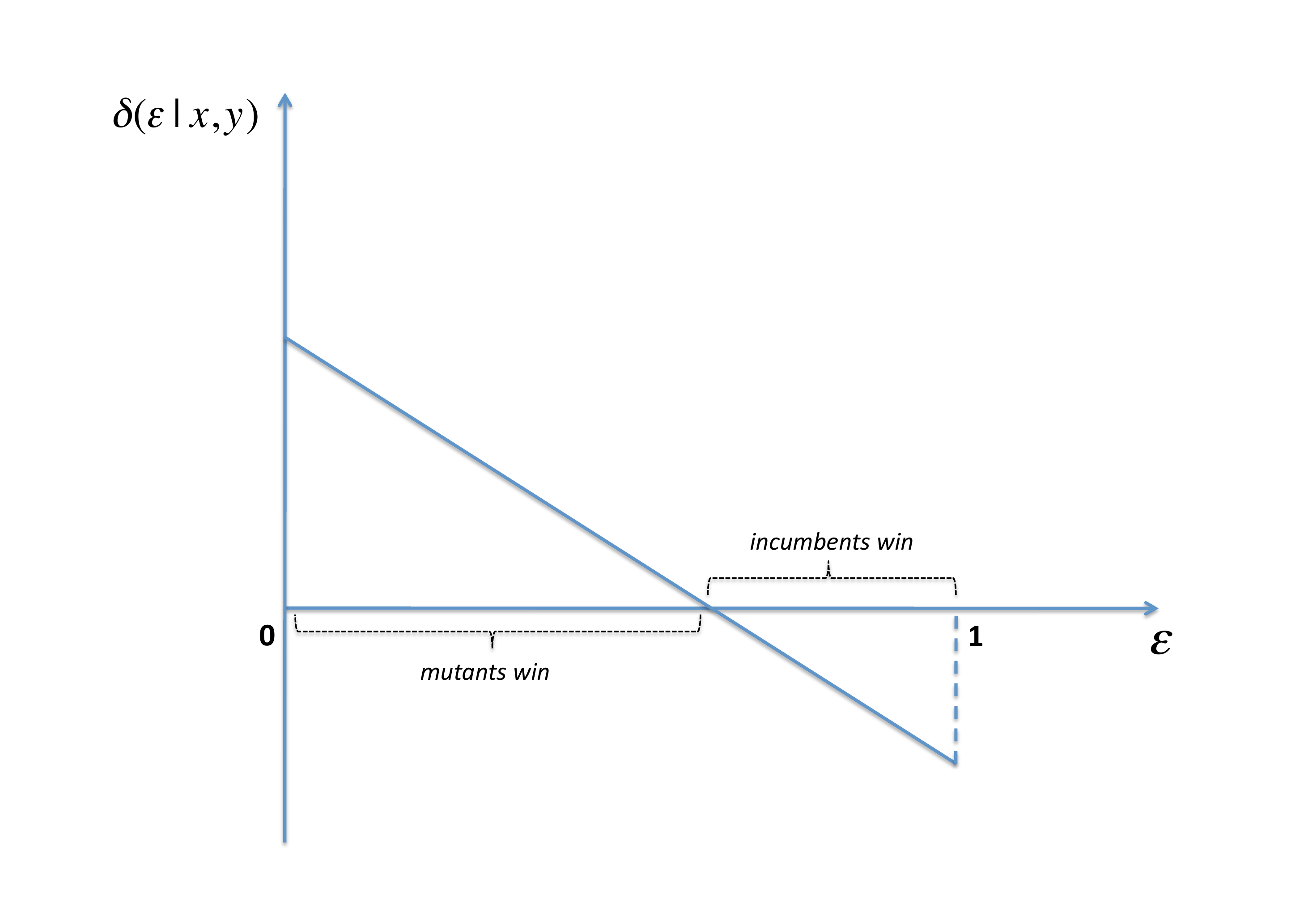}
\caption{\label{saldkjfnalsdkhjf}
$\delta(\epsilon|x,y)$ as a function of $\epsilon$.}
\end{figure}

\begin{lemma}
\label{zbkvjvsaio}
For any flows $x$ and $y \neq x$, either $y$ invades $x$ or $x$ dominates $y$.
\end{lemma}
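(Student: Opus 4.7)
The plan is to recast both ``$y$ invades $x$'' and ``$x$ dominates $y$'' as sign conditions on the one-variable function $\delta(\,\cdot\,|\,x,y)$ defined just before Lemma \ref{sdfsdflksdjf}, and then let the strict monotonicity supplied by that lemma eliminate the middle ground between the two notions.

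First I would translate the definitions. By the equivalence $c(x|z)\le c(y|z)\iff c(\bar x|z)\le c(\bar y|z)$ recorded in the preliminaries, the condition ``$y$ invades $x$,'' namely $c(\bar y|x)<c(\bar x|x)$, is equivalent to $c(x|x)-c(y|x)>0$, that is, to $\delta(0\,|\,x,y)>0$. By the same equivalence, ``$x$ dominates $y$,'' i.e.\ $b(y|x)=1$, unrolls to the requirement that $c(x|(1-\epsilon)x+\epsilon y)<c(y|(1-\epsilon)x+\epsilon y)$ for every $\epsilon\in(0,1]$, that is, $\delta(\epsilon\,|\,x,y)<0$ on the whole interval $(0,1]$.

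Next I would invoke Lemma \ref{sdfsdflksdjf}: $\epsilon\mapsto\delta(\epsilon\,|\,x,y)$ is strictly decreasing (this is where $y\neq x$ and the strict monotonicity of the link cost functions enter). Suppose $y$ fails to invade $x$; then $\delta(0\,|\,x,y)\le 0$, and by strict decrease $\delta(\epsilon\,|\,x,y)<\delta(0\,|\,x,y)\le 0$ for every $\epsilon\in(0,1]$. By the translation in the previous paragraph, this is exactly the statement that $x$ dominates $y$, which closes the dichotomy.

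There is essentially no analytic difficulty here; the heavy lifting has already been done in Lemma \ref{sdfsdflksdjf}. The only point that requires care is the notational bookkeeping between a flow $x$ and its normalized routing-strategy counterpart $\bar x$ when translating between the definitions of invasion and dominance (stated in terms of $\bar x,\bar y$) and the function $\delta$ (stated in terms of $x,y$); this is discharged by the elementary equivalence recorded in the preliminaries. No further machinery is needed.
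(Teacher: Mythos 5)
Your proof is correct and follows essentially the same route as the paper's: contrapose to the case where $y$ does not invade $x$, observe $\delta(0\,|\,x,y)\le 0$, and let the strict monotonicity from Lemma~\ref{sdfsdflksdjf} force $\delta(\epsilon\,|\,x,y)<0$ on $(0,1]$, which is dominance. Your extra care with the $x$ versus $\bar{x}$ bookkeeping is a minor tidying of a step the paper passes over silently, not a different argument.
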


\begin{proof}
The proof is clear from Figure~\ref{saldkjfnalsdkhjf}; if $y$ does not invade $x$, then $\delta(0 | x,y)$ is negative and $\delta(\epsilon | x,y)$ will always remain so. Formally, suppose $y$ does not invade $x$. Then $\delta(0 | x,y) = c(x|x) - c(y|x) \leq 0$, and, therefore, Lemma~\ref{sdfsdflksdjf} implies that, for all $\epsilon \in (0,1]$, $c(x | (1 - \epsilon) x + \epsilon y) < c(y | (1 - \epsilon) x + \epsilon y)$. Therefore, $y$ has an infinite invasion barrier against $x$, and, thus, $x$ dominates $y$.
\end{proof}

\begin{lemma}
\label{zxkcvboldkfn}
A flow is dominant if and only if it cannot be invaded.
\end{lemma}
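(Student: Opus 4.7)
The plan is to prove both directions as short consequences of the preceding Lemma \ref{zbkvjvsaio} together with the continuity of $c$.

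For the reverse direction, I would argue contrapositively: suppose $x$ is not dominant. Then by definition there exists some $y\neq x$ that $x$ fails to dominate, i.e., $b(y|x)<1$. Lemma \ref{zbkvjvsaio} gives the dichotomy that either $y$ invades $x$ or $x$ dominates $y$; since the second alternative is excluded, $y$ invades $x$. Thus $x$ can be invaded, which is what was needed.

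For the forward direction, assume $x$ is dominant, so for every $y\neq x$ we have $b(y|x)=1$, meaning
\begin{align*}
c(\bar{x}\mid (1-\epsilon)x+\epsilon y)<c(\bar{y}\mid (1-\epsilon)x+\epsilon y)\quad\text{for all }0<\epsilon\leq 1.
\end{align*}
Since the cost function $c$ is continuous and the inner products $c(\bar{x}\mid\cdot),c(\bar{y}\mid\cdot)$ are continuous in their second argument, letting $\epsilon\to 0^{+}$ yields $c(\bar{x}\mid x)\leq c(\bar{y}\mid x)$. By Definition \ref{xcnbvlzfjv}, this means $y$ does not invade $x$. Since $y\neq x$ was arbitrary, $x$ cannot be invaded.

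The main (and very minor) obstacle is merely the limiting step in the forward direction, where a strict inequality holding for all $\epsilon>0$ yields only a non-strict inequality at $\epsilon=0$; but since the invasion condition itself is a strict inequality $c(\bar{y}\mid x)<c(\bar{x}\mid x)$, the weak inequality $c(\bar{x}\mid x)\leq c(\bar{y}\mid x)$ is exactly the negation of invasion, so no further work is needed. Everything else is immediate from Lemma \ref{zbkvjvsaio} and the definitions of invasion, invasion barrier, and dominance.
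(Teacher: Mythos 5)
Your proof is correct and follows essentially the same route as the paper, which likewise derives the ``cannot be invaded $\Rightarrow$ dominant'' direction from Lemma~\ref{zbkvjvsaio} and disposes of the forward direction by noting that an invadable flow cannot be dominant (the paper phrases this via evolutionary stability, you via the limit $\epsilon \to 0^{+}$, but both rest on continuity of $\delta(\cdot\,|\,x,y)$ at $\epsilon=0$). Your observation that the weak inequality obtained in the limit is exactly the negation of the strict invasion condition is the right way to close that step.
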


\begin{proof}
If a flow can be invaded, then it is not evolutionary stable, and, therefore, it is not dominant. Suppose now that a flow cannot be invaded. There are two ways to show that it is dominant. The first is to observe that flow $x$ cannot be invaded if and only if $\delta(0 | x, y) \leq 0$ for all $y$. Dominance is then implied by Lemma~\ref{sdfsdflksdjf}. The second is to use Lemma~\ref{zbkvjvsaio} in a straightforward way.
\end{proof}

We are now ready to prove the dominance of Wardrop equilibria.

\begin{lemma}
\label{dfjglnsjfngldsk}
If a flow is evolutionary stable, then it is dominant.
\end{lemma}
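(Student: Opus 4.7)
The proof should be short because the heavy lifting has already been done in Lemmas \ref{sdfsdflksdjf}, \ref{zbkvjvsaio}, and \ref{zxkcvboldkfn}. The plan is to reduce evolutionary stability to the condition ``$x$ cannot be invaded'' and then appeal to Lemma \ref{zxkcvboldkfn} (or equivalently Lemma \ref{zbkvjvsaio}) to conclude dominance.

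First I would unpack the hypothesis. If $x$ is evolutionarily stable, then for every $y \neq x$ there exists a positive invasion barrier $b(y|x) \geq \beta > 0$, meaning that for all $\epsilon \in (0, b(y|x)]$ we have $c(\bar{x}\,|\,(1-\epsilon)x + \epsilon y) < c(\bar{y}\,|\,(1-\epsilon)x + \epsilon y)$, i.e., $\delta(\epsilon\,|\,x,y) < 0$ for at least some $\epsilon > 0$. Since $\delta(\cdot\,|\,x,y)$ is continuous in $\epsilon$ (the cost function $c$ is continuous and the combination is affine in $\epsilon$), letting $\epsilon \downarrow 0$ yields $\delta(0\,|\,x,y) \leq 0$, which, by the definitions of $\delta$ and of invasion, is precisely the statement that $y$ does not invade $x$.

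Next I would invoke the alternative provided by Lemma \ref{zbkvjvsaio}: for any $y \neq x$, either $y$ invades $x$ or $x$ dominates $y$. Having just ruled out the first alternative for every $y \neq x$, we conclude that $x$ dominates every other flow, which is exactly the definition of dominance. Equivalently, one can cite Lemma \ref{zxkcvboldkfn} directly, which states that a flow is dominant if and only if it cannot be invaded.

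I expect no real obstacle here; the only subtle point is the passage from the strict inequality $\delta(\epsilon\,|\,x,y) < 0$ on $(0, b(y|x)]$ to the weak inequality $\delta(0\,|\,x,y) \leq 0$ at the boundary, which is justified by continuity of $\delta$ in $\epsilon$ (this is why a weak, rather than strict, version of ``not invades'' suffices in Definition \ref{xcnbvlzfjv}, and this is precisely the slack that Lemma \ref{sdfsdflksdjf}'s strict monotonicity exploits to yield strict dominance on the entire interval $(0,1]$). Thus the $\beta$ in the definition of evolutionary stability plays essentially no role beyond ensuring that at least one testing $\epsilon > 0$ exists, and the conclusion follows in a few lines.
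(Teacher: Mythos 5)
Your proof is correct and follows essentially the same route as the paper's: evolutionary stability gives each $y \neq x$ a positive invasion barrier, hence $y$ does not invade $x$, and Lemma~\ref{zbkvjvsaio} then yields that $x$ dominates every $y$. Your explicit continuity argument for passing from $\delta(\epsilon\,|\,x,y)<0$ on $(0,b(y|x)]$ to $\delta(0\,|\,x,y)\leq 0$ is a slightly more careful rendering of a step the paper treats as immediate, but the substance is identical.
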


\begin{proof}
Consider an evolutionarily stable flow $x$ and another flow $y$. By the definition of evolutionarily stability, $y$ has a positive invasion barrier against $x$, and, therefore, it does not invade $x$. Then, by Lemma~\ref{zbkvjvsaio}, $x$ dominates $y$. Since $y$ is arbitrary, the lemma is proved.
\end{proof}

\begin{theorem}
\label{pqowbfdnm}
The Wardrop flow is dominant.
\end{theorem}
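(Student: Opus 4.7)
The plan is to reduce the theorem to showing that the Wardrop flow cannot be invaded and then to invoke Lemma~\ref{zxkcvboldkfn}. Let $x^*$ denote the (unique, under strictly increasing link costs) Wardrop equilibrium of the selfish routing game. Viewed as a population game with cost function $c$, $x^*$ is a Nash equilibrium in the sense of the definition in Section~\ref{preliminaries}, adapted to the minimization perspective.

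First I would invoke the variational characterization of $x^*$: for every feasible flow $y\in\mathbb{X}$ and every commodity $i$, the per-commodity inequality $(y_i - x^*_i)\cdot c_i(x^*) \ge 0$ holds. This per-commodity restriction is legitimate because the feasibility constraints decouple across commodities, so one may test the Nash condition using feasible flows that differ from $x^*$ in only one commodity. Dividing each such inequality by $\omega_i>0$ and summing over $i$ yields
\[
\sum_{i=1}^n \bar{y}_i \cdot c_i(x^*) \;\ge\; \sum_{i=1}^n \bar{x^*}_i \cdot c_i(x^*),
\]
i.e., $c(\bar{y}\,|\,x^*) \ge c(\bar{x^*}\,|\,x^*)$. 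By Definition~\ref{xcnbvlzfjv}, this precisely says that no $y\neq x^*$ invades $x^*$. Since $y$ was arbitrary, $x^*$ cannot be invaded, and Lemma~\ref{zxkcvboldkfn} immediately delivers that $x^*$ dominates every other flow, i.e., $x^*$ is dominant.

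The proof is essentially a one-line consequence of the Wardrop variational inequality together with Lemma~\ref{zxkcvboldkfn}; all of the analytically non-trivial content (the propagation of a non-strict ``no invasion at $\epsilon=0$'' into a strict ``$x^*$-wins at every $\epsilon>0$'' dominance statement) has already been absorbed into Lemma~\ref{sdfsdflksdjf} and Lemma~\ref{zxkcvboldkfn}. The only point requiring care is the bookkeeping between flow variables $x,y$ and routing-strategy variables $\bar{x},\bar{y}$: the Wardrop inequality is most natural at the flow level, whereas the invasion criterion of Definition~\ref{xcnbvlzfjv} is phrased at the routing-strategy level. The per-commodity decomposition of the Wardrop inequality and the positivity of $\omega_i$ bridge the two formulations cleanly, so I do not anticipate a genuine obstacle beyond this brief verification.
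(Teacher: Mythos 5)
Your proof is correct, but it takes a genuinely different route from the paper's. The paper's own proof is a two-liner: it cites \cite{SelfishRoutingEvolution} and \cite{StableGames} for the fact that Wardrop flows are \emph{evolutionarily stable} (i.e., admit a uniform positive invasion barrier), and then invokes Lemma~\ref{dfjglnsjfngldsk} (evolutionary stability implies dominance). You instead establish the strictly weaker property that the Wardrop flow \emph{cannot be invaded}, deriving it from scratch via the per-commodity variational characterization $(y_i - x^*_i)\cdot c_i(x^*)\ge 0$, and then close with Lemma~\ref{zxkcvboldkfn} (dominant iff not invadable). What your approach buys is self-containment: you do not need the external evolutionary-stability results at all, and you only need the non-strict Nash inequality at $\epsilon=0$, letting the strict monotonicity of $\delta(\cdot\,|x,y)$ from Lemma~\ref{sdfsdflksdjf} do all the propagation work --- which is exactly the division of labor the paper's Lemma~\ref{zbkvjvsaio}/Lemma~\ref{zxkcvboldkfn} machinery was built for. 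Your care with the $\bar{x}$ versus $x$ bookkeeping is also warranted: the paper asserts the equivalence $c(x|z)\le c(y|z) \Leftrightarrow c(\bar{x}|z)\le c(\bar{y}|z)$ without proof, and your per-commodity decomposition (dividing each inequality by $\omega_i>0$ before summing) is precisely the argument that makes the routing-strategy-level statement legitimate rather than merely the flow-level one. The only mild caveat is that the per-commodity decoupling of the Nash condition deserves the one-sentence justification you give (feasible deviations restricted to a single commodity are themselves feasible states); with that in place the argument is complete.
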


\begin{proof}
\cite{SelfishRoutingEvolution} and \cite{StableGames} show that Wardrop flows are evolutionarily stable. The theorem is then implied by Lemma \ref{dfjglnsjfngldsk}.
\end{proof}

Let us finally relate the notion of evolutionary dominance with that of incremental deployability. Note, in this vein, that whereas evolutionary stability secures a ``defensive position'' (prevents invasion by mutants), evolutionary dominance further implies that ``aggression'' (against incumbents) is successful. Intuitively, a flow is {\em deployable} in a selfish environment if it is not unsuccessful (i.e., its packets do not fare worse than the incumbent packets). However, the structure of nonatomic congestion games implies that if a flow is deployable, then it must necessarily be successful against the incumbents. This intuition is captured in the following theorem.

\begin{theorem}
\label{asdohfoid}
Let $x \neq y$. $x$ dominates $y$ if and only if $x$ is incrementally deployable against $y$.
\end{theorem}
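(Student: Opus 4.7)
The plan is to translate both conditions into statements about the single scalar function $\delta(\cdot \mid x,y)$ and then read the theorem off Lemma~\ref{sdfsdflksdjf}. First I would reparametrize the definition of incremental deployability by substituting $\epsilon' = 1-\epsilon$, which recasts it as: for every $\epsilon \in [0,1]$, $c(\bar{x} \mid (1-\epsilon)x + \epsilon y) \leq c(\bar{y} \mid (1-\epsilon)x + \epsilon y)$. Using the equivalence stated earlier that $c(x \mid z) \leq c(y \mid z) \iff c(\bar{x} \mid z) \leq c(\bar{y} \mid z)$, this condition is equivalent to $\delta(\epsilon \mid x,y) \leq 0$ for all $\epsilon \in [0,1]$. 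The dominance condition $b(y \mid x) = 1$ is likewise equivalent, by the same transfer from barred to unbarred quantities, to the strict inequality $\delta(\epsilon \mid x,y) < 0$ for all $\epsilon \in (0,1]$.

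For the forward direction (dominance implies deployability), I would assume $\delta(\epsilon \mid x,y) < 0$ for every $\epsilon \in (0,1]$ and invoke continuity of $\delta(\cdot \mid x,y)$ (inherited from continuity of the resource cost functions $c_e$, which are increasing and, as standard in the selfish routing model, continuous) to pass to the limit $\epsilon \to 0^+$, obtaining $\delta(0 \mid x,y) \leq 0$. Combined with the hypothesis on $(0,1]$, this yields $\delta(\epsilon \mid x,y) \leq 0$ on the closed interval $[0,1]$, which is exactly incremental deployability after undoing the reparametrization.

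For the reverse direction (deployability implies dominance), I would assume $\delta(\epsilon \mid x,y) \leq 0$ for every $\epsilon \in [0,1]$. The endpoint $\epsilon = 0$ gives $\delta(0 \mid x,y) \leq 0$. By Lemma~\ref{sdfsdflksdjf}, $\delta(\cdot \mid x,y)$ is strictly decreasing in its first argument, so for every $\epsilon \in (0,1]$ we have $\delta(\epsilon \mid x,y) < \delta(0 \mid x,y) \leq 0$. Translating this strict inequality back to barred costs gives $c(\bar{x} \mid (1-\epsilon)x + \epsilon y) < c(\bar{y} \mid (1-\epsilon)x + \epsilon y)$ for every $\epsilon \in (0,1]$, which is precisely $b(y \mid x) = 1$, i.e.\ that $x$ dominates $y$.

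There is no real obstacle once the two conditions are phrased through $\delta$: the entire content of the theorem is the structural fact, already proved in Lemma~\ref{sdfsdflksdjf}, that $\delta(\cdot \mid x,y)$ is \emph{strictly} monotone, so a weak inequality at the left endpoint $\epsilon = 0$ propagates to strict inequalities on $(0,1]$. The only subtlety is the boundary behavior at $\epsilon = 0$, where dominance imposes only strict inequality on $(0,1]$ while deployability demands a weak inequality at $0$; this gap is closed for free by continuity of $\delta$ in the forward direction and by strict monotonicity in the reverse direction.
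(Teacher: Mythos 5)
Your proof is correct and takes essentially the same route as the paper's: reparametrize incremental deployability via $\epsilon \mapsto 1-\epsilon$, reduce both conditions to sign conditions on $\delta(\cdot \mid x, y)$, and invoke the strict monotonicity established in Lemma~\ref{sdfsdflksdjf}. The only (minor) difference is that you justify the forward direction's boundary inequality $\delta(0 \mid x,y) \leq 0$ explicitly by continuity, whereas the paper dismisses that direction as clear.
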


\begin{proof}
If $x$ dominates $y$, then, clearly, $x$ is also incrementally deployable against $y$. Suppose now that $x$ is incrementally deployable against $y$. Then, for all $\epsilon \in [0,1]$,
\begin{align*}
c(\bar{x} | (1-\epsilon) y + \epsilon x) \leq c(\bar{y} | (1-\epsilon) y + \epsilon x).
\end{align*}
This is equivalent to saying that, for all $\epsilon \in [0,1]$,
\begin{align*}
c(\bar{x} | (1-\epsilon) x + \epsilon y) \leq c(\bar{y} | (1-\epsilon) x + \epsilon y),
\end{align*}
or that, for all $\epsilon \in [0,1]$, $\delta(\epsilon|x,y) \leq 0$. However, by Lemma~\ref{sdfsdflksdjf}, $\delta(\epsilon|x,y)$ is a strictly decreasing function of $\epsilon$, and, therefore, for all $\epsilon \in (0,1]$, $\delta(\epsilon|x,y) < 0$, which completes the proof.
\end{proof}

Therefore, the previous theorem implies that, in nonatomic congestion games, dominance and incremental deployability (as we earlier defined it) are, in fact, equivalent concepts.

\subsubsection{Discussion}

Let us attempt to interpret the meaning that Wardrop equilibria are evolutionarily dominant in a networking environment. Suppose in this vein that routing in a network is such that the induced flow corresponds to some particular vector $x \neq x^*$, where $x^*$ is the Wardrop equilibrium. Suppose further that a network architect computes $x^*$ and distributes this value to the population of packets. Will each packet be willing to immediately switch from routing strategy $\bar{x}$ to $\bar{x}^*$? 

In the single-population case, the answer is a definitive yes (provided packets are greedy) as the delay of each packet performing such a switch improves. However, in the multi-population setting, each packet in the aggregate population thereof corresponds to a particular commodity. Assuming packets are independent selfish agents, a more appropriate question would be to ask whether each packet would be willing to switch from $\bar{x}_i$ to $\bar{x}_i^*$ where $i$ is the respective commodity. 

Pondering this question it becomes clear that our definition of evolutionary dominance does not ensure such a property necessarily holds for all commodities in the network but rather makes a statement that the average delay across all commodities will be superior as packets start switching. An interesting question that this observation raises is as follows: Does there exist a transition strategy from an arbitrary flow to the Wardrop flow during which every packet greedily benefits during the transition process? We believe there is, but we have not verified the conjecture.

\subsubsection{Evolutionary dominance and the Minty variational inequality}

Raising again the level of abstraction from selfish routing games to general population games, the notion of evolutionary dominance is of particular interest to multiplicative weights dynamics, as our results imply that evolutionarily dominant states are (interior) globally attractive under multiplicative weights dynamics. The notion of evolutionary dominance is also keenly related to the {\em Minty variational inequality,} which can be shown to be a special case of the standard definition of a variational inequality (which is sometimes referred to as a Stampacchia variational inequality).

\begin{definition}
Given a nonempty, closed, and convex set $\mathbb{X} \subseteq \mathbb{R}^\ell$ and a continuous vector field $F: \mathbb{R}^\ell \rightarrow \mathbb{R}^\ell$, the finite-dimensional {\em variational inequality problem} is to find an element $X^* \in \mathbb{X}$ such that
\begin{align*}
(X^* - X) \cdot F(X^*) \geq 0, \forall X \in \mathbb{X}.
\end{align*}
We call such an $X_*$ a {\em critical point} of $F$.
\end{definition}

\begin{definition}
Given a nonempty, closed, and convex set $\mathbb{X} \subseteq \mathbb{R}^\ell$ and a continuous vector field $F: \mathbb{R}^\ell \rightarrow \mathbb{R}^\ell$, the finite-dimensional {\em Minty variational inequality problem} is to find an element $X^* \in \mathbb{X}$ such that
\begin{align}
(X^* - X) \cdot F(X) \geq 0, \forall X \in \mathbb{X}.\label{MVIP}
\end{align}
We call such an $X_*$ a {\em Minty solution} of $F$.
\end{definition}

Minty solutions are necessary critical points in that they necessarily satisfy the standard definition of a variational inequality above (slightly extending our aforementioned definition of Nash equilibria in population games). Considering the setting of population games, a Minty solution is known as a {\em neutrally stable state}. If the inequality in \eqref{MVIP} is strict (and $\mathbb{X}$ is a simplotope), we obtain the definition of a GESS. Of course Minty solutions are not guaranteed to exist, as are global evolutionarily stable states. It can be shown, but the proof is omitted for brevity, that a GESS is evolutionarily dominant, using an appropriate extension of the definition of evolutionary dominance from the setting of selfish routing games. Noting that our asymptotic stability results in general population games imply that a GESS is (interior) globally asymptotically stable under the multiplicative weights dynamics, it is interesting to ask which classes of population games are equipped with evolutionarily dominant states. For example, potential games with a convex potential function are equipped with such a state, and it can be shown by an argument analogous to that used in Section \ref{Weissings_result} that population games with an interior ESS are also in this class. However, this list of examples is certainly not exhaustive; the question deserves further scrutiny.

We note that work along answering the previous question has already been taken up in the literature on variational inequalities \citep{Konnov, DanHad, John1, John2, Crespi}. However, before concluding this paper, we also note that a related (algorithmic in nature) question concerns recognizing population games equipped with an evolutionarily dominant state as well as recognizing games equipped with at least one evolutionarily stable state (which may not be dominant). In this vein, we point out that answering the latter question is thought to be a computationally hard problem: \cite{Conitzer} shows that even in symmetric bimatrix games this decision problem is $\Sigma_2^P$-complete, where $\Sigma_2^P$ is the complexity class at the second level of the computational hierarchy. To the extent of our knowledge, the former question is open. Furthermore, despite the aforementioned hardness result, the question of identifying special classes of games that can be shown to be always equipped with an ESS is certainly worth pursuing.

\section{Concluding remarks}
\label{Conclusion}

Motivated by the success of online learning theory in a variety of applications, in this paper, we set out to evaluate how one of the most well-known online learning algorithms, namely, Hedge (also sometimes called $MW$, where $MW$ stands for {\em multiplicative weights}), performs in the quite general setting of population games of which a variety of problems in game theory and nonlinear optimization theory are special cases. In this vein, we demonstrated convergence to evolutionarily stable states in the general setting, and went deeper into the analysis of more particular problems.

Our results admit, from an evolutionary perspective, an interpretation that a theory of evolution based on multiplicative weights is, in a sense, {\em backward compatible} with a theory of evolution based on the notion of evolutionary stability in that, in principle, the evolutionary phenomena that can be explained by the notion of evolutionary stability is a subset of the phenomena that evolutionary theories based on the notion of multiplicative update rules can explain. But our main result leaves open whether the theories meet in more particular evolutionary settings and to what degree. Answering this latter question is the subject of our ongoing work. 

Along these lines, we point out that the question has already been taken up from the perspective of continuous-time replicator dynamics. In particular, \cite{Zeeman} shows that the notions of asymptotic stability under the replicator dynamic and that of evolutionary stability do not coincide by giving an example of a Nash equilibrium in a symmetric bimatrix game that is not evolutionarily stable but is, however, asymptotic stability under the replicator dynamic. However, asymptotic stability under the replicator dynamic and evolutionary stability are known to be equivalent concepts in doubly symmetric bimatrix games \citep{Hofbauer-Sigmund}. It is, therefore, interesting to ask whether these results extend to the setting of multiplicative weights.

Our results, in particular, with respect to selfish routing games, suggest that multiplicative weights may well serve as a primitive in the interest of load balancing Internet traffic to reduce communication delay between sources and destinations. The evolution rule we analyze corresponds to the distributed setting of {\em synchronous} updates. As part of our ongoing work we are investigating, in the spirit of models presented by \cite{Bertsekas-Tsitsiklis} whether and to what degree such a synchrony assumption can be relaxed in our convergence and stability analysis.

Finally, multiplicative weights have found applications in the solution of linear and semidefinite programs \citep{FreundSchapire2, Kale}. Linear programming is equivalent to the computation of minimax equilibria in zero sum games, which are are not evolutionarily stable strategies. It is an interesting question whether our analysis can be extended to capture convergence to weaker notions of evolutionary stability, such as, for example, {\em neutral stability,} that would bring our results closer to these earlier works (as minimax equilibria can be shown to be neutrally stable).

\section*{Acknowledgments}

I would like to thank Spyros Alexakis for significantly helping me with the linear stability analysis of selfish routing (in Section \ref{linear_analysis}) some seven years ago. I would also like to thank the anonymous reviewers of a related submission whose comments helped improve the quality of this paper. A (very) preliminary version of results related to this paper (based on simulation rather than rigorous theoretical analysis) appeared in a paper I coauthored with Rob Schapire and Jen Rexford in a workshop on tackling systems problems with machine learning techniques (SysML) in 2008. Part of the work in this paper was performed during my stint as a senior researcher at Deutsche Telekom Laboratories (T-Labs) in Berlin, Germany between 2008 and 2011. I was introduced to multiplicative weights updates during an internship at Akamai Technologies in the summer of 2000 where the idea was coined that they may well serve as an optimization primitive.

\bibliographystyle{abbrvnat}
\bibliography{real}

\end{document}